\let\orgdescriptionlabel\descriptionlabel
\renewcommand*{\descriptionlabel}[1]{%
	\let\orglabel\label
	\let\label\@gobble
	\phantomsection
	\edef\@currentlabel{#1}%
	\let\label\orglabel
	\orgdescriptionlabel{#1}%
}
\def\eps{\epsilon}
\def\defeq{\stackrel{\mathrm{def}}{=}}
\def\setof#1{\left\{#1  \right\}}
\def\sizeof#1{\left|#1  \right|}
\def\prob#1#2{\mathbf{Pr}_{#1}\left[ #2 \right]}
\def\pr#1{\mathbf{Pr}\left[ #1 \right]}
\def\expec#1#2{{\mathbf{E}}_{#1}\left[ #2 \right]}
\def\ex#1{{\mathbf{E}}\left[ #1 \right]}
\def\union{\cup}
\def\norm#1{\left\| #1 \right\|}
\def\floor#1{\left\lfloor #1 \right\rfloor}
\def\ceil#1{\left\lceil #1 \right\rceil}
\newcommand\ppi{\vec{\pi}}
\newcommand{\bX}{\mathbf{X}}
\newcommand{\bW}{\mathbf{W}}
\newcommand{\bY}{\mathbf{Y}}
\newcommand{\bchi}{\bm{\chi}}
\newcommand{\poi}[1]{\mathsf{{Poi}}\left(#1\right)}
\newcommand\DS{\textsc{DetScheduling}}
\newcommand\DP{\textsc{DPScheduling}}
\newcommand\RI{\textsc{Rounded}}
\newcommand\PTAS{\textsc{PoiScheduling}}
\newcommand\ILP{\textsc{ILPScheduling}}
\newcommand\IP{\textsc{ILP}}
\newcommand{\sd}{\geq_{\mathrm{sd}}}
\newcommand\MM{\mathsf{M}}
\newcommand{\kh}[1]{\left(#1\right)}
\newcommand\nn{\vec{n}}
\newcommand\mm{\vec{m}}
\newcommand\cc{\vec{c}}
\newtheorem{theorem}{Theorem}[section]
\newtheorem{corollary}[theorem]{Corollary}
\newtheorem{lemma}[theorem]{Lemma}
\newtheorem{observation}[theorem]{Observation}
\newtheorem{proposition}[theorem]{Proposition}
\newtheorem{fact}[theorem]{Fact}
\theoremstyle{definition}
\newtheorem{definition}[theorem]{Definition}
\newtheorem{remark}[theorem]{Remark}
\date{}
\title{An Efficient PTAS for Stochastic Load Balancing\\ with Poisson Jobs}
\newcommand*\samethanks[1][\value{footnote}]{\footnotemark[#1]}
\author{Anindya De\thanks{University of Pennsylvania.
  Email: {{\small{\texttt{\{anindyad,sanjeev,huanli,hesam\}@cis.upenn.edu.}}}}}
  \and Sanjeev Khanna\samethanks\and Huan Li\samethanks\and Hesam Nikpey\samethanks}
\begin{document}

\pagenumbering{gobble}
\maketitle

\begin{abstract}
  We give the first polynomial-time approximation scheme (PTAS) for the stochastic load balancing problem when the job sizes follow Poisson distributions. This improves upon the 2-approximation algorithm due to Goel and Indyk (FOCS'99). Moreover, our approximation scheme is an efficient PTAS that has a running time double exponential in $1/\eps$ but nearly-linear in $n$, where $n$ is the number of jobs and $\eps$ is the target error. Previously, a PTAS (not efficient) was only known for jobs that obey exponential distributions (Goel and Indyk, FOCS'99).

  Our algorithm relies on several probabilistic ingredients including some (seemingly) new results on scaling and the so-called ``focusing effect'' of maximum of Poisson random variables which might be of independent interest. 
\end{abstract}

\newpage

\pagenumbering{arabic}
\setcounter{page}{1}

\section{Introduction}\label{sec:intro}

We consider the following fundamental problem in scheduling theory: given $n$ jobs with job sizes $w_1 , \ldots, w_n \ge 0$, assign jobs to $m$ machines such that the maximum load of any machine (i.e., the total size of jobs assigned to the machine) is minimized.  In other words, we want to partition $[n]$ into sets $S_1, \ldots, S_m$  so as to minimize 
$
\max_{i \in [m]} \sum_{j \in S_i} w_j. 
$
Often referred to as \emph{load balancing} or \emph{makespan minimization}, this is one of the classical NP-complete problems and along with its many variants, has been extensively studied in both theoretical computer science and operations research. While the exact problem is hard, this problem admits a polynomial-time approximation scheme (PTAS)~\cite{hochbaum1987using} and later work improved this to an efficient PTAS~\cite{AlonAWY98,Jansen09,JansenKV16,KlausILP}. Several other variants of this problem have also been studied  -- this includes (i) the \emph{related machines case} where the machines can have different speeds~\cite{hochbaum1988polynomial}; (ii) the \emph{unrelated machines case} where the job size itself depends on the machine on which it is scheduled~\cite{lenstra1990approximation}; and (iii) the \emph{precedence constrained case} where there are precedence constraints on schedule of jobs~\cite{chudak1999approximation, chekuri2001efficient}.

All the aforementioned variants of this problem have the common feature that the job sizes are known in advance to the algorithm designer. However, in many situations, there might be \emph{uncertainty} in the job size. An obvious way to model this uncertainty is via the framework of stochastic optimization as follows --  we have $m$ machines and $n$ jobs where the size of the $i^{th}$ job is given by the random variable $\bW_i$. If we now assign the jobs to $m$ machines (given by $S_1, \ldots, S_m$), then the load of the $j^{th}$ machine is given by the random variable $\sum_{i \in S_j} \bW_i$. Similar to the case when the job sizes are \emph{deterministic},  in \emph{stochastic load balancing}, one would like to minimize the maximum load. However, since the maximum load (across machines) is itself a random variable -- arguably, the most natural objective is to then minimize 
the expected maximum load. In other words,  we seek to find a partition of $[n]$ into sets $S_1, \ldots, S_m$ so as to minimize $\mathbf{E} \bigg[ \max_{j \in [m]} \sum_{i \in S_j} \bW_i \bigg]. $

Throughout this paper, we assume that the random variables $\{\bW_i\}_{i=1}^n$ are independent -- that is the job sizes are independent of each other.  
Further, note that the algorithm designer is assumed to know the distribution of the random variables $\{\bW_i\}_{i=1}^n$ (though, of course, not the actual realizations of the loads). 

To our knowledge, Kleinberg, Rabani and Tardos~\cite{kleinberg2000allocating} were the first to consider this problem in the algorithms community. They gave an $O(1)$-factor approximation algorithm for this problem. Soon thereafter, Goel and Indyk~\cite{GoelI99} considered the problem of obtaining better approximation for special classes of random variables -- in particular, (i) if each $\{\bW_i\}$ is an exponential random variable, they obtain a PTAS (though not an efficient one); (ii) if each $\{\bW_i\}$ is a Poisson random variable, then they obtain a 2-approximation algorithm. In fact, this 2-approximation is obtained by considering the (deterministic) instance with loads $\{w_1, \ldots, w_n\}$ where $w_i = \mathbf{E}[\bW_i]$ and then applying Graham's heuristic~\cite{Graham66} on this instance.

Somewhat more complicated variants of this problem have also been considered -- as an example, Gupta \emph{et~al.}~\cite{gupta2018stochastic} considered the problem of stochastic load balancing on unrelated machines. Here,  the load of job $i$ on machine $j$ is given by a random variable $\bW_{i,j}$. For this variant, \cite{gupta2018stochastic} gave an $O(1)$-approximation algorithm (thus extending the guarantee of \cite{kleinberg2000allocating} to the case of unrelated machines). Similarly, Molinaro~\cite{DBLP:conf/soda/Molinaro19} considered the problem of minimizing the expected $\ell_p$ norm of the loads (the version we have can be seen as minimizing the expected $\ell_\infty$  norm of the loads). Despite all this impressive progress, the only case where we have a PTAS for stochastic load balancing is when all the loads $\{\bW_i\}$ are exponential random variables. As the main result of this paper, we obtain an efficient PTAS for stochastic load balancing when all the loads are  Poisson random variables. %The following is our main theorem. 

\begin{restatable}{theorem}{mainthm}
	\label{thm:main}
	There is an algorithm $\PTAS(n,m,\setof{\lambda_i}_{i=1}^{n},\eps)$ that
	given an instance of the load balancing problem with $n$ jobs and $m$
    machines where the size of the $i^{th}$ job
    is $\bW_i = \poi{\lambda_i}$ (i.e. a Poisson random variable with mean $\lambda_i$),
	and a parameter $0 < \eps < 1$,
	outputs a job assignment
	whose expected maximum load
	satisfies $L \leq (1 + \eps) L^*$,
	where $L^*$ is the expected maximum load of an optimal assignment.
	The algorithm runs in time $2^{2^{O(1/\eps^2)}} + O(n\log^2 n \log\log^2 n)$.
\end{restatable}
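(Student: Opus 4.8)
The plan is to exploit the one special feature of Poisson jobs that the general problem lacks: additivity. Since independent Poissons add, once jobs are assigned the load of a machine receiving jobs $S_j$ is \emph{exactly} $\poi{\Lambda_j}$ with $\Lambda_j=\sum_{i\in S_j}\lambda_i$, so the instance is equivalent to partitioning the rate multiset $\setof{\lambda_i}_{i=1}^n$ into $m$ groups with sums $\Lambda_1,\dots,\Lambda_m$ so as to minimize $\Phi(\Lambda_1,\dots,\Lambda_m):=\ex{\max_{j\in[m]}\poi{\Lambda_j}}$; in particular the objective sees the solution only through the multiset of group sums. First I would record the elementary lower bounds $L^*\ge\sum_i\lambda_i/m$ and $L^*\ge\lambdamax$ together with the Goel--Indyk upper bound ($L^*$ is within a factor $2$ of $\max(\sum_i\lambda_i/m,\lambdamax)$, via Graham's rule on the instance with deterministic sizes $\lambda_i$), which localize $L^*$ to within a constant factor; enumerating $O(1/\eps)$ geometrically spaced candidates then lets me assume a known target $L$ with $L^*\le L\le(1+\eps)L^*$.

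The technical core is a pair of structural facts about $\Phi$, which I would isolate as the scaling and ``focusing'' lemmas promised in the abstract. \emph{Scaling.} Up to a $(1\pm\eps)$ factor, $\Phi$ is insensitive to (a) the precise sums of groups whose contribution to the maximum is negligible, and (b) rounding the surviving data to bounded granularity: round every ``large'' rate $\lambda_i\ge\eps^{c}L$ (for a suitable constant $c$) to the nearest power of $(1+\eps)$, giving $O(\eps^{-1}\log(1/\eps))$ size classes, and treat the ``small'' rates as an infinitely divisible fluid poured onto machines in quanta of size $\eps^{c}L$; one shows the optimum changes by at most a $(1+\eps)$ factor. After this step the only data that matters is, per machine, a \emph{configuration}: a multiplicity vector over the $O(\eps^{-1}\log(1/\eps))$ large-rate classes plus a quantized amount of small-rate fluid, so there are only $O_\eps(1)$ relevant configurations. \emph{Focusing.} For group sums confined to one geometric band $[\Lambda_{\max}/(1+\eps),\Lambda_{\max}]$, $\ex{\max_j\poi{\Lambda_j}}$ is pinned, up to additive $\eps L$, to an explicit function of $\Lambda_{\max}$ and the \emph{number} of groups in the band — the maximum of many i.i.d.\ Poissons concentrates on essentially a single value — and more generally $\Phi$ of any rounded multiset is determined up to $(1\pm\eps)$ by the counts of groups in each of the $O(\eps^{-1}\log(1/\eps))$ geometric bands below $L$. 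Combining the two, the stochastic objective may be replaced by a \emph{deterministic}, efficiently evaluable surrogate $\widetilde\Phi$ that is a fixed function of the $O_\eps(1)$-dimensional vector of configuration counts.

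What remains is a deterministic scheduling problem of bounded type complexity: choose how many machines receive each of the $O_\eps(1)$ configurations so that the induced class multiplicities match the given class sizes, the small-rate fluid is exactly used up, exactly $m$ machines are used, and $\widetilde\Phi$ of the resulting band-count vector is minimized. This is an integer program in $O_\eps(1)$ variables and constraints — the only large parameters are the right-hand sides $m$ and the class sizes — which I would solve with exactly the toolkit for the efficient PTAS for makespan: a configuration dynamic program / fixed-dimension integer program (the routines \DS{}, \DP{}, \ILP{} in the paper's notation), costing $2^{2^{O(1/\eps^2)}}$ time on an instance of $O_\eps(1)$ types, independent of the number of jobs. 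All $n$-dependence is confined to pre- and post-processing: sort the $\lambda_i$, split into small and large, bucket the large ones into the $O_\eps(1)$ classes, sum the small ones — $O(n\log^2 n\log\log^2 n)$ time — and finally convert the chosen configuration counts into an explicit assignment by one linear scan. Composing the $(1\pm\eps)$ errors from scaling, focusing, the $O(1/\eps)$ guesses for $L$, and the PTAS subroutine, and rescaling $\eps$ by a constant, gives the stated $(1+\eps)$-approximation within the stated running time.

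I expect the scaling step (b) and the focusing lemma to be the real obstacle — the place that needs new probabilistic arguments rather than bookkeeping. One must bound $\ex{\max_j\poi{\Lambda_j}}$ with genuinely multiplicative $(1\pm\eps)$ accuracy simultaneously across two very different regimes: when a few groups are enormous (the maximum is essentially the largest sum, so every rounding error propagates at full relative weight and nothing may be lost beyond $\eps$), and when many groups are moderate (the $o(\Lambda_{\max})$ correction is governed by the Poisson focusing effect, for which one needs not merely qualitative concentration but a sharp, closed-form, \emph{monotone} expression so that $\widetilde\Phi$ is tame enough to be optimized by the standard makespan DP/ILP). Producing a single surrogate $\widetilde\Phi$ that is at once $(1+\eps)$-faithful, efficiently computable, and structurally compatible with the scheduling machinery is the crux; everything else is the routine PTAS scaffolding adapted from the deterministic case.
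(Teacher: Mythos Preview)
Your scaffolding---additivity of Poissons, rounding the rates to $O_\eps(1)$ classes, a concentration/focusing statement, then a configuration program---is the same as the paper's. But there is a genuine gap at the optimization step, and it is exactly the step you treat as ``routine PTAS scaffolding''. After rounding, $\Phi$ is indeed a well-defined function of the band-count (equivalently, configuration-count) vector; the problem is that this function is \emph{not linear} in those counts. Lenstra--Kannan fixed-dimension ILP only handles linear objectives, so ``an integer program in $O_\eps(1)$ variables'' does not solve $\min\widetilde\Phi$ as stated. The obvious fallback---a DP over band-count vectors---has $m^{O_\eps(1)}$ states, which yields only a PTAS with running time $n^{f(1/\eps)}$, not the EPTAS you claim. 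Nothing in your proposal explains how to get from a nonlinear $\widetilde\Phi$ of the count vector to something the Kannan solver can handle in time independent of $m$.

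The paper's resolution of this is not a single surrogate $\widetilde\Phi$ but a five-way case split on the regime of $\mu$ versus $\log m^{(1)}$ (and on $m^{(1)}$ itself), and the split is driven as much by the algorithm as by the probability. In two regimes ($\mu$ very large, or $\mu$ in a certain intermediate band with $m^{(1)}$ large) the expected maximum is within $1\pm O(\delta)$ of the deterministic makespan, so no rounding is done and the deterministic EPTAS is invoked directly. In the remaining large-$m^{(1)}$ regimes the focusing effect pins the maximum near a transition point whose form is \emph{different in each regime}; crucially, each form makes the residual optimization \emph{linear} in the configuration counts---either directly (maximize $\sum_j\log\Pr[\poi{\mu_j}<t]$ for a fixed $t$), or via binary search over a scalar $t$ with a linear feasibility ILP at each step. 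Only when all of these fail is $m^{(1)}\le 2^{2^{O(1/\eps)}}$, and then the $m^{O_\eps(1)}$-state DP is affordable. Your uniform scaling lemma is also not established in the paper: the dilation inequality $\ex{\max_j\poi{(1+\delta)\mu_j}}\le(1+O(\delta))\ex{\max_j\poi{\mu_j}}$ is proved only regime-by-regime (Lemmas~\ref{lem:scaling2},~\ref{lem:concentration4},~\ref{lem:concentration5}), and the paper explicitly says it is unclear whether it holds in full generality. So the ``single monotone closed-form $\widetilde\Phi$'' you aim for is precisely what the paper does \emph{not} produce; the case analysis is the missing idea, both for the probability and for linearizing the ILP.
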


Theorem~\ref{thm:main} is the first PTAS for stochastic load balancing with Poisson jobs. Prior to this result, the best known approximation algorithm for this setting was due to Goel and Indyk~\cite{GoelI99} (mentioned earlier) and had an approximation factor of $2$. 
In fact,  our PTAS is also an \emph{efficient PTAS} -- i.e., the running time remains polynomial in $n$ even for some $\epsilon = o(1)$. In contrast, the PTAS from \cite{GoelI99} for exponential random variables was \emph{not}   an efficient PTAS. Finally, we point out that our running time is doubly exponential in the error parameter $\epsilon$. While this can be potentially improved to a singly exponential dependence in $\epsilon$, it is unlikely to be improved further -- in particular, \cite{chen2014optimality} showed that under the ETH, any PTAS for even the deterministic load balancing problem must have a singly exponential dependence on $\epsilon$\footnote{One can reduce an instance of deterministic load balancing
  to one of Poisson load balancing by scaling up all job sizes such that they all become at least $\omega(\eps^{-2} \log m)$.
By Chernoff bounds and union bound this reduction preserves $(1 + O(\eps))$-approximation.}. 

\subsection{Our techniques}\label{sec:techniques}
At a high level, to design an algorithm for stochastic load balancing, we must come up with an \emph{algorithmically tractable proxy} for the objective function $
\mathbf{E} [ \max_{j \in [m]} \sum_{i \in S_j} \bW_i ]. 
$ However, the expected maxima of random variables (and more generally stochastic processes) can be notoriously difficult to reason about. Indeed, we point out that in the last fifty years, significant effort in probability theory has been devoted towards understanding the maximum of even simple families of random variables such as Gaussians~\cite{darling1983supremum, talagrand1996majorizing}. Despite this challenge, the hope is that by exploiting structural properties of Poisson random variables along with appropriate algorithmic primitives, we will be able to design an efficient PTAS for stochastic load balancing for Poisson jobs.

The starting points of our algorithm are two natural heuristics which have previously been analyzed in the context of stochastic load balancing. 
\begin{enumerate}
	\item The first heuristic is to construct an instance of (deterministic) load balancing where the size of the $i^{th}$ job is $w_i = \mathbf{E}[\bW_i]$. One can then  apply the PTAS (say from \cite{AlonAWY98}) to get an allocation of the $n$ jobs into $m$ machines. The obvious pitfall here is that the actual job size is a Poisson random variable which may typically be very far from its mean. In other words, this heuristic has a good guarantee provided  
	\[
      \mathbf{E}[ \max_{j=1}^m \mathsf{Poi}(\mu_j)] \approx \max_{j=1}^m [\mathbf{E}\  [\mathsf{Poi}(\mu_j)]],
	\]
    where $\mu_j$ is the expected load size of the $j^{th}$ machine\footnote{Note that $\poi{\lambda_1} + \poi{\lambda_2} = \poi{\lambda_1 + \lambda_2}$ if $\poi{\lambda_1}$ and $\poi{\lambda_2}$ are independent, so every machine's load is still a Poisson random variable.}. Of course, the above relation may be far from true and indeed, we want to point out that while the left hand side $\mathbf{E}[ \max_{j=1}^m \mathsf{Poi}(\mu_j)]$ is just a function of $\mu_1, \ldots, \mu_j$, it is far from being a linear function of $\mu_1, \ldots, \mu_j$. {It is easy to create an instance where the optimum obtained by replacing each Poisson load by its expectation is a constant factor away from the true optimum, a detailed proof is provided in Appendix~\ref{app:counterexample}.} 

	Despite this limitation, this heuristic is in fact of both theoretical and practical value. In particular, from a theoretical aspect, recall that $\mathsf{Poi}(\lambda)$ concentrates around $\lambda$ (with standard deviation $\sqrt{\lambda}$). This can be leveraged to show that if the optimum allocation must necessarily have at least one machine with a (sufficiently) large load, then the allocation for the deterministic load balancing problem provides a near optimal allocation for the stochastic version as well. 
	
  \item The second heuristic is a \emph{greedy algorithm} -- namely, we  first assign an arbitrary order to the jobs and \emph{iteratively assign each job to the machine with the least current expected load.} This is the same as the Graham's rule~\cite{Graham66}, and is precisely how the authors of~\cite{GoelI99} obtained a 2-approximation for load balancing Poisson jobs. The underlying rationale for this rule  is the following cruical fact about Poisson random variables. Suppose $\mu_1 \ge \mu_2 \ge \mu_3 \ge \mu_4$ such that $\mu_1+ \mu_4 = \mu_2 + \mu_3$.  Then, 
	$
  \mathbf{E} [ \max\{ \mathsf{Poi}(\mu_1), \poi{\mu_4}\}] \ge \mathbf{E} [ \max\{ \mathsf{Poi}(\mu_2), \mathsf{Poi}(\mu_3)\}]. 
	$
	This fact can in fact be extended to prove that if there is an allocation such that the expected load is the same across all machines, then that is an optimum allocation. Of course, such an allocation might not exist -- however, heuristically we might hope that if all the job sizes are small, then we
	can approximately equalize the expected load on the machines and that such an allocation might have a near-optimal expected maximum load. 
\end{enumerate}
It turns out that these heuristics (even when rigorously analyzed) are not sufficient to provide a PTAS for stochastic load balancing in all regimes of job sizes and (number of) machines. To describe the other ingredients,
let us define $\mu^{(0)}$ to be the total expected job size divided by $m$. The first crucial observation is that if there is a job size $\lambda$ which is more than $\mu$, then in the optimal allocation, such a job is assigned its own separate machine (Observation~\ref{obseq}). This observation can be iteratively applied  so that we are now left with job sizes $\{\lambda_i\}_{i=1}^{n'}$ and $m'$ machines such that 
\[
\max_{i=1}^{n'} \{\lambda_i\} \le \frac{\sum_{i=1}^{n'} \lambda_i}{m'} \coloneqq  \mu . 
\]
In other words, no job is larger than the average expected load across the $m'$ machines, i.e., $\mu$. 
With this simplification, we discuss another \emph{familiar trick} in the context of allocation problems  -- namely we create a \emph{rounded instance} 
such that each job size (now call it $\{\lambda'_i\}_{i=1}^{n'}$) is now in the interval $[\epsilon\mu, 2\mu]$. The rounding procedure we apply is identical to the rounding procedure used by \cite{JansenKV16}
in the context of deterministic load balancing. A key  property is that the number of different (expected) job sizes in this modified instance is a constant -- i.e., only  dependent on the target error parameter $\epsilon$.  

This rounding step highlights a key technical challenge our algorithm faces -- namely, it is possible that by  ``multiplicatively dilating'' the job sizes, the expected maximum load of the machines can change significantly. In other words, suppose $\mu_1, \ldots, \mu_m \ge 0$, then is it the case that for any $0<\delta<1$, 
\begin{equation}~\label{eq:dilation}
  \mathbf{E}[ \max_{j=1}^m \mathsf{Poi}((1+\delta)\mu_j)]  \approx (1+ O(\delta)) \mathbf{E}[ \max_{j=1}^m \mathsf{Poi}(\mu_j)] \  ?
\end{equation}
While intuitively this looks reasonable, it is not clear if this is true in full generality.  
Fortunately for us, we obtain the following dichotomy: 
\begin{enumerate}
  \item When $\mu$ is very large (this corresponds to the~\ref{case:1} in the analysis), we are able to show that the first heuristic above provides a PTAS -- in other words, just substituting each stochastic job $\mathbf{W}_i$ with a deterministic job $w_i$ such that $w_i = \ex{\mathbf{W}_i}$ and then applying the PTAS for the deterministic case~\cite{AlonAWY98,Jansen09,JansenKV16} gives a PTAS for the stochastic case. The underlying reason is that in  this case,  the expected maximum is essentially the same as the heaviest expected load across the $m$ machines.

    The same algorithm also works if $\mu$ is in a ``certain intermediate range'' and $m$ is sufficiently large (this corresponds to~\ref{case:3} in the analysis). In fact, in this case, even the greedy heuristic described earlier provides a PTAS. The underlying reason why the deterministic PTAS works is the following: in this regime, the expected maximum  remains essentially the same even if all the loads were to go up by a factor of $2$.

\item Outside of the above two cases, our heuristics (greedy or deterministic scheduling) fail to provably work. However, in these cases, we are able to prove \eqref{eq:dilation}. In other words, we are able to show that dilating or contracting each job size by a factor of $(1+\delta)$ affects the expected maximum by only a factor of $1 \pm O(\delta)$. Thus, we can apply the rounding procedure from \cite{AlonAWY98} to reduce to the case where the number of different job sizes is a constant. In fact, this is enough to obtain a PTAS for the stochastic load balancing problem though not an  efficient PTAS.

Finally, to get an efficient PTAS, we leverage a third property of the ``maximum of Poisson random variables" -- namely, the so-called ``focusing effect''~\cite{Anderson70,AndersonCH97,BriggsSP09}. Roughly speaking, it says that suppose we have $m$ independent Poisson random variables (call them $\bX_1, \ldots, \bX_m$), each with mean $\mu$, then there is an integer $I$ such that $ (\max_{i=1}^m \bX_i) \in [I, I+1]$ with probability $1-o(1)$ as $m \rightarrow \infty$. We extend this to (certain instances of) independent but not identically distributed Poisson random variables. Essentially such a ``focusing effect", whenever it holds, allows us to express the expected maximum of the loads of $m$ machines as a linear function of the allocation and then employ an integer linear program (ILP) to find the optimal allocation. 

To explain how an ILP comes into the picture, first of all, we can assume that $m$ (i.e., the number of machines) is sufficiently large in terms of the target error parameter $\delta$. If this is not the case, then we can simply employ dynamic programming to find a good allocation (it is now an  efficient PTAS because $m$ is a constant). Once $m$ is large, we show the following: 
~\\ 
(a) either there is a transition point $t= t(\mu_1, \ldots, \mu_m)$ such that $\max\{\mathsf{Poi}(\mu_1), \ldots, \mathsf{Poi}(\mu_m)\}$ sharply concentrates within $1\pm O(\delta
)$ of the transition point. In this case, we do a binary search to iterate over (potential) transition points  and use an ILP to  find the smallest $t^\ast$,  for which there is an allocation with loads $\mu_1, \ldots, \mu_m$ such  that $t^\ast = t(\mu_1, \ldots,  \mu_m)$. Observe  that in this case,  the smallest  such  $t^\ast$  will minimize the expected maximum load. 
~\\
(b) Otherwise, there is a transition point $t = t(\mu,m)$ such  that $ \max\{\mathsf{Poi}(\mu_1), \ldots, \mathsf{Poi}(\mu_m)\}$ sharply concentrates  in the set $[t-1,t]$. Observe that $t$ only depends on $\mu$ and $m$, and hence can be easily computed. With the knowledge of $t$,  we now use an ILP to find an assignment $\mu_1, \ldots, \mu_m$ which  maximizes the probability that $\max\{\mathsf{Poi}(\mu_1), \ldots, \mathsf{Poi}(\mu_m)\}=t-1$ and thus minimizes the expected maximum load.

\end{enumerate}

\subsection{Organization}

In Section~\ref{sec:pre},
we formally state the problem,
establish some notations, and describe some
properties of Poisson random variables that we will utilize.
In Section~\ref{sec:overview}
we present and prove our concentration and scaling results for the
maximum of Poisson random variables.
In Section~\ref{sec:ptas} we present our efficient polynomial-time approximation scheme
and analyze its performance.
%In Section~\ref{sec:proof} we prove two lemmas (Lemmas~\ref{lem:DP} and~\ref{lem:bigm}) concerning the
%correctness of our scheme that imply the correctness of our scheme.

\section{Preliminaries}\label{sec:pre}

We use $\poi{\lambda}$ to denote a Poisson random variable with mean $\lambda$.
Recall that $\pr{\poi{\lambda} = k} = e^{-\lambda} \cdot \frac{\lambda^k}{k!}$
for $k\in\mathbb{N}$. In the stochastic load balancing problem considered in this paper,
we are given $n$ jobs and $m$ machines
where the job sizes are independent Poisson random variables $\poi{\lambda_1},\poi{\lambda_2},\ldots,\poi{\lambda_n}$.
We will call $\lambda_i$ the size of job $i$.
Our goal is to assign the jobs to the machines so that the expected maximum load
\begin{align}\label{eq:L}
  L \defeq \ex{\max_{j=1}^{m} \sum_{i\in S_j} \poi{\lambda_i}}
\end{align}
is minimized, where $S_j$ is the set of jobs assigned to machine $j$.

It is well known that the sum of two independent Poisson random variables also follows a Poisson distribution,
i.e. $\poi{\lambda_1} + \poi{\lambda_2} = \poi{\lambda_1 + \lambda_2}$.
Therefore if we let $\mu_j = \sum_{i\in S_j} \lambda_i$,
we can write~(\ref{eq:L}) as $L = \ex{\max_{j=1}^{m} \poi{\mu_j}}$.
We will call $\mu_j$ the load of machine $j$.

Henceforth, our analysis of Poisson random variables will mainly serve the purpose
of characterizing the expected maximum load,
and therefore we will use $\mu$ and $\mu_1,\mu_2,\ldots,\mu_m$ to denote
the means when stating useful claims about Poisson distributions.

\begin{definition}
  We write $\MM(m,\mu)$ to denote
  the random variable whose value
  is the maximum of $m$ i.i.d. $\poi{\mu}$.
\end{definition}

In~\cite{GoelI99} the authors proved that Poisson distributions are log-concave:
\begin{proposition}[\cite{GoelI99}]
  For any $t \geq 0$, the function
  \begin{align}
    f_t(\mu) = \log \pr{\poi{\mu} \leq t}
  \end{align}
  is decreasing and concave with respect to $\mu$.
  \label{prop:concave}
\end{proposition}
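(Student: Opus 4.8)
The plan is to prove the two claims — monotonicity and concavity of $f_t(\mu) = \log \pr{\poi{\mu} \le t}$ — by differentiating under the sum, which is justified since $\pr{\poi{\mu} \le t} = \sum_{k=0}^{\lfloor t\rfloor} e^{-\mu}\mu^k/k!$ is a finite sum of smooth functions of $\mu$ (for integer $t$; for non-integer $t$ we may replace $t$ by $\lfloor t\rfloor$ throughout). Write $F(\mu) \defeq \pr{\poi{\mu} \le t}$. A clean starting observation is the classical identity $\frac{\dr}{\dr\mu}\pr{\poi{\mu}\le t} = -\pr{\poi{\mu} = \lfloor t\rfloor}$, which follows because the $e^{-\mu}\mu^k/k!$ terms telescope under differentiation: $\frac{\dr}{\dr\mu}\big(e^{-\mu}\mu^k/k!\big) = e^{-\mu}\mu^{k-1}/(k-1)! - e^{-\mu}\mu^k/k!$ for $k \ge 1$, and $-e^{-\mu}$ for $k=0$. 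Summing from $k=0$ to $\lfloor t \rfloor$ leaves only $-e^{-\mu}\mu^{\lfloor t\rfloor}/\lfloor t\rfloor!$. Hence $F'(\mu) = -\pr{\poi{\mu}=\lfloor t\rfloor} < 0$, so $F$ is decreasing, and therefore $f_t = \log F$ is decreasing as well. (If $t < 0$ then $F \equiv 0$ and the statement is vacuous; if $0 \le t < 1$ then $F(\mu) = e^{-\mu}$ and $f_t(\mu) = -\mu$ is trivially decreasing and concave.)

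For concavity of $f_t$, I would compute $f_t'' = (F''F - (F')^2)/F^2$ and show the numerator is $\le 0$, i.e. $F''F \le (F')^2$. Using the identity above, $F'(\mu) = -p_{\lfloor t\rfloor}(\mu)$ where $p_k(\mu) \defeq e^{-\mu}\mu^k/k!$ is the Poisson pmf; differentiating once more, $F''(\mu) = -p_{\lfloor t\rfloor}'(\mu) = -\big(p_{\lfloor t\rfloor - 1}(\mu) - p_{\lfloor t\rfloor}(\mu)\big) = p_{\lfloor t\rfloor}(\mu) - p_{\lfloor t\rfloor-1}(\mu)$ (interpreting $p_{-1} \equiv 0$). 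So the desired inequality becomes
\begin{equation}\label{eq:concavity-goal}
  \big(p_{\lfloor t\rfloor}(\mu) - p_{\lfloor t\rfloor-1}(\mu)\big)\, F(\mu) \;\le\; p_{\lfloor t\rfloor}(\mu)^2.
\end{equation}
Let $\ell = \lfloor t \rfloor$. Since $F(\mu) = \sum_{k=0}^{\ell} p_k(\mu)$, the left side of \eqref{eq:concavity-goal} expands to $p_\ell \sum_{k=0}^\ell p_k - p_{\ell-1}\sum_{k=0}^\ell p_k$, and subtracting the right side $p_\ell^2$ we must show $p_\ell \sum_{k=0}^{\ell-1} p_k \le p_{\ell-1}\sum_{k=0}^{\ell} p_k$, i.e. $p_\ell \sum_{k=0}^{\ell-1}p_k \le p_{\ell-1}\sum_{k=0}^{\ell-1}p_k + p_{\ell-1}p_\ell$, i.e. $(p_\ell - p_{\ell-1})\sum_{k=0}^{\ell-1} p_k \le p_{\ell-1}p_\ell$. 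Since $p_k/p_{k-1} = \mu/k$ is decreasing in $k$, the ratios $p_k/p_{\ell-1}$ for $k \le \ell-1$ are controlled, and I expect the cleanest route is the log-concavity of the Poisson pmf itself: $p_k^2 \ge p_{k-1}p_{k+1}$ for all $k$, equivalently $p_\ell/p_{\ell-1} \le p_k/p_{k-1}$ for $k \le \ell$. This gives $p_\ell p_{k-1} \le p_{\ell-1} p_k$ for every $k \le \ell$; summing $k$ from $0$ to $\ell-1$ (with $p_{-1}=0$) yields exactly $p_\ell \sum_{k=0}^{\ell-1} p_{k-1} \le p_{\ell-1}\sum_{k=0}^{\ell-1} p_k$. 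Hmm — the left sum is $\sum_{j=-1}^{\ell-2} p_j = \sum_{k=0}^{\ell-2}p_k$, so this shows $p_\ell \sum_{k=0}^{\ell-2} p_k \le p_{\ell-1}\sum_{k=0}^{\ell-1}p_k$, which is a bit stronger than needed in the sum but one index off; a careful bookkeeping of which index range to sum over will close the gap.

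The main obstacle I anticipate is precisely this index-shifting in the final inequality: the telescoping derivative identities are routine, but assembling the pmf-level inequalities $p_\ell p_{k-1} \le p_{\ell-1}p_k$ into exactly \eqref{eq:concavity-goal} requires being careful about the boundary terms at $k=0$ and $k=\ell$ and about the case $\ell = 0$ separately (where $F''= p_0 - 0 = p_0$, $F = p_0$, $F' = -p_0$, and \eqref{eq:concavity-goal} reads $p_0^2 \le p_0^2$, an equality — consistent with $f_0(\mu) = -\mu$ being affine). An alternative, possibly slicker, approach avoiding all index juggling: note $f_t'(\mu) = F'/F = -p_\ell(\mu)/\sum_{k=0}^\ell p_k(\mu) = -1/\big(\sum_{k=0}^\ell p_k(\mu)/p_\ell(\mu)\big) = -1/\big(\sum_{j=0}^{\ell} \prod_{i=0}^{j-1}\frac{\ell-i}{\mu}\big)$ after re-indexing $j = \ell-k$ and using $p_{\ell-1}/p_\ell = \ell/\mu$ etc. Each term $\prod_{i=0}^{j-1}\frac{\ell-i}{\mu}$ is a decreasing function of $\mu$, so their sum $G(\mu)$ is positive and decreasing, hence $f_t'(\mu) = -1/G(\mu)$ is decreasing in $\mu$ (smaller $G$ for larger $\mu$ means more negative $-1/G$) — wait, that gives $f_t'$ decreasing, which is exactly concavity. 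This second route reduces the whole concavity claim to the elementary fact that a sum of nonincreasing positive functions is nonincreasing, and I would likely present that version as the main proof, relegating the telescoping computation to the monotonicity part.
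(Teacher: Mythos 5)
Your proposal is correct, but note that the paper itself offers no proof of this proposition --- it is imported verbatim from Goel--Indyk \cite{GoelI99} --- so there is no in-paper argument to compare against; what you have written is a valid self-contained proof. Both of your routes work. The only loose end is the index range in the first route, and it closes exactly as you suspected: sum the log-concavity inequalities $p_\ell p_{k-1} \le p_{\ell-1} p_k$ over $k = 0,\ldots,\ell$ (not $0,\ldots,\ell-1$); since $\sum_{k=0}^{\ell} p_{k-1} = \sum_{j=0}^{\ell-1} p_j$ with the convention $p_{-1}\equiv 0$, this yields precisely $p_\ell \sum_{k=0}^{\ell-1} p_k \le p_{\ell-1} \sum_{k=0}^{\ell} p_k$, which is equivalent to $F''F \le (F')^2$. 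Your second route is cleaner and already complete: writing $f_t'(\mu) = -1/G(\mu)$ with $G(\mu) = \sum_{j=0}^{\ell} \ell(\ell-1)\cdots(\ell-j+1)/\mu^j$ a positive, nonincreasing function of $\mu$ immediately gives that $f_t'$ is nonincreasing, hence $f_t$ is concave; this is essentially the shape of the original Goel--Indyk argument (the logarithmic derivative $-\pr{\poi{\mu}=\ell}/\pr{\poi{\mu}\le \ell}$ is shown to be decreasing in $\mu$), and I would present it as the main proof as you propose. Your handling of the edge cases ($t<0$, $0\le t<1$, and $\ell=0$ where concavity degenerates to affineness) is also correct.
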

For any random variables $\bX$ and $\bY$ taking values on $\mathbb{N}$,
we say $\bX$ {\em stochastically dominates} $\bY$, denoted by $\bX\sd \bY$,
if $\pr{\bX\geq k} \geq \pr{\bY \geq k}$ holds for every $k\in\mathbb{N}$.
Note that for independent $\bX,\bY$ we have
$\pr{\max\setof{\bX,\bY} \geq k + 1} = 1 - \pr{\bX \leq k} \pr{\bY \leq k}$.
Now by Proposition~\ref{prop:concave}
we have the following:% proposition:
\begin{proposition}[Lemma~2.1 of~\cite{GoelI99}]
  Given $0\leq \mu_1 \leq \mu_1' \leq \mu_2' \leq \mu_2$
  such that $\mu_1 + \mu_2 = \mu_1' + \mu_2'$,
  it holds that
  $\max\setof{\poi{\mu_1},\poi{\mu_2}}\sd \max\setof{\poi{\mu_1'},\poi{\mu_2'}}$.
  \label{prop:concanve2}
\end{proposition}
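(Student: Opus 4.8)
The plan is to reduce the stochastic-dominance assertion to a single pointwise inequality between CDF values and then feed it into the concavity of $f_t$ supplied by Proposition~\ref{prop:concave}. By definition, the claim $\max\setof{\poi{\mu_1},\poi{\mu_2}} \sd \max\setof{\poi{\mu_1'},\poi{\mu_2'}}$ means $\pr{\max\setof{\poi{\mu_1},\poi{\mu_2}} \geq k} \geq \pr{\max\setof{\poi{\mu_1'},\poi{\mu_2'}} \geq k}$ for every $k\in\mathbb{N}$. The case $k=0$ is trivial (both probabilities equal $1$), and for $k=t+1$ with $t\geq 0$ the independence identity already recorded above, $\pr{\max\setof{\bX,\bY} \geq t+1} = 1 - \pr{\bX\leq t}\pr{\bY\leq t}$, turns the desired inequality into the equivalent statement
\[
  \pr{\poi{\mu_1} \leq t}\,\pr{\poi{\mu_2} \leq t} \;\leq\; \pr{\poi{\mu_1'} \leq t}\,\pr{\poi{\mu_2'} \leq t}
  \qquad \text{for every } t \in \mathbb{N}.
\]
Since each of these probabilities is strictly positive (bounded below by $e^{-\mu}$), I can take logarithms, so it suffices to prove $f_t(\mu_1) + f_t(\mu_2) \leq f_t(\mu_1') + f_t(\mu_2')$ for every $t$.

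For this I would invoke the elementary two-variable Schur-concavity of any concave function $\phi$: whenever $a \leq a' \leq b' \leq b$ with $a+b = a'+b'$, one has $\phi(a) + \phi(b) \leq \phi(a') + \phi(b')$. To see this, write $a' = \theta a + (1-\theta) b$ with $\theta = (b-a')/(b-a) \in [\tfrac12,1]$ (the bound $\theta \geq \tfrac12$ being forced by $a' \leq b'$), so that also $b' = (1-\theta) a + \theta b$; concavity gives $\phi(a') \geq \theta\phi(a) + (1-\theta)\phi(b)$ and $\phi(b') \geq (1-\theta)\phi(a) + \theta\phi(b)$, and adding the two yields the claim. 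Applying this with $\phi = f_t$, which is concave by Proposition~\ref{prop:concave}, and with $(a,a',b',b) = (\mu_1,\mu_1',\mu_2',\mu_2)$ — whose hypotheses are precisely $\mu_1 \leq \mu_1' \leq \mu_2' \leq \mu_2$ and $\mu_1 + \mu_2 = \mu_1' + \mu_2'$ — gives $f_t(\mu_1) + f_t(\mu_2) \leq f_t(\mu_1') + f_t(\mu_2')$, completing the argument.

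I do not expect a real obstacle here; the whole content is the interplay of Proposition~\ref{prop:concave} with the two-point Schur inequality above. The only places needing mild care are: (i) correctly converting the ``$\geq k$'' form of $\sd$ into the ``$\leq t$'' (CDF) form via complementation and the product formula for independent maxima; and (ii) checking that the degenerate endpoints, where $\mu_i = 0$ makes $\poi{0}$ the constant $0$ and $f_t(0) = 0$, cause no trouble, and that $\pr{\poi{\mu}\leq t} > 0$ for all integers $t\geq 0$ so the logarithms are legitimate. Note that only the concavity half of Proposition~\ref{prop:concave} is used; the monotonicity of $f_t$ is not needed.
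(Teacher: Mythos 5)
Your proposal is correct and follows exactly the route the paper intends: the paper states this proposition without a written proof, noting only that it follows from the product identity $\pr{\max\setof{\bX,\bY}\geq k+1}=1-\pr{\bX\leq k}\pr{\bY\leq k}$ together with the concavity of $f_t$ from Proposition~\ref{prop:concave}, and your write-up simply fills in that reduction plus the standard two-point Schur-concavity step. No gaps.
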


Poisson random variables satisfy exponential tail bounds:

\begin{proposition}[Theorem~4.4, Theorem~4.5 of~\cite{Mitzenmacher17}]\label{prop:chernoff}
  Let $\bX$ be a Poisson random variable with mean $\mu$.
  For $0 < \delta < 1$ we have
  \begin{align}
    & \pr{\bX \geq (1 + \delta)\mu} \leq
    e^{-\mu \delta^2/3}, \\
    & \pr{\bX \leq (1-\delta)\mu} \leq e^{-\mu \delta^2 /2}.
  \end{align}
\end{proposition}

In our analysis we will need to use Stirling's approximation to deal with factorials:

\begin{proposition}[Stirling's approximation~\cite{Robbins55}]
  \label{prop:stirling}
  For any integer $n > 0$,
  \begin{align}
    e \kh{\frac{n}{e}}^n \leq
    \sqrt{2\pi n} \kh{ \frac{n}{e} }^n e^{1/(12n + 1)}
    \leq
    n!
    \leq
    \sqrt{2\pi n} \kh{\frac{n}{e}}^n e^{1/12n} \leq e n \kh{\frac{n}{e}}^{n}.
  \end{align}
\end{proposition}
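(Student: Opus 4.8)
The statement is an effective form of Stirling's formula, and the plan is to reproduce the classical argument of Robbins. I would set
\[
  d_n \defeq \log n! - \kh{n + \tfrac12}\log n + n,
\]
so that equivalently $n! = \sqrt{n}\,(n/e)^n e^{d_n}$. The two \emph{inner} inequalities of the proposition are then precisely $e^{C}\, e^{1/(12n+1)} \le e^{d_n} \le e^{C}\, e^{1/(12n)}$ for a constant $C$ with $e^{C} = \sqrt{2\pi}$, so the task splits into (i) sandwiching $d_n$ strictly between $C + \tfrac{1}{12n+1}$ and $C + \tfrac{1}{12n}$, (ii) identifying $e^{C} = \sqrt{2\pi}$, and (iii) the two \emph{outer} bounds $e(n/e)^n \le n! \le en(n/e)^n$, which I would dispatch by a one-line induction from $(1+1/n)^n < e < (1+1/n)^{n+1}$.

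For (i), I would start from the telescoping increment
\[
  d_n - d_{n+1} = \kh{n + \tfrac12}\log\frac{n+1}{n} - 1 .
\]
Substituting $u = 1/(2n+1)$ and using $\log\frac{1+u}{1-u} = 2\sum_{k\ge0} u^{2k+1}/(2k+1)$ turns this into
\[
  d_n - d_{n+1} = \sum_{k\ge1}\frac{u^{2k}}{2k+1} = \frac{u^{2}}{3} + \frac{u^{4}}{5} + \frac{u^{6}}{7} + \cdots ,
\]
which is positive, so $(d_n)$ is strictly decreasing. Bounding the tail above by a geometric series gives $d_n - d_{n+1} < \tfrac13\cdot\tfrac{u^2}{1-u^2} = \tfrac{1}{12n(n+1)} = \tfrac{1}{12n} - \tfrac{1}{12(n+1)}$, so $d_n - \tfrac{1}{12n}$ is increasing; keeping only the leading term gives $d_n - d_{n+1} > \tfrac{u^2}{3} = \tfrac{1}{3(2n+1)^2}$, and the elementary comparison $(12n+1)(12n+13) \ge 36(2n+1)^2$ (equivalently $24n - 23 \ge 0$) shows $\tfrac{1}{3(2n+1)^2} \ge \tfrac{1}{12n+1} - \tfrac{1}{12(n+1)+1}$, so $d_n - \tfrac{1}{12n+1}$ is decreasing. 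These two monotone sequences differ by $\tfrac{1}{12n} - \tfrac{1}{12n+1} \to 0$, hence share a common limit $C$, and monotonicity then forces $C + \tfrac{1}{12n+1} < d_n < C + \tfrac{1}{12n}$ for all $n\ge1$.

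For (ii), I would plug $n! = \sqrt{n}(n/e)^n e^{d_n}$ and $(2n)! = \sqrt{2n}(2n/e)^{2n} e^{d_{2n}}$ into the Wallis limit $\tfrac{4^n (n!)^2}{(2n)!\,\sqrt{n}} \to \sqrt{\pi}$ (cited, or derived in two lines from $\int_0^{\pi/2}\sin^{2n}\theta\,\dr\theta$). The left-hand side collapses algebraically to $\tfrac{1}{\sqrt2}\,e^{2 d_n - d_{2n}}$, and since $2 d_n - d_{2n} \to 2C - C = C$ this yields $\tfrac{1}{\sqrt2}\,e^{C} = \sqrt{\pi}$, i.e. $e^{C} = \sqrt{2\pi}$. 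Substituting back into the sandwich from (i) gives the inner inequalities, and adding the induction of (iii) finishes the proof.

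The one step that needs care is the decreasing-direction estimate in (i): one must not over-relax $d_n - d_{n+1}$, since the bound $\tfrac{1}{3(2n+1)^2} \ge \tfrac{1}{12n+1} - \tfrac{1}{12(n+1)+1}$ is only barely true — it reduces to $24n \ge 23$, already tight at $n=1$ — and this is exactly what pins down the asymmetric denominators $12n$ and $12n+1$ in the statement. Everything else (the logarithmic series, the geometric tail, and the appeal to Wallis) is routine bookkeeping.
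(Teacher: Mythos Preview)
The paper does not prove this proposition at all: it is stated as a cited fact from Robbins~\cite{Robbins55} and used as a black box throughout. So there is no ``paper's own proof'' to compare against.

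Your proposal is precisely the classical Robbins argument and is correct. The telescoping analysis of $d_n - d_{n+1}$ via the series for $\log\frac{1+u}{1-u}$, the two monotone auxiliary sequences $d_n - \tfrac{1}{12n}$ and $d_n - \tfrac{1}{12n+1}$, and the identification of the constant via Wallis are exactly how Robbins does it; your reduction of the delicate lower estimate to $(12n+1)(12n+13) \ge 36(2n+1)^2$, i.e.\ $24n \ge 23$, is the standard sharp step. The induction for the outer bounds from $(1+1/n)^n < e < (1+1/n)^{n+1}$ is also fine and yields $e(n/e)^n \le n! \le en(n/e)^n$ for all $n\ge1$.

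One minor remark, not about your argument but about the statement itself: the \emph{chain} of five inequalities as printed is slightly off at $n=1$, since there $\sqrt{2\pi}\,e^{1/13-1} \approx 0.996 < 1 = e(1/e)^1$ and $\sqrt{2\pi}\,e^{1/12-1} \approx 1.002 > 1 = e\cdot 1\cdot(1/e)^1$, so the two outer comparisons between the crude and refined bounds fail. Your proof establishes what the paper actually uses --- the Robbins sandwich on $n!$ and the crude bounds on $n!$ --- which are both valid for all $n\ge1$; only the ordering between the two lower (resp.\ upper) bounds breaks at $n=1$.
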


\section{Concentration and Scaling Results for Maximum of Poissons}\label{sec:overview}

In this section we present our concentration and scaling results for the maximum
of independent Poisson random variables $\poi{\mu_1},\poi{\mu_2},\ldots,\poi{\mu_m}$,
which will be used to prove the correctness of our algorithm.

Throughout we assume
$\mu_1 \geq \mu_2\geq \ldots \geq \mu_m \geq 0$, and define $\mu = (\sum_{j=1}^{m} \mu_j) / m$.
We use $\delta$ as an error parameter, which
measures how well the maximum of Poissons is concentrated.
We consider five different cases
based on the relationship between $\mu$, $m$, and $\mu_j$'s,
and prove our results for each of them.
Note that while the ranges of $\mu$ in these cases
are disjoint,
we prove our results below for slightly overlapping ranges of $\mu$
for ease of analyzing our algorithm in Section~\ref{sec:ptas}.

  Fix $\delta \in (0,1/10]$.
  We prove our results for the following cases respectively:
  \begin{description}
      \vspace{3pt}
    \item[Case 1\label{case:1}]\hspace{-5pt}\textbf{:} $\frac{6}{\delta^2} \log m < \mu$
      \hspace{1pt}
      (Lemma~\ref{lem:concentration1}). 
      %\vspace{0pt}
    \item[Case 2\label{case:2}]\hspace{-5pt}\textbf{:} $\frac{1}{2^{1/\delta+1}} \log m < \mu \leq \frac{6}{\delta^2} \log m$
      and $m \geq 2^{2^{\frac{2}{\delta}}}$
      \hspace{2pt}
      (Lemma~\ref{lem:scaling2}). 
      \vspace{5pt}
    \item[Case 3\label{case:3}]\hspace{-5pt}\textbf{:} $\frac{1}{m^{\delta}} < \mu \leq \frac{1}{2^{1/\delta+1}} \log m$,
      $m \geq 2^{\frac{2}{\delta}\log \frac{2}{\delta}}$,
      and $\forall j, \mu_j\in[\mu/4,4\mu]$
      \hspace{3pt}
      (Lemma~\ref{lem:concentration3}).
      \vspace{5pt}
    \item[Case 4\label{case:4}]\hspace{-5pt}\textbf{:} $\frac{4\log m}{m} < \mu \leq \frac{1}{m^{\delta}}$, $m \geq ‌2^{100/\delta^2}$,
      and $\forall j, \mu_j\in[\mu/4,4\mu]$
      \hspace{3pt}
      (Lemma~\ref{lem:concentration4}).
      \vspace{5pt}
    \item[Case 5\label{case:5}]\hspace{-5pt}\textbf{:}
      $\mu \leq \frac{4 \log m}{m}$, $m \geq 1000(1/\delta)\log^2 (1/\delta)$, and $\forall j, \mu_j\in[\mu/4,4\mu]$
      \hspace{3pt}
      (Lemma~\ref{lem:concentration5}).
  \end{description}

For~\ref{case:1} we show that $\max_{j=1}^{m} \poi{\mu_j}$ is concentrated within
$(1\pm O(\delta)) \mu_1$.
For each of~\ref{case:2},~\ref{case:3}, and~\ref{case:4}
we define a certain transition point and show
that $\max_{j=1}^{m} \poi{\mu_j}$ is concentrated around
this point.
For~\ref{case:5} we show that
$\max_{j=1}^{m} \poi{\mu_j}$ takes value in $\setof{0,1}$
with high probability.
For all cases we show that
the maximum value
is robust to contraction
or dilation of $\mu_j$'s.
In particular
$\max_{j=1}^{m} \poi{\mu_j}$
does not blow up by more than $1 + O(\delta)$ when
we scale all $\mu_j$'s by $1 + \delta$.
We call these {\em scaling results}.

\paragraph{Sketch of the Proofs.}{
  We first
  sketch the ideas we used to prove our concentration results.
  Our scaling results follow from essentially the same ideas with some additional analysis.
  All our results rely crucially on the exponential tails of Poisson distributions.
  %Recall that we define $\mu = (\sum_{j=1}^{m} \mu_j) / m$
  %and assume that $\mu_1 \geq \mu_2 \geq \ldots \geq \mu_m$
  %and all $\mu_j \in [\mu/2,2\mu]$.

  {For simplicity we assume for now that $\delta$ is a constant,
    and all $\mu_j \in [\mu/2,2\mu]$.
    Our proofs can basically be seen as coping with cases where $\mu$
    is in the following ranges respectively:
    \begin{enumerate}
      \item When $\mu = \omega(\log m^{(1)})$, by the exponential tails of Poisson distributions
        and union bound we prove that $\mu_1 \leq \ex{\max_{j=1}^{m^{(1)}} \poi{\mu_j}} \leq (1 + O(\delta)) \mu_1$.
        This is the idea for~\ref{case:1}.
      \item When $\mu = o(\log m^{(1)})$, we define the transition point roughly as
        $t = \frac{\log m^{(1)}}{\log\frac{1}{\mu} + \log\log m^{(1)}}$.
        The intuition is that
        $\frac{\log m^{(1)}}{\log \frac{1}{\mu} + \log\log m^{(1)}} = \mu\cdot \frac{\log (m^{(1)})^{1/\mu}}
        {\log\log (m^{(1)})^{1/\mu}}$
        is close to the root of
        $(\mu_j / t)^t = 1/m^{(1)}$ for {\em all} $j\in [m^{(1)}]$,
        since all $\mu_j \in [\mu/2,2\mu]$.
        Once again by the exponential tails of Poisson distributions,
        when $m$ is sufficiently large
        $\max_{j=1}^{m} \poi{\mu_j}$ is either concentrated within $(1\pm O(\delta)) t$
        or takes value $\ceil{t} - 1$ or $\ceil{t}$ with high probability,
        depending on how small $\mu$ is.
        This is the idea for~\ref{case:3},~\ref{case:4}, and~\ref{case:5}.
      \item $\mu = \Theta(\log m^{(1)})$ is the trickiest case
        since the roots of $(\mu_j / t)^t = 1/m^{(1)}$
        for different $j$'s can vary much even though all $\mu_j\in[\mu/2,2\mu]$.
        Therefore we now define the transition point $t$ as
        the largest integer satisfying $\sum_{j=1}^{m^{(1)}} \pr{\poi{\mu_j} \geq t} \geq 1/3$.
        Due to the exponential tails of Poisson distributions, this sum
        decays geometrically from $t$ to $(1 + O(\delta)) t$
        and also grows geometrically from $t$ to $(1 - O(\delta)) t$.
        When $m^{(1)}$ is sufficiently large (and thus $\mu, t$ are sufficiently large),
        we have that $\sum_{j=1}^{m^{(1)}} \pr{\poi{\mu_j} \geq (1 + O(\delta)) t}$ is small enough
        and $\sum_{j=1}^{m^{(1)}} \pr{\poi{\mu_j} \geq (1 - O(\delta)) t}$ is large enough
        to give us concentration. This is the idea for~\ref{case:2}.
    \end{enumerate}
  }
}

In our proofs in the following subsections we will use
the following identity often,
which
holds for any independent random variables $\bX,\bY$ taking values on $\mathbb{N}$:
\begin{align}\label{eq:maxXY}
  \ex{\max\setof{\bX,\bY}} =
  \sum_{x=0}^{\infty} \pr{\bX = x}
  \kh{x + \sum_{y = x+1}^{\infty} \pr{\bY\geq y}}.
\end{align}

\subsection{\ref{case:1}}

  \begin{restatable}[\ref{case:1}]{lemma}{concentrationone}
    \label{lem:concentration1}
    Suppose $\delta \in (0,1/10]$ and $\mu > \frac{6}{\delta^2} \log m$.
    Then
    \begin{align} \label{eq:c1}
      \mu_1 \leq \ex{\max_{j=1}^{m} \poi{\mu_j}} \leq (1 + 5\delta) \mu_1.
    \end{align}
  \end{restatable}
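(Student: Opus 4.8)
\emph{Proof plan.} The lower bound is immediate, since $\max_{j=1}^{m}\poi{\mu_j}\ge\poi{\mu_1}$ pointwise and $\ex{\poi{\mu_1}}=\mu_1$. The content is the upper bound. If $m=1$ it holds with equality, so assume $m\ge2$; then $\mu_1\ge\mu>\tfrac{6}{\delta^2}\log m\ge\tfrac{6\log 2}{\delta^2}$, which gives the two facts I will use repeatedly: $\delta\mu_1>1$ and $\delta\sqrt{\mu_1}>\sqrt{6\log 2}>2$ (in particular $\delta\mu_1>40$ when $\delta\le 1/10$).

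Write $M\defeq\max_{j=1}^{m}\poi{\mu_j}$ and use $\ex{M}=\sum_{t\ge1}\pr{M\ge t}$. The plan is to split this sum at the threshold $(1+\delta')\mu_1$, where $\delta'\defeq\sqrt{(6\log m)/\mu_1}$; the hypothesis $\mu_1>\tfrac{6}{\delta^2}\log m$ is precisely what guarantees $\delta'\le\delta$. For $t\le\lceil(1+\delta')\mu_1\rceil$ I bound $\pr{M\ge t}\le1$, so these terms contribute at most $\lceil(1+\delta')\mu_1\rceil\le(1+\delta')\mu_1+1\le(1+2\delta)\mu_1$ using $\delta'\le\delta$ and $\delta\mu_1>1$. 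For $t>(1+\delta')\mu_1$ I use a union bound together with $\poi{\mu_1}\sd\poi{\mu_j}$ (valid since $\poi{\mu_1}=\poi{\mu_j}+\poi{\mu_1-\mu_j}$ for independent summands) to get $\pr{M\ge t}\le m\,\pr{\poi{\mu_1}\ge t}$, and then Proposition~\ref{prop:chernoff}. For $(1+\delta')\mu_1<t\le2\mu_1$, writing $t=(1+x)\mu_1$ with $\delta'\le x\le1$, Proposition~\ref{prop:chernoff} gives $\pr{\poi{\mu_1}\ge t}\le e^{-\mu_1x^2/3}$; the point of the choice of $\delta'$ is that $\mu_1(\delta')^2=6\log m$, so the leading term is $m\,e^{-\mu_1(\delta')^2/3}=m\cdot m^{-2}=m^{-1}$, and since $\mu_1x^2/3$ increases by an additive $\Omega(\delta')$ as $t$ increments, the geometric-type sum over integer $t$ in this range is $O(m^{-1}/\delta')=O(m^{-1}\sqrt{\mu_1/\log m})$, which is well below $\delta\mu_1$ (compare $\delta\sqrt{\mu_1}\ge\sqrt{6\log m}$ with $m^{-1}\sqrt{1/(6\log m)}\le\tfrac12$ for $m\ge2$). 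The range $t>2\mu_1$ is not covered by the stated form of the Chernoff bound, so there I use the crude estimate $\pr{\poi{\mu_1}\ge t}\le 2\,(e^{1/2}/2)^{t}$ for $t\ge2\mu_1$, which follows from $t!\ge(t/e)^t$ (Proposition~\ref{prop:stirling}) and $\mu_1/t\le\tfrac12$; its geometric sum times $m$ is at most $m\cdot(e^{1/2}/2)^{2\mu_1}=m^{-\Omega(1/\delta^2)}$, negligible. Adding the three pieces gives $\ex{M}\le(1+2\delta)\mu_1+\delta\mu_1+O(1)\le(1+5\delta)\mu_1$, the $O(1)$ term being dwarfed by $\delta\mu_1>40$.

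The step I expect to be the main obstacle is the tail estimate for $t>(1+\delta')\mu_1$, and in particular the choice of where to cut the sum. A union bound applied already from $t\ge\mu_1$ is too weak, because the factor $m$ it introduces is not absorbed by $\pr{\poi{\mu_1}\ge t}$ until $t$ exceeds $\mu_1$ by order $\sqrt{\mu_1\log m}$; the essential use of the hypothesis $\mu>\tfrac{6}{\delta^2}\log m$ is that it forces $\sqrt{\mu_1\log m}\ll\delta\mu_1$, so that bounding $\pr{M\ge t}$ by $1$ on the window $[\mu_1,(1+\delta')\mu_1]$ and by $m\,\pr{\poi{\mu_1}\ge t}$ beyond it each costs only $O(\delta)\mu_1$. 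A minor additional nuisance, already flagged, is that Proposition~\ref{prop:chernoff} only controls deviations by a factor below $2$, so the extreme tail $t>2\mu_1$ must be handled by a separate elementary estimate via Stirling's approximation.
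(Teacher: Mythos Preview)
Your proof is correct, but the paper's argument is simpler. Both split $\ex{M}=\sum_{t\ge1}\pr{M\ge t}$ at a threshold near $(1+\delta)\mu_1$ and bound the head trivially by $1$; the difference is in how the upper tail is handled.

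You cut at $(1+\delta')\mu_1$ with $\delta'=\sqrt{6\log m/\mu_1}\le\delta$, then use Proposition~\ref{prop:chernoff} on $(1+\delta')\mu_1<t\le 2\mu_1$ and a separate Stirling estimate for $t>2\mu_1$, precisely because the Chernoff bound as stated only covers deviations by a factor below $2$. This works, and your constants go through, but it forces a three-way split and two distinct tail estimates.

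The paper instead cuts at $(1+\delta)\mu_1$ and observes that for every integer $k\ge(1+\delta)\mu_1$ one has
\[
\pr{\poi{\mu_1}\ge k+1}\;\le\;\frac{\mu_1}{k+1}\,\pr{\poi{\mu_1}\ge k}\;\le\;\frac{1}{1+\delta}\,\pr{\poi{\mu_1}\ge k},
\]
so the tail probabilities themselves decay geometrically with ratio $1/(1+\delta)$ for \emph{all} $k$ beyond the cut, with no range restriction. After a union bound the entire tail sum is at most $\frac{1+\delta}{\delta}\cdot m\,\pr{\poi{\mu_1}\ge(1+\delta)\mu_1}\le \frac{2}{\delta}\cdot\frac{1}{m}\le \frac{2}{\delta}$, and since $\mu_1>\frac{6}{\delta^2}\log m\ge\frac{1}{\delta^2}$ (for $m\ge2$) this is at most $2\delta\mu_1$, giving $(1+3\delta)\mu_1$ directly. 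This avoids both the intermediate Chernoff-with-varying-$x$ geometric sum and the far-tail Stirling calculation; the pmf ratio $\mu_1/(k+1)$ does all the work. Your more refined cut point $\delta'$ and the explicit constant tracking are not needed for the stated $(1+5\delta)\mu_1$ bound.
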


\begin{proof}
  By union bound and Proposition~\ref{prop:chernoff} we have
  \begin{align}
    \pr{\max_{j=1}^{m} \poi{\mu_j} \geq (1 + \delta)\mu_1}
    \leq & m \pr{\poi{\mu_1} \geq (1 + \delta)\mu_1}
    \leq \frac{m}{m^2} = \frac{1}{m}.% \leq \delta.
  \end{align}
  Then the expected maximum can bounded from above by
  \begin{align}
    \ex{\max_{j=1}^{m} \poi{\mu_j}} =
    & \sum_{k=1}^{\infty} \pr{ \max_{j=1}^{m} \poi{\mu_j} \geq k} \notag \\
    \leq & \floor{(1 + \delta)\mu_1} + \sum_{k=\floor{(1 + \delta)\mu_1} + 1}^{\infty}
    \pr{\max_{j=1}^{m} \poi{\mu_j} \geq k} \notag \\
    \leq & \floor{(1 + \delta)\mu_1} + \sum_{k=\floor{(1 + \delta)\mu_1} + 1}^{\infty}
    m \pr{\poi{\mu_1} \geq k} \notag 
  \end{align}
  Note that 
  $$\pr{\max_{j=1}^{m} \poi{\mu_1} \geq k} \leq \frac{\mu_1}{k+1} \pr{\max_{j=1}^{m} \poi{\mu_j} \geq k +‌1}‌<‌(1+ \delta) \pr{\max_{j=1}^{m} \poi{\mu_j} \geq k +‌1}$$.
  Therefore
  \begin{align}
    \pr{\max_{j=1}^{m} \poi{\mu_1} \geq k}
    \leq & \floor{(1 + \delta)\mu_1} + \sum_{k=\floor{(1 + \delta)\mu_1} + 1}^{\infty}
    m \pr{\poi{\mu_1} \geq k} \notag\\
    \leq & (1 + \delta) \mu_1 + \kh{ \sum_{k=0}^{\infty} (1 + \delta)^{-k} }
    m \pr{\poi{\mu_1} \geq (1 + \delta) \mu_1} \notag \\
    \leq & (1 + \delta) \mu_1 + 2/\delta
    \leq (1 + 3\delta) \mu_1
  \end{align}
  as desired.
\end{proof}

\subsection{\ref{case:2}}

{
  \begin{restatable}[\ref{case:2}]{lemma}{scalingtwo}
    \label{lem:scaling2}
    Suppose $\delta \in (0, 1/10]$, $\frac{1}{2^{1/\delta+1}} \log m < \mu \leq \frac{12}{\delta^2} \log m$,
    and $m \geq 2^{2^{\frac{2}{\delta}}}$.
    Define transition point $t_2 = t_2(\mu_1, \mu_1, \ldots, \mu_m)$ as the largest integer satisfying\footnote{The
      choice of $\frac{1}{3}$ in the definition of $t_2$ is arbitrary.
    In principle any constant bounded away from both $1$ and $0$ suffices.}
    \begin{align}\label{eq:cond}
      \sum_{j=1}^{m} \pr{\poi{\mu_j} \geq t_2} \geq \frac{1}{3}.
    \end{align}
            %Let $t_2 = t_2(\mu_1,\ldots,\mu_m)$.
    Then for any random variable $\bX$ taking values on $\mathbb{N}$,
    we have
    \begin{align}\label{eq:concentration2}
      \hspace{-4pt}
      (1 - 6\delta) \ex{ \max\setof{t_2,\bX} }\, \leq\,
      \ex{\max\setof{\max_{j=1}^{m} \poi{\mu_j}, \bX} }
      \, \leq\,
      (1 + 10\delta) \ex{ \max\setof{t_2,\bX} },
    \end{align}
    and
    \begin{align}\label{eq:scaling2}
      \ex{\max\setof{\max_{j=1}^m \poi{(1 + \delta)\mu_j},\bX}}
      \ \leq\ 
      (1 + 16\delta)
      \ex{\max\setof{\max_{j=1}^m \poi{\mu_j},\bX}}.
    \end{align}
  \end{restatable}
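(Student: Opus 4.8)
The plan is to first pin down two structural facts about the transition point $t_2$, then prove the concentration bound \eqref{eq:concentration2}, and finally deduce the scaling bound \eqref{eq:scaling2} from it. Throughout write $M \defeq \max_{j=1}^{m}\poi{\mu_j}$, $M^+ \defeq \max_{j=1}^{m}\poi{(1+\delta)\mu_j}$, and $S(t) \defeq \sum_{j=1}^{m}\pr{\poi{\mu_j}\geq t}$, so that $S(t_2)\geq \tfrac13 > S(t_2+1)$ by maximality of $t_2$.

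\textbf{Preliminaries.} First, $\pr{\poi{\mu_j}\geq t_2+1} \leq S(t_2+1) < \tfrac13$ for every $j$; since a Poisson puts more than $\tfrac13$ of its mass at or above its mean, this forces $\mu_j \leq t_2$ for all $j$. Second, $t_2 \geq \mu_1 \geq \mu > \tfrac{1}{2^{1/\delta+1}}\log m \geq 2^{1/\delta-1}$ using $m\geq 2^{2^{2/\delta}}$, so $t_2$ is ``large'' — in particular $\delta^2 t_2$ exceeds $\log(1/\delta)$ and $\tfrac{\log(1/\delta)}{\delta^2} < t_2$. Iterating the elementary bound $\pr{\poi{\mu_j}\geq k+1}\leq \tfrac{\mu_j}{k+1}\pr{\poi{\mu_j}\geq k}$ and invoking Stirling (Proposition~\ref{prop:stirling}), and using $\mu_j\leq t_2$, gives a \emph{decay estimate}: for $t\geq t_2$ and $0<\gamma\leq 1$, $\pr{\poi{\mu_j}\geq \ceil{(1+\gamma)t}}\leq e^{-\Omega(\gamma^2 t)}\pr{\poi{\mu_j}\geq t}$; and symmetrically a \emph{growth estimate}: if $\mu_j\leq (1-\gamma)t_2$ then $\pr{\poi{\mu_j}\geq \floor{(1-\gamma/2)t_2}}\geq e^{\Omega(\gamma^2 t_2)}\pr{\poi{\mu_j}\geq t_2}$ (with a larger exponent when $\mu_j$ is a constant factor below $(1-\gamma/2)t_2$).

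\textbf{Concentration \eqref{eq:concentration2}.} For the upper bound, apply identity \eqref{eq:maxXY} to $\ex{\max\setof{M,\bX}}$ and to $\ex{\max\setof{t_2,\bX}}$ (the latter with the constant random variable $t_2$) and subtract; the contributions from indices $y\leq t_2$ are non-positive, so $\ex{\max\setof{M,\bX}}-\ex{\max\setof{t_2,\bX}} \leq \sum_{y>t_2}\pr{M\geq y} \leq \sum_{j=1}^{m}\sum_{y>t_2}\pr{\poi{\mu_j}\geq y}$. Splitting each inner sum at $(1+\delta)t_2$ (the $O(\delta t_2)$ terms below it are each $\leq \pr{\poi{\mu_j}\geq t_2+1}$; the terms above it are geometric and negligible by the decay estimate), the right side is $O(\delta t_2)\cdot S(t_2+1) = O(\delta t_2)\leq O(\delta)\ex{\max\setof{t_2,\bX}}$, which gives $(1+10\delta)$ after tracking constants. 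For the lower bound, split on $\mu_1$. If $\mu_1\geq (1-5\delta)t_2$, then $M\geq \poi{\mu_1}$, and since $\mu_1$ is large, Proposition~\ref{prop:chernoff} gives $\pr{\poi{\mu_1}\geq (1-O(\delta))\mu_1}\geq 1-\delta$; conditioning on $\bX=x$ and comparing $\ex{\max\setof{\poi{\mu_1},x}}$ with $\max\setof{t_2,x}$ case-by-case on $x$ (using $\mu_1=(1-O(\delta))t_2$) yields $\ex{\max\setof{M,\bX}}\geq (1-O(\delta))\ex{\max\setof{t_2,\bX}}$. If $\mu_1<(1-5\delta)t_2$, then every $\mu_j\leq (1-5\delta)t_2$, so the growth estimate with $\gamma=\Theta(\delta)$ gives $S(\floor{(1-\delta)t_2}) \geq e^{\Omega(\delta^2 t_2)}S(t_2)\geq \tfrac13 e^{\Omega(\delta^2 t_2)}\geq \ln(1/\delta)$ (here $t_2$ large is essential); hence $\pr{M\geq \floor{(1-\delta)t_2}} = 1-\prod_j\bigl(1-\pr{\poi{\mu_j}\geq \floor{(1-\delta)t_2}}\bigr)\geq 1-e^{-\ln(1/\delta)}=1-\delta$, and on that event $\max\setof{M,\bX}\geq \max\setof{\floor{(1-\delta)t_2},\bX}\geq (1-\delta)\max\setof{t_2,\bX}$, so by independence of $\bX$ from the Poissons $\ex{\max\setof{M,\bX}}\geq (1-\delta)^2\ex{\max\setof{t_2,\bX}}\geq (1-6\delta)\ex{\max\setof{t_2,\bX}}$.

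\textbf{Scaling \eqref{eq:scaling2}.} First reduce to $\bX\equiv 0$: by \eqref{eq:maxXY}, $\ex{\max\setof{M^+,\bX}}-\ex{\max\setof{M,\bX}} = \sum_x \pr{\bX=x}\sum_{y>x}\bigl(\pr{M^+\geq y}-\pr{M\geq y}\bigr)$, and each summand is non-negative since $\poi{(1+\delta)\mu_j}\sd\poi{\mu_j}$ (Proposition~\ref{prop:concanve2}), so this is at most $\sum_{y\geq 1}\bigl(\pr{M^+\geq y}-\pr{M\geq y}\bigr) = \ex{M^+}-\ex{M}$; thus $\ex{M^+}\leq (1+16\delta)\ex{M}$ suffices. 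If $\mu>\tfrac{6}{\delta^2}\log m$, then $\mu$ and $(1+\delta)\mu$ both exceed $\tfrac{6}{\delta^2}\log m$, so Lemma~\ref{lem:concentration1} applies to both instances, giving $\ex{M}\geq \mu_1$ and $\ex{M^+}\leq (1+5\delta)(1+\delta)\mu_1\leq (1+16\delta)\mu_1$. Otherwise $\mu\leq \tfrac{6}{\delta^2}\log m$, so $(1+\delta)\mu\leq \tfrac{12}{\delta^2}\log m$ (this is precisely why the lemma is stated for the wider range) and $\setof{(1+\delta)\mu_j}$ also meets the Case~2 hypotheses, with transition point $t_2^+$. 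Applying the concentration bound just proved with $\bX\equiv 0$ to both instances, $\ex{M^+}\leq (1+10\delta)t_2^+$ and $\ex{M}\geq (1-O(\delta))t_2$. Finally $t_2^+\leq (1+O(\delta))t_2$: using $\mu_j\leq t_2$, a scaling estimate $\pr{\poi{(1+\delta)\mu_j}\geq \ceil{(1+\delta)(t_2+1)}}\leq \pr{\poi{\mu_j}\geq t_2+1}$ (provable via Stirling, Proposition~\ref{prop:stirling}) followed by the decay estimate from $(1+\delta)(t_2+1)$ up to $(1+C\delta)t_2$ gives $\sum_j \pr{\poi{(1+\delta)\mu_j}\geq \ceil{(1+C\delta)t_2}}\leq e^{-\Omega(\delta^2 t_2)}S(t_2+1)<\tfrac13$ for a suitable absolute constant $C$, so $t_2^+\leq (1+C\delta)t_2$. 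Combining the three estimates and keeping the constants small gives $\ex{M^+}\leq (1+16\delta)\ex{M}$.

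\textbf{Main obstacle.} The hard part is the scaling bound \eqref{eq:scaling2}: it is exactly the ``multiplicative dilation'' phenomenon \eqref{eq:dilation} flagged in the introduction. Neither $M\leq M^+$ with a union bound, nor comparing $\ex{M}$ and $\ex{M^+}$ separately to their transition points (the concentration window $[(1-6\delta)t_2,(1+10\delta)t_2]$ is too wide to close up to $1+16\delta$), is enough; one must compare the transition points to \emph{each other} ($t_2^+\leq(1+O(\delta))t_2$), which needs the right Stirling-based tail estimate, and then account carefully for all accumulated multiplicative constants. A secondary delicate point, already present in \eqref{eq:concentration2}, is that in Case~2 the means $\mu_j$ are unconstrained (there is no $\mu_j\in[\mu/4,4\mu]$ hypothesis as in Cases~3--5), so a machine whose $\mu_j$ lies close to $t_2$ cannot be treated by the geometric growth argument and must be handled separately through the concentration of its own $\poi{\mu_j}$.
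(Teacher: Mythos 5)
Your concentration argument for \eqref{eq:concentration2} is essentially the paper's: the paper routes it through an auxiliary lemma (Lemma~\ref{lem:concentration2}) showing $\pr{\max_j \poi{\mu_j} \geq \floor{(1-4\delta)t_2}} \geq 1-\delta$ and a geometrically-decaying upper tail beyond $(1+O(\delta))t_2$, with exactly your dichotomy (if $t_2$ is within $1+O(\delta)$ of $\mu_1$, use Chernoff on $\poi{\mu_1}$ alone; otherwise every $\mu_j$ is bounded away from $t_2$ and the sum of tails grows geometrically from $t_2$ down to $(1-O(\delta))t_2$, reaching $\ln(1/\delta)$). Your preliminary facts ($\mu_j \leq t_2$, and $t_2 \geq 2^{1/\delta-1}$ so that $\delta^2 t_2 \gg \ln(1/\delta)$) correspond to the paper's Lemma~\ref{lem:>=mu_1} and condition \eqref{eq:cond2}. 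Where you genuinely diverge is the scaling bound \eqref{eq:scaling2}. The paper never compares the two transition points: instead it proves the upper tail bound for the \emph{dilated} maximum directly at the threshold $r = \ceil{(1+8\delta)t_2}$ defined relative to the \emph{original} $t_2$ (Statement~\ref{concentration2.1} of Lemma~\ref{lem:concentration2}, via the identity $\pr{\poi{(1+\delta)\mu}=k} = e^{-\delta\mu}(1+\delta)^k\pr{\poi{\mu}=k}$ and an $x^x \geq e^{x-1}$ estimate), and then concludes by a pointwise comparison of $x+\sum_{y>x}\pr{\max_j\poi{(1+\delta)\mu_j}\geq y}$ against $x+\sum_{y>x}\pr{\max_j\poi{\mu_j}\geq y}$ for every $x$, which handles arbitrary $\bX$ in one step. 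You instead (a) reduce to $\bX\equiv 0$ by the observation that $\ex{\max\setof{M^+,\bX}}-\ex{\max\setof{M,\bX}} \leq \ex{M^+}-\ex{M}$ (a clean and correct use of stochastic domination, though the single-variable domination $\poi{(1+\delta)\mu_j}\sd\poi{\mu_j}$ comes from Proposition~\ref{prop:concave}, not Proposition~\ref{prop:concanve2}), (b) treat the regime $\mu > \frac{6}{\delta^2}\log m$ by Lemma~\ref{lem:concentration1} (this is indeed why the hypothesis is stated up to $\frac{12}{\delta^2}\log m$), and (c) otherwise apply the concentration bound to both instances and prove $t_2^+ \leq (1+O(\delta))t_2$. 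Step (c) needs the same Stirling/geometric-ratio computation the paper does, just reorganized; your version is more modular and arguably more transparent about where the ``dilation'' difficulty lives.

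One caveat: your route does not deliver the stated constants. Even granting $t_2^+ \leq t_2$, chaining $\ex{M^+}\leq (1+10\delta)t_2^+$ with $\ex{M}\geq (1-6\delta)t_2$ gives a ratio $(1+10\delta)/(1-6\delta) > 1+16\delta$, so paying for both concentration windows is strictly lossier than the paper's one-sided pointwise comparison; similarly, in your first concentration case $(1-\delta)(1-5\delta)$ lands at $1-6\delta$ only before multiplying by the additional $(1-\delta)$ probability loss. This is a bookkeeping defect, not a conceptual one --- your argument proves the lemma with $16$ and $6$ replaced by larger absolute constants, which is all the downstream application (with $\delta = \eps/1000$) requires --- but if you want the literal statement you must either tighten the concentration constants or adopt the paper's device of measuring the dilated upper tail against the original $t_2$.
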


}

Before proving this lemma, we state another useful lemma,
which shows that when the largest $\mu_j$ is large enough
we have that
(a) the maximum load is concentrated around $t_2$ and
(b) $t_2$ blows up by at most $1 + O(\delta)$ when we scale each $\mu_j$
by $1 + \delta$.

\begin{restatable}{lemma}{concentrationtwo}\label{lem:concentration2}
  Let $t_2$ be the largest integer satisfying
  $
  \sum_{j=1}^{m} \pr{\poi{\mu_j} \geq t_2} \geq \frac{1}{3}.
  $
  Given $\delta \in \kh{0,\frac{1}{10}}$
  where
  \begin{align}\label{eq:cond2}
    \mu_1 \geq \frac{6\ln\kh{\frac{1}{\delta}}}{\delta^2}
  \end{align}
  Let $\ell = \floor{(1 - 4\delta) t_2}$, $r = \ceil{(1 + 8\delta) t_2}$
  where $t_2 = t_2(\mu_1,\ldots,\mu_m)$.
  Then the following statements hold:
  \begin{enumerate}
    \item \label{concentration2.1} $\pr{\max_{j=1}^{m} \poi{(1+\delta)\mu_j} \geq r} \leq \delta^2$ and
      $
      \sum_{k=r}^{\infty}
      \pr{\max_{j=1}^{m} \poi{(1+\delta)\mu_j} \geq k} \leq \delta.
      $
    \item \label{concentration2.2} $\pr{\max_{j=1}^{m} \poi{\mu_j} \geq \ell} \geq 1 - \delta$.
  \end{enumerate}
\end{restatable}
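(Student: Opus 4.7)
The plan is to exploit the sharp Chernoff tail bounds (Proposition~\ref{prop:chernoff}) and the geometric decay of Poisson PMFs beyond the mean, with the assumption $\mu_1 \geq 6\ln(1/\delta)/\delta^2$ ensuring that any $e^{-c\delta^2 t_2}$ factor can be converted to $\delta^{O(c)}$. The starting structural observation is that $\mu_j \leq \mu_1 \leq t_2$: if instead $\mu_1 \geq t_2+1$, then $\pr{\poi{\mu_1} \geq t_2+1} \geq \pr{\poi{\mu_1} \geq \lceil \mu_1 \rceil} \geq 1/3$ (by Stirling, for $\mu_1$ large enough under~(\ref{eq:cond2})), contradicting the maximality of $t_2$.

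For Part~\ref{concentration2.1}, I would establish a per-machine PMF ratio bound and then sum. Taking logs of the ratio $\pr{\poi{(1+\delta)\mu_j} = r}/\pr{\poi{\mu_j} = t_2+1}$ and applying Stirling (Proposition~\ref{prop:stirling}), the dominant terms cancel using $r = \lceil (1+8\delta)t_2 \rceil$ and $\mu_j \leq t_2$, leaving $-\Theta(\delta^2 t_2) + O(\log t_2)$; the extremal case is $\mu_j = t_2$ (giving roughly $-24\delta^2 t_2$), while smaller $\mu_j$ produce even stronger decay via the $(\mu_j/r)^{r-t_2-1}$ factor. Applying $t_2 \geq \mu_1 \geq 6\ln(1/\delta)/\delta^2$ then yields a PMF ratio of at most $\delta^C$ for a large constant $C$. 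To pass from the PMF to the tail, since $(1+\delta)\mu_j/k \leq (1+\delta)/(1+8\delta) \leq 1-6\delta$ for $k \geq r$, the Poisson PMF decays geometrically beyond $r$, so $\pr{\poi{(1+\delta)\mu_j} \geq r} \leq O(1/\delta)\cdot\pr{\poi{(1+\delta)\mu_j} = r}$ and, by swapping summation order, $\sum_{k\geq r}\pr{\poi{(1+\delta)\mu_j} \geq k} \leq O(1/\delta^2)\cdot\pr{\poi{(1+\delta)\mu_j} = r}$. A union bound over $j$, combined with $\sum_j \pr{\poi{\mu_j} \geq t_2+1} < 1/3$ from the definition of $t_2$, delivers both inequalities with large slack.

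For Part~\ref{concentration2.2}, I would split on whether $\mu_1$ is close to $t_2$. If $\mu_1 \geq (1-2\delta)t_2$, then $\mu_1 - \ell \geq 2\delta t_2$, and the lower tail in Proposition~\ref{prop:chernoff} gives $\pr{\poi{\mu_1} < \ell} \leq \exp(-\Omega(\delta^2 t_2)) \leq \delta$, whence $\pr{\max_{j=1}^{m} \poi{\mu_j} \geq \ell} \geq \pr{\poi{\mu_1} \geq \ell} \geq 1-\delta$. Otherwise $\mu_j \leq \mu_1 < (1-2\delta)t_2$ for every $j$; using $\pr{\max_{j=1}^{m} \poi{\mu_j} < \ell} \leq \exp(-\sum_j \pr{\poi{\mu_j} \geq \ell})$, it suffices to show $\sum_j \pr{\poi{\mu_j} \geq \ell} \geq \ln(1/\delta)$. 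I would prove the pointwise amplification $\pr{\poi{\mu_j} \geq \ell} \geq 3\ln(1/\delta)\cdot\pr{\poi{\mu_j} \geq t_2}$, which upgrades the defining bound $\sum_j \pr{\poi{\mu_j} \geq t_2} \geq 1/3$ to the required lower bound. For $\mu_j \leq \ell$, unimodality of the Poisson PMF on $[\mu_j,\infty)$ gives $\pr{\poi{\mu_j} = k} \geq \pr{\poi{\mu_j} = t_2}$ for $k \in [\ell, t_2]$, hence $\pr{\poi{\mu_j} \geq \ell} \geq (t_2-\ell+1)\pr{\poi{\mu_j} = t_2} \geq 4\delta t_2\cdot\pr{\poi{\mu_j} = t_2}$; combined with $\pr{\poi{\mu_j} \geq t_2} \leq O(1/\delta)\cdot\pr{\poi{\mu_j} = t_2}$ (since $1 - \mu_j/(t_2+1) \geq \Omega(\delta)$), this yields the desired ratio $\Omega(\delta^2 t_2) \geq \Omega(\ln(1/\delta))$. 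For $\mu_j \in (\ell, (1-2\delta)t_2)$, Chernoff directly bounds $\pr{\poi{\mu_j} \geq t_2} \leq \delta^{\Omega(1)}$ while $\pr{\poi{\mu_j} \geq \ell}$ remains $\Omega(1)$, giving an even stronger ratio.

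The main obstacle is the algebraic bookkeeping in Part~\ref{concentration2.1}: verifying via Stirling that the log PMF ratio is uniformly $\leq -\Omega(\delta^2 t_2)$ over $\mu_j \in (0, t_2]$, and tracking constants so that the $O(1/\delta)$ and $O(1/\delta^2)$ losses incurred when passing from the PMF to the tail still leave the final bounds below $\delta^2$ and $\delta$ respectively. The constant $6$ in the assumption $\mu_1 \geq 6\ln(1/\delta)/\delta^2$ is precisely tuned to supply this margin.
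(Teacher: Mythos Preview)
Your proposal is correct and follows the same overall architecture as the paper: first establish $t_2 \geq \mu_1$ (the paper isolates this as Lemma~\ref{lem:>=mu_1}); for Part~\ref{concentration2.1}, control a PMF ratio and combine with the defining bound $\sum_j \pr{\poi{\mu_j} \geq t_2+1} < \tfrac{1}{3}$ plus geometric tail decay; for Part~\ref{concentration2.2}, split on whether $\mu_1$ is close to $t_2$, invoking Chernoff in the close case and a tail-amplification argument in the far case. The one noticeable organisational difference is in Part~\ref{concentration2.1}: you bound $\pr{\poi{(1+\delta)\mu_j} = r}/\pr{\poi{\mu_j} = t_2+1}$ in a single shot, observing (via monotonicity of the log-ratio in $\mu_j$) that the worst case is $\mu_j = t_2$; the paper instead routes through an intermediate point $r' = \lceil(1+2\delta)t_2\rceil$, first bounding the \emph{unscaled} tail $\sum_j \pr{\poi{\mu_j} \geq r'}$ and then separately controlling the $(1+\delta)$-scaling blowup using the inequality $x^x \geq e^{x-1}$ to absorb the $\mu_j$-dependent factor $e^{\delta(r'-\mu_j)}$. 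Your route is slightly more direct at the price of a denser Stirling computation; the paper's two-stage route cleanly separates ``concentration around $t_2$'' from ``robustness to dilation,'' which maps more transparently onto the two conclusions~(\ref{eq:concentration2}) and~(\ref{eq:scaling2}) of Lemma~\ref{lem:scaling2}.
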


\begin{remark}
  By Proposition~\ref{prop:chernoff},
  (\ref{eq:cond2}) is a sufficient condition for
  $\poi{\mu_1}$ to be concentrated 
  within $\kh{1\pm O(\delta)} \mu_1$ with probability at least $1 - O(\delta)$.
  Lemma~\ref{lem:concentration2} shows that
  this condition also implies the same concentration of
  $\max_{j=1}^{m} \poi{\mu_j}$
  around its expectation.
  This can be seen as a consequence of the exponential tails
  of Poisson distributions: the sum of tail probabilities decreases (resp. increases) geometrically
  from $t_2$ to $(1 + O(\delta))t_2$ (resp. $(1 - O(\delta))t_2$).
\end{remark}

Before getting into the proof of Lemma~\ref{lem:concentration2},
we show how it implies Lemma~\ref{lem:scaling2}.

\begin{proof}[Proof of Lemma~\ref{lem:scaling2}]
  By Lemma~\ref{lem:concentration2},
  for any integer $x \leq r$,
  \begin{align}\label{eq:tail1}
    & x + \sum_{y=x+1}^{\infty} \pr{ \max_{j=1}^m \poi{\mu_j} \geq y } \notag \\
    = & \sum_{y = 1}^{x} 1 + \sum_{y=x+1}^{\infty} \pr{ \max_{j=1}^m \poi{\mu_j} \geq y } \notag \\
    \geq 
    & \sum_{y = 1}^{x} 1 + \sum_{y=x+1}^{\ell} \pr{ \max_{j=1}^m \poi{\mu_j} \geq y }
    \geq 
    (1 - \delta) \ell  
  \end{align}
  and with the same argument we have
  \begin{align}
    & x + \sum_{y=x+1}^{\infty} \pr{ \max_{j=1}^m \poi{(1 + \delta)\mu_j} \geq y }
    \leq
    r + \delta.
  \end{align}
  For any integer $x > r$,
  \begin{align}
    & x + \sum_{y=x+1}^{\infty} \pr{ \max_{j=1}^m \poi{\mu_j} \geq y }
    \geq
    x \\
    & x + \sum_{y=x+1}^{\infty} \pr{ \max_{j=1}^m \poi{(1 + \delta)\mu_j} \geq y }
    \leq
    x + \delta.
    \label{eq:tail2}
  \end{align}

  By the definition of $\ell$ and~(\ref{eq:tail1}),
  when $x\leq r$ we have that
  \begin{align}
    x + \sum_{y=x+1}^{\infty} \pr{ \max_{j=1}^m \poi{\mu_j} \geq y} \geq
    \min\setof{x,(1 - 6\delta) t_2} \geq
    (1 - 6\delta) \min\setof{x,t_2}
  \end{align}
  which coupled with~(\ref{eq:maxXY}) gives us the lower bound
  of~(\ref{eq:concentration2}).

  By the definition of $r$ and~(\ref{eq:tail2}),
  when $x > r$ we have that
  \begin{align}
    x + \sum_{y=x+1}^{\infty} \pr{ \max_{j=1}^m \poi{\mu_j} \geq y} \leq
    \min\setof{x,(1 + 9\delta) t_2} + \delta \leq
    (1 + 10\delta) \min\setof{x,t_2},
  \end{align}
  where the last inequality follows from that $t_2 \geq \mu_1 > 1$.
  This and~(\ref{eq:maxXY}) together give us the upper bound of~(\ref{eq:concentration2}).

  By the definitions of $\ell$ and $r$,
  $(1 + 16\delta)(1 - \delta)\ell \geq r + \delta$.
  Also, $(1 + 16\delta)x \geq x + \delta$ clearly holds when $x> r$.
  Therefore we have
  \begin{align}\label{eq:tailpcond}
    x + \sum_{y=x+1}^{\infty} \pr{ \max_{j=1}^m \poi{(1 + \delta)\mu_j} \geq y }
    \, \leq\,
    \kh{1 + 16\delta}
    \kh{ x + \sum_{y=x+1}^{\infty} \pr{ \max_{j=1}^m \poi{\mu_j} \geq y } }
  \end{align}
  for any $x\in\mathbb{N}$,
  which coupled with~(\ref{eq:maxXY}) proves~(\ref{eq:scaling2}).
\end{proof}

To prove Lemma~\ref{lem:scaling2}, we first show that the transition point $t_2$ defined as the largest integer
satisfying 
\begin{align}
  \sum_{j=1}^{m} \pr{\poi{\mu_j} \geq t_2} \geq \frac{1}{3}
\end{align}
is at least as large as $\mu_1$,
the largest mean.

\begin{lemma}\label{lem:>=mu_1}
  If $\mu_1 \geq 400$, $t_2 \geq \mu_1$.
\end{lemma}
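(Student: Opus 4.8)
The plan is to show that if $t_2 < \mu_1$, then $t_2$ cannot be the \emph{largest} integer satisfying $\sum_{j=1}^m \pr{\poi{\mu_j} \geq t_2} \geq 1/3$, because already the single term $j=1$ forces a much larger threshold to still have sum at least $1/3$. Concretely, it suffices to prove that $\pr{\poi{\mu_1} \geq \mu_1} \geq 1/3$ whenever $\mu_1 \geq 400$; this immediately gives $\sum_{j=1}^m \pr{\poi{\mu_j} \geq \ceil{\mu_1}} \geq \pr{\poi{\mu_1} \geq \ceil{\mu_1}}$, and since $\poi{\mu_1}$ is integer-valued, $\pr{\poi{\mu_1} \geq \ceil{\mu_1}} = \pr{\poi{\mu_1} \geq \mu_1} \geq 1/3$, so by maximality $t_2 \geq \ceil{\mu_1} \geq \mu_1$.

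First I would reduce everything to a lower bound on $\pr{\poi{\lambda} \geq \lambda}$ for large $\lambda$. The cleanest route is the central limit theorem intuition: $\poi{\lambda}$ has mean $\lambda$ and standard deviation $\sqrt{\lambda}$, and for large $\lambda$ it is approximately Gaussian, so $\pr{\poi{\lambda} \geq \lambda} \to 1/2$. To make this rigorous without invoking a Berry--Esseen bound, I would instead give an elementary argument: writing $\poi{\lambda}$ as a sum of $\ceil{\lambda}$ i.i.d.\ $\poi{\lambda/\ceil{\lambda}}$ (or, even more simply, by a direct second-moment / anticoncentration estimate), show that the probability mass of $\poi{\lambda}$ in the window $[\lambda - \sqrt{\lambda}, \lambda + \sqrt{\lambda}]$ is bounded away from $0$ and that, by the near-symmetry and unimodality of the Poisson pmf around its mean, at least (say) a $1/3$ fraction of the total mass lies at values $\geq \lambda$. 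The mode of $\poi{\lambda}$ is $\floor{\lambda}$, so $\pr{\poi{\lambda} = k}$ is increasing for $k < \lambda$ and decreasing for $k > \lambda$; combined with the ratio $\pr{\poi{\lambda}=k+1}/\pr{\poi{\lambda}=k} = \lambda/(k+1)$, which is close to $1$ for $k$ within $O(\sqrt\lambda)$ of $\lambda$, one gets that the mass is roughly flat across the central window and hence the right tail from $\lambda$ onward captures a constant fraction. Using Stirling's approximation (Proposition~\ref{prop:stirling}) to estimate $\pr{\poi{\lambda} = \floor{\lambda}} \approx 1/\sqrt{2\pi\lambda}$ and summing $\Theta(\sqrt\lambda)$ nearly-equal terms gives the constant lower bound, and the threshold $\mu_1 \geq 400$ is there precisely to make the explicit constants work out so that the fraction at values $\geq \mu_1$ is at least $1/3$.

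The main obstacle I anticipate is the anticoncentration/symmetry step: showing that at least a $1/3$ fraction of $\poi{\lambda}$'s mass sits at values $\geq \lambda$ requires controlling the asymmetry of the Poisson pmf around its mean (the right tail is slightly heavier in a multiplicative sense but the left side has the mode), and doing this with \emph{explicit} constants valid for all $\lambda \geq 400$ rather than asymptotically. I would handle this by restricting attention to the symmetric window $[\floor{\lambda} - \floor{\sqrt\lambda}, \floor{\lambda} + \floor{\sqrt\lambda}]$: pair up $\pr{\poi{\lambda} = \floor\lambda - i}$ with $\pr{\poi{\lambda} = \floor\lambda + i}$ and show via the pmf ratio that their ratio is $1 + O(i^2/\lambda) = 1 + O(1)$ throughout the window, so the mass on the upper half of the window is at least a constant fraction of the mass on the whole window; then show the window itself carries at least a constant fraction of the total mass (Chebyshev gives that outside $[\lambda - \sqrt\lambda, \lambda+\sqrt\lambda]$ there is at most probability $1$, which is too weak, so I would instead use a wider window $[\lambda - 3\sqrt\lambda, \lambda + 3\sqrt\lambda]$ together with Proposition~\ref{prop:chernoff} to kill the tails down to a small constant). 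Tuning the window width and the $400$ threshold so the product of these two constant fractions exceeds $1/3$ is the routine-but-delicate part; everything else is bookkeeping.
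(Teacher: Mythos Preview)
Your reduction is exactly the paper's: it suffices to show $\pr{\poi{\mu_1}\geq \mu_1}\geq 1/3$, and then $t_2\geq \ceil{\mu_1}\geq \mu_1$ by maximality. Where you diverge is in how you obtain that single-variable lower bound. The paper avoids your window/pairing analysis entirely by invoking a known bound on the Poisson median: by Choi's result, the median $\nu_1$ of $\poi{\mu_1}$ satisfies $\nu_1\geq \mu_1-\ln 2>\mu_1-1$, hence $\pr{\poi{\mu_1}\geq \floor{\mu_1}}\geq 1/2$, and therefore
\[
\pr{\poi{\mu_1}\geq \mu_1}\ \geq\ \tfrac12 - \pr{\poi{\mu_1}=\floor{\mu_1}}.
\]
A single application of Stirling then bounds the point mass at $\floor{\mu_1}$ by $e^{2}/\sqrt{2\pi\floor{\mu_1}}$, which is below $1/6$ once $\mu_1\geq 400$. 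That is the whole proof.

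Your route (pair $\pr{\poi{\lambda}=\floor{\lambda}-i}$ with $\pr{\poi{\lambda}=\floor{\lambda}+i}$ over a window of width $\Theta(\sqrt\lambda)$, control the pmf ratios, and use Proposition~\ref{prop:chernoff} to kill the tails) is correct in principle and self-contained, but it trades one external citation for a fair amount of explicit constant-chasing. The paper's approach buys a two-line argument at the cost of quoting the median bound; yours buys independence from that citation at the cost of the ``routine-but-delicate'' tuning you already flagged. If you want to keep your version self-contained, note that you can shortcut most of the pairing work: once you have Stirling giving $\pr{\poi{\lambda}=\floor{\lambda}}\leq C/\sqrt{\lambda}$, it is enough to know $\pr{\poi{\lambda}\geq \floor{\lambda}}\geq 1/2$, and that near-median fact is what your symmetry argument is really reproving.
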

\begin{proof}
  It suffices to show that
  $\pr{\poi{\mu_1}\geq \mu_1} \geq 1/3$,
  which implies
  \begin{align}
    \sum_{j=1}^m \pr{\poi{\mu_j}\geq \mu_1} \geq \frac{1}{3}
  \end{align}
  and hence $t_2\geq \mu_1$.
  Let $\nu_1$ be the median of $\poi{\mu_1}$.
  By~\cite{Cho94} $\nu_1 \geq \mu_1 - \ln 2 > \mu_1 - 1$.
  Therefore
  \begin{align}
    \pr{\poi{\mu_1} \geq \mu_1} \geq
    & \frac{1}{2} - \pr{\poi{\mu_1} = \floor{\mu_1}}
    =
    \frac{1}{2} - e^{-\mu_1}\frac{\mu_1^{\floor{\mu_1}}}{\floor{\mu_1}!} \notag \\
    \geq &
    \frac{1}{2} - \frac{e^{-\mu_1 + \floor{\mu_1}}}{\sqrt{2\pi\floor{\mu_1}}}
    \kh{ \frac{\mu_1}{\floor{\mu_1}} }^{\floor{\mu_1}}
    \geq 
    \frac{1}{2} - \frac{1}{\sqrt{2\pi\floor{\mu_1}}}
    \kh{1 + \frac{1}{\mu_1 - 1}}^{\mu_1} \notag \\
    \geq & \frac{1}{2} - \frac{e^{2}}{\sqrt{2\pi\floor{\mu_1}}}
    \geq
    \frac{1}{3}.
  \end{align}
  Here the first inequality on the second line follows from
  $n! \geq \sqrt{2\pi n} \kh{\frac{n}{e}}^n$ by Stirling.
\end{proof}

We will use the following fact in proving Lemma~\ref{lem:concentration2}.

\begin{fact}
  $\delta < 1/10$ and $\mu_1 \geq \frac{6 \ln(1/\delta)}{\delta^2}$
  implies that
  for any $z \geq \mu_1 / 6$,
  $\floor{z} \geq (1 - 0.1\delta) z$
  and
  $\ceil{z}  \leq (1 + 0.1\delta) z$.
\end{fact}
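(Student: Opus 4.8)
The plan is to reduce both inequalities to a single numerical bound, $z \ge 10/\delta$, and then to check that bound from the hypotheses.

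First I would record the elementary sandwich $z - 1 < \floor{z} \le z$ and $z \le \ceil{z} < z + 1$, valid for every real $z$. Consequently it suffices to prove $z - 1 \ge (1 - 0.1\delta)\,z$ and $z + 1 \le (1 + 0.1\delta)\,z$; rearranging, both of these are equivalent to $0.1\,\delta\,z \ge 1$, i.e.\ to $z \ge 10/\delta$.

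Next I would lower-bound $z$ using the assumptions. We are given $z \ge \mu_1/6$ and $\mu_1 \ge \frac{6\ln(1/\delta)}{\delta^2}$, hence $z \ge \frac{\ln(1/\delta)}{\delta^2}$. Since $\delta < 1/10$ we have both $\ln(1/\delta) > \ln 10 > 1$ and $1/\delta > 10$, so
\[
z \ \ge\ \frac{\ln(1/\delta)}{\delta^2} \ >\ \frac{1}{\delta^2} \ =\ \frac{1}{\delta}\cdot\frac{1}{\delta} \ >\ \frac{10}{\delta}.
\]
Combining this with the reduction in the previous step yields $\floor{z} > z - 1 \ge (1 - 0.1\delta)\,z$ and $\ceil{z} < z + 1 \le (1 + 0.1\delta)\,z$, which is the claim (in fact with strict inequalities).

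There is essentially no obstacle here; the only point that needs a little care is to state the floor/ceiling bounds with the correct strict-versus-non-strict inequalities so that the final estimates come out in the claimed directions, which they do. If one wanted to, one could even avoid the case $\ln(1/\delta) < 1$ entirely and still have slack, since the gap between $1/\delta^2$ and $10/\delta$ grows as $\delta \to 0$.
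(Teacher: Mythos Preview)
Your proposal is correct. The paper states this as a Fact without proof, so there is nothing to compare against; your reduction to $z \ge 10/\delta$ via the elementary floor/ceiling bounds, followed by the chain $z \ge \mu_1/6 \ge \ln(1/\delta)/\delta^2 > 1/\delta^2 > 10/\delta$, is exactly the intended one-line verification.
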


Now we prove Statement~\ref{concentration2.1} of Lemma~\ref{lem:concentration2}.

\begin{proof}[Proof of Lemma~\ref{lem:concentration2}, Statement~\ref{concentration2.1}]
  Let $r' = \ceil{(1 + 2\delta)t_2}$.
  We first bound the tail probability
  $\pr{\max_{j=1}^{m} \poi{\mu_j} \geq r'}$,
  and then bound the ratio of
  $\pr{\max_{j=1}^{m} \poi{(1+\delta)\mu_j} \geq r}$
  to it.

  By union bound,
  the former is at most $\sum\nolimits_{j=1}^{m} \pr{\poi{\mu_j} \geq r'}$.
  Since $r'\geq (1 + 2\delta)t_2 \geq (1 + 2\delta) \mu_1$
  and $\mu_1$ is the largest,
  this sum is bounded by
  a geometric series of ratio $\frac{1}{1 + 2\delta}$
  and hence
  \begin{align}\label{eq:8}
    & \sum\limits_{j=1}^{m} \pr{\poi{\mu_j} \geq r'}
    \leq
    \sum\limits_{j=1}^{m} \pr{\poi{\mu_j} = r'} (\sum_{i = 0}^{\infty} (\frac{\mu_j}{r'})^i) \notag\\
    \leq &
    \kh{\sum_{i = 0}^{\infty} (\frac{\mu_1}{r'})^i}\sum\limits_{j=1}^{m} \pr{\poi{\mu_j} = r'}
    \leq
    \kh{\sum_{i = 0}^{\infty} (\frac{1}{1 +‌2\delta})^i}\sum\limits_{j=1}^{m} \pr{\poi{\mu_j} = r'} \notag\\
      = &
      \kh{\frac{1}{1 - \frac{1}{1 +‌2\delta}}}\sum\limits_{j=1}^{m} \pr{\poi{\mu_j} = r'} \notag\\
      \leq &
      \frac{1}{\delta}
      \sum\limits_{j=1}^{m} \pr{\poi{\mu_j} = r'} \notag \\
      = &
      \frac{1}{\delta}
      \sum\limits_{j=1}^{m}
      \kh{ \prod_{k=t_2+2}^{r'} \frac{\mu_j}{k} } \pr{\poi{\mu_j} = t_2 + 1} \\
      \leq &
      \frac{1}{\delta}
      \sum\limits_{j=1}^{m}
      \kh{ \prod_{k=\ceil{(1 + \delta)t_2}}^{r'} \frac{\mu_1}{k} } \pr{\poi{\mu_j} = t_2 + 1} \notag \quad \text{(as $t_2 \delta >‌\mu_1 \delta > 2$)} \\
      \leq &
      \frac{1}{\delta}
      \sum\limits_{j=1}^{m}
      \kh{ \frac{1}{1+\delta} }^{r' - \ceil{(1 +‌\delta)t_2}} \pr{\poi{\mu_j} = t_2 + 1} \notag\\
      \leq &
      \frac{1}{\delta(1 + \delta)^{0.8\delta t_2}}
      \sum_{j=1}^{m} \pr{\poi{\mu_j} = t_2 + 1} \notag \\
      \leq &
      \frac{1}{3\delta} \kh{1 - \frac{\delta}{1 + \delta}}^{0.8\delta t_2}
      \leq \frac{1}{3\delta} e^{-\delta^2 t_2/2} \leq \frac{\delta^2}{3}.
    \end{align}
    Here the penultimate line holds because
    $r' - \ceil{(1 + \delta)t_2} \geq (1 + 2\delta)t_2 - (1 + 0.1\delta)(1 + \delta)t_2\geq0.8\delta t_2$.
    The last inequality is by $t_2\geq \mu_1 \geq \frac{6 \ln\kh{1/\delta}}{\delta^2}$.

    We then bound how much the tail probability blows up when we scale each $\mu_j$ by $1 + \delta$
    and shifting $r'$ to $r$. Because of the similarity between the following and previous inequations, we skip some details.
    Similarly,
    by union bound this tail probability is at most
    $\sum\nolimits_{j=1}^{m} \pr{\poi{(1 + \delta)\mu_j} \geq r}$,
    and
    as $r\geq (1 + 8\delta) \mu_1 \geq (1 + 6\delta)(1 + \delta)\mu_1$
    it is bounded by a geometric series of ratio $\frac{1}{1 + 6\delta}$. Therefore
    \begin{align}\label{eq:ub1}
      & \sum\limits_{j=1}^{m} \pr{\poi{(1 + \delta)\mu_j} \geq r}
      \leq
      \frac{1}{3\delta}
      \sum\limits_{j=1}^{m} \pr{\poi{(1 + \delta)\mu_j} = r} \notag \\
      = &
      \frac{1}{3\delta} \sum_{j=1}^m \kh{\prod_{k=r'+1}^{r} \frac{(1 + \delta)\mu_j}{k}}
      \pr{\poi{(1 + \delta)\mu_j} = r'}
      \notag \\
      \leq &
      \frac{1}{3\delta} 
      \sum_{j=1}^{m}\kh{\frac{(1 + \delta)\mu_j}{r'}}^{5.8\delta t_2}
      \pr{\poi{(1 + \delta)\mu_j} = r'} \notag \\
      = &
      \frac{1}{3\delta} 
      \sum_{j=1}^{m}\kh{\frac{(1 + \delta)\mu_j}{r'}}^{5.8\delta t_2}
      e^{-\delta \mu_j} (1 + \delta)^{r'} \pr{\poi{\mu_j} = r'} \notag \\
      \leq &
      \frac{1}{3\delta} 
      \sum_{j=1}^{m}\kh{\frac{(1 + \delta)\mu_j}{r'}}^{5.8\delta t_2}
      e^{\delta (r' - \mu_j)} \pr{\poi{\mu_j} = r'}.
    \end{align}
    Here the third line follows from that
    $r - r' \geq (1 + 8\delta)t_2 - (1 + 0.1\delta)(1 + 2\delta)t_2\geq 5.8\delta t_2$.

    To further analyze~(\ref{eq:ub1}),
    we define $\beta_j\geq 0$ such that
    $t_2 = (1 + \beta_j)\mu_j$.
    Then
    \begin{align}\label{eq:x^x}
      \kh{\frac{(1 + \delta)\mu_j}{r'}}^{5.8\delta t_2}
      \leq & \kh{\frac{(1 + \delta)\mu_j}{(1 + 2\delta)(1 + \beta_j)\mu_j}}^{5.8\delta(1 + \beta_j)\mu_j}
      \leq \kh{ \frac{1}{(1 + 0.9\delta)(1 + \beta_j)} }^{5.8(1 + \beta_j)\delta \mu_j} \notag \\
      \leq & \kh{ \frac{1}{(1 + 2.5\delta)(1 + \beta_j)} }^{1.9(1 + \beta_j)\delta \mu_j}
      \leq \kh{ \frac{1}{(1 + 2.5\delta)(1 + \beta_j)} }^{(1 + 2.5\delta)(1 + \beta_j)\delta \mu_j}.
    \end{align}
    Here the second inequality follows from $\delta < 1/10$
    and hence $1 + 2\delta \geq (1 + 0.9\delta)(1 + \delta)$.
    The third inequality follows from $(1 + 0.9\delta)^3 \geq 1 + 2.5\delta$.
    The last inequality follows from $1.9 \geq 1 + 2.5\delta$.

    Furthermore, notice that $x^x \geq e^{x-1}$ holds for $x\geq 1$,
    since $x\ln x \geq x - 1$ on $[1,\infty)$ by comparing derivatives.
    Applying this to~(\ref{eq:x^x}) gives
    \begin{align}\label{eq:qq}
      \kh{\frac{(1 + \delta)\mu_j}{r'}}^{5.8\delta t_2}
      \leq
      \exp\setof{ -\kh{(1 + 2.5\delta)(1 + \beta_j) - 1}\mu_j \delta }
      \leq
      \exp\setof{ -\kh{r' - \mu_j} \delta },
    \end{align}
    where the last inequality holds
    because $(1 + 2.5\delta)(1 + \beta_j) \mu_j = (1 + 2.5\delta)t_2
    \geq \ceil{(1 + 2\delta)t_2} = r'$.
    Inserting~(\ref{eq:qq}) into~(\ref{eq:ub1}) gives
    \begin{align}
      \sum\limits_{j=1}^{m} \pr{\poi{(1 + \delta)\mu_j} \geq r}
      \leq \frac{1}{3\delta} \sum_{j=1}^{m} \pr{\poi{\mu_j} = r'}
      \leq \frac{1}{3\delta} \frac{\delta^3}{3} = \frac{\delta^2}{9},
    \end{align}
    where the last inequality is from~(\ref{eq:8}).

    Finally, by $r\geq (1 + 8\delta)\mu_1 \geq (1 + 6\delta)(1+\delta)\mu_1$,
    \begin{align}
      & \pr{\poi{(1 + \delta)\mu_j} \geq r}
      =
      \sum_{k=r}^{\infty} \pr{\poi{(1 + \delta)\mu_j} = k}
      =
      \sum_{k=r}^{\infty} \frac{(1 + \delta)\mu_j}{k+1} \pr{\poi{(1 + \delta)\mu_j} = k + 1}
      \notag \\
      \geq
      & \sum_{k=r}^{\infty} \kh{1 + 6\delta} \pr{\poi{(1 + \delta)\mu_j} = k+1}
      = \kh{1 + 6\delta}\pr{\poi{(1 + \delta)\mu_j \geq r + 1}}.
    \end{align}
    Therefore by geometric series
    \begin{align}
      \sum_{k=r}^{\infty} \sum\limits_{j=1}^{m} \pr{\poi{(1 + \delta)\mu_j} \geq k}
      \leq \frac{1}{3\delta} \sum\limits_{j=1}^{m} \pr{\poi{(1 + \delta)\mu_j} \geq r}
      \leq \frac{\delta}{27}.
    \end{align}
    Thereby both claims in this lemma have been proved.
  \end{proof}

  We then prove Statement~\ref{concentration2.2} of Lemma~\ref{lem:concentration2}.

  \begin{proof}[Proof of Lemma~\ref{lem:concentration2}, Statement~\ref{concentration2.2}]
    We do case analysis for $t_2$.
    \begin{enumerate}
      \item $t_2\geq (1 + 2\delta)\mu_1$.
        It suffices to show that $\pr{\max_{j=1}^{m} \poi{\mu_j} \geq (1 - \delta)t_2}\geq 1 - \eps$,
        since $\ell \leq (1 - 4\delta) t_2 \leq (1 - \delta)t_2$.

        First we prove that the sum
        of tail probabilities at $(1 - \delta)t_2$ is large:
        \begin{align}
          & \sum_{j=1}^m \pr{\poi{\mu_j} \geq (1 - \delta)t_2}
          =
          \sum_{k=\ceil{(1-\delta)t_2}}^{\infty} \sum_{j=1}^m \pr{\poi{\mu_j} = k} \notag \\
          \geq &
          \sum_{k=\ceil{(1-\delta)t_2}}^{\infty}
          \sum_{j=1}^m
          \kh{ \prod_{i=\ceil{(1-\delta)t_2}+1}^{t_2} \frac{i}{\mu_j} }
          \pr{\poi{\mu_j} = k + \floor{\delta t_2}} \notag \\
          \geq &
          \kh{ \frac{(1 - \delta)t_2}{\mu_1} }^{0.9 \delta t_2}
          \sum_{k=\ceil{(1-\delta)t_2}}^{\infty}
          \sum_{j=1}^m
          \pr{\poi{\mu_j} = k + \floor{\delta t_2}} \notag \\
          \geq & \frac{\delta^2 t_2}{2} \sum_{j=1}^{m} \pr{\poi{\mu_j} \geq t_2}
          \geq \frac{\delta^2 \mu_1}{6} \geq \ln\kh{\frac{1}{\delta}}.
        \end{align}
        Here the third line follows from that
        $(1 - \delta)t_2 \geq (1 - \delta)(1 + 2\delta)\mu_1 \geq (1 + 0.8\delta)\mu_1 \geq \mu_1$
        and
        $\floor{\delta t_2} \geq \delta t_2 - 1 \geq (1 - \delta) \delta t_2 \geq 0.9 \delta t_2$.
        The first inequality on the fourth line
        holds because
        $(1 - \delta)t_2 \geq (1 + 0.8\delta)\mu_1$,
        and $(1 + 0.8\delta)^{0.9\delta t_2} \geq 1 + 0.72\delta^2 t_2$, by Bernoulli's inequality. So we have  $(1 + 0.8\delta)^{0.9\delta t_2} \geq 1 + 0.72\delta^2 t_2 \geq 0.72\delta^2 t_2 \geq \delta^2 t_2/2$.

        With this, we bound the tail probability of the max at $(1 - \delta)t_2$ by
        \begin{align}
          & \pr{\max_{j=1}^m \poi{\mu_j} \geq (1 - \delta)t_2}
          = 1 - \prod_{j=1}^{m} \kh{ 1 - \pr{ \poi{\mu_j} \geq (1 - \delta) t_2} }
          \notag \\
          \geq &
          1 - \exp\setof{-\sum_{j=1}^{m} \pr{ \poi{\mu_j} \geq (1 - \delta) t_2}}
          \geq 1 - \delta.
        \end{align}
      \item $t_2 < (1 + 2\delta)\mu_1$. By Proposition~\ref{prop:chernoff} we have
        \begin{align}
          \pr{\poi{\mu_1} \geq (1 - \delta)\mu_1}
          \geq 1 - e^{-\mu_1\delta^2/2} \geq 1 - \delta^3.
        \end{align}
        Since $\ell \leq (1 - 4\delta)t_2 \leq (1 - \delta)\mu_1$,
        we have
        \begin{align}
          \pr{\max_{j=1}^{m} \poi{\mu_j} \geq \ell} \geq
          \pr{\poi{\mu_1} \geq \ell} \geq
          \pr{\poi{\mu_1} \geq (1 - \delta)\mu_1} \geq 1 - \delta^3.
        \end{align}
    \end{enumerate}
    Combining these two cases finishes the proof.
  \end{proof}

  \subsection{\ref{case:3}}
{
  \begin{restatable}[\ref{case:3}]{lemma}{concentrationthree}
    \label{lem:concentration3}
    Suppose $\delta \in (0,1/10]$, $\frac{1}{m^{\delta}} < \mu \leq \frac{1}{2^{1/\delta+1}} \log m$,
    and $m \geq 2^{\frac{2}{\delta}\log \frac{2}{\delta}}$.
    Define transition point $t_3 = t_3(m,\mu) = \frac{\log m}{\log \frac{1}{\mu} + \log\log m}$.
    Then for any random variable $\bX$ on $\mathbb{N}$ we have
    {\rm
      \begin{align}\label{eq:c31}
        (1 - 4\delta) \max\setof{t_3,\bX}\, \leq\, \max\setof{\MM(m,\mu),\bX}
        \,\leq\,
        (1 + 14\delta) \max\setof{t_3,\bX},
      \end{align}
    }
    and
    {\rm
      \begin{align}\label{eq:c32}
        \ex{ \max\setof{\MM(m,4\mu),\bX} }
        \ \leq\ 
        \kh{1 + 20\delta} \ex{ \max\setof{\MM(m,\mu),\bX} }.
      \end{align}
    }
  \end{restatable}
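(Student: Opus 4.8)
Since the lemma concerns i.i.d.\ copies of $\poi{\mu}$, the plan is to follow the same two-stage route used for \ref{case:2} (Lemmas~\ref{lem:concentration2} and~\ref{lem:scaling2}), but now in the regime where $\mu$ is \emph{small}: first establish a concentration statement saying $\MM(m,\mu)$ is sharply concentrated near $t_3$, and then lift it to the $\max\setof{\cdot,\bX}$ form via identity~(\ref{eq:maxXY}); the scaling bound~(\ref{eq:c32}) will then follow by reapplying the concentration statement with $4\mu$ in place of $\mu$ (the factor $4$ is what later allows sandwiching the actual loads $\poi{\mu_j}$, which lie between $\poi{\mu/4}$ and $\poi{4\mu}$). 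The one new conceptual ingredient is that the transition point is no longer controlled by the largest mean (as $t_2$ was in \ref{case:2}); writing $u = \log m/\mu$ so that $t_3 = \tfrac{\log m}{\log u}$, the value $t_3$ is essentially the root of $\kh{e\mu/t}^t = 1/m$, i.e.\ the place where $m\pr{\poi{\mu}\ge t}$ crosses $1$, and the crux is to show $t_3$ approximates this root up to a $1+O(\delta)$ factor.

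First I would use the hypotheses to locate $t_3$. From $\mu\le 2^{-1/\delta-1}\log m$ one gets $\log u \ge (1/\delta+1)\log 2$, from $\mu > m^{-\delta}$ one gets $\log(1/\mu) < \delta\log m$, and from $m\ge 2^{(2/\delta)\log(2/\delta)}$ one gets that $\log\log m$ is dwarfed by $\log u$; together these give, say, $t_3\in[\tfrac{1}{2\delta},\tfrac{2\delta}{\log 2}\log m]$ and $\mu/t_3$ bounded by a small constant. The lower bound on $t_3$ makes the floor/ceiling losses negligible, the upper bound makes $e^{t_3}$ only a tiny power of $m$, and $\mu/t_3\ll1$ makes the pmf of $\poi{\mu}$ drop super-geometrically for $k\ge t_3$, so that $\pr{\poi{\mu}\ge k}\le 2\pr{\poi{\mu}=k}$ there. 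Then I would bound the two tails: setting $r_3=\ceil{(1+\alpha\delta)t_3}$ and $\ell_3=\floor{(1-\beta\delta)t_3}$ for suitable constants $\alpha,\beta$ (the claim being that these can be taken so that $1+\alpha\delta\le1+14\delta$ and $1-\beta\delta\ge1-4\delta$ once floor/ceiling losses are folded in), write $\pr{\poi{\mu}=k}=\Theta(k^{-1/2})e^{-\mu}(e\mu/k)^k$ by Stirling (Proposition~\ref{prop:stirling}), substitute the definition of $t_3$, and check that $m\pr{\poi{\mu}=r_3}=m^{-\Omega(\delta)}$ while $m\pr{\poi{\mu}=\ell_3}=m^{\Omega(1)}$. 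The super-geometric decay then yields $\sum_{k\ge r_3}m\pr{\poi{\mu}\ge k}\le\delta$ and $\pr{\MM(m,\mu)\ge r_3}\le\delta^2$, while the lower estimate gives $\pr{\MM(m,\mu)\ge\ell_3}=1-(1-\pr{\poi{\mu}\ge\ell_3})^m\ge1-\delta$.

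With the two tail bounds in hand, I would finish exactly as in the proof of Lemma~\ref{lem:scaling2}: they imply that for every integer $x$,
\[
  (1-4\delta)\max\setof{x,t_3}\ \le\ x+\sum_{y=x+1}^{\infty}\pr{\MM(m,\mu)\ge y}\ \le\ (1+14\delta)\max\setof{x,t_3},
\]
and feeding this into~(\ref{eq:maxXY}) with $\bY=\MM(m,\mu)$ yields~(\ref{eq:c31}). For~(\ref{eq:c32}), I would note that passing from $\mu$ to $4\mu$ only moves the transition point to $t_3(m,4\mu)=\tfrac{\log m}{\log u-\log 4}=(1+O(\delta))\,t_3(m,\mu)$ (again because $\log u=\Omega(1/\delta)$) and that the hypotheses still hold for $4\mu$; chaining~(\ref{eq:c31}) for $4\mu$, the bound $t_3(m,4\mu)\le(1+O(\delta))t_3(m,\mu)$, and~(\ref{eq:c31}) for $\mu$ in reverse then gives $\ex{\max\setof{\MM(m,4\mu),\bX}}\le\frac{(1+14\delta)(1+O(\delta))}{1-4\delta}\ex{\max\setof{\MM(m,\mu),\bX}}\le(1+20\delta)\ex{\max\setof{\MM(m,\mu),\bX}}$ when $\delta\le1/10$.

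The hard part will be the upper-tail estimate, namely certifying that $t_3$ sits within a $1+O(\delta)$ factor of the true transition point of $\MM(m,\mu)$. That transition point solves $\kh{e\mu/t}^t=1/m$ and exceeds $t_3=\tfrac{\log m}{\log u}$ by roughly a factor $\tfrac{\log u}{\log u-\log\log u-O(1)}$; controlling this — together with keeping the Stirling corrections ($e^{-\mu}$, the $\sqrt{k}$, and the $e$ inside $(e\mu/k)^k$) and the floor/ceiling roundings inside $1\pm O(\delta)$ so as to land on the stated constants $4$ and $14$ — is exactly what forces the hypothesis $\mu\le 2^{-1/\delta-1}\log m$ (equivalently $\log u=\Omega(1/\delta)$), and it accounts for essentially all of the arithmetic in the proof. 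The lower-tail bound, the lift through~(\ref{eq:maxXY}), and the scaling deduction are, by contrast, routine once the transition-point estimate is established, and they mirror the corresponding steps in the proof of Lemma~\ref{lem:scaling2}.
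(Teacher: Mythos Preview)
Your proposal is correct and follows essentially the same route as the paper: locate $t_3$ using the hypotheses, apply Stirling to bound $\pr{\poi{\mu}=\ell}$ from below and $\pr{\poi{4\mu}=r}$ from above, convert these to tail bounds on $\MM$, and push through~(\ref{eq:maxXY}). Two small economies in the paper's execution are worth noting: it simply takes $\ell=\floor{t_3}$ (no additional $1-\beta\delta$ shrinkage is needed, since $t_3\ge 1/(2\delta)$ already makes the floor loss $O(\delta)t_3$), and rather than proving~(\ref{eq:c31}) for $\mu$ and then re-applying it with $4\mu$ and chaining, it proves the \emph{upper}-tail estimate directly for $\poi{4\mu}$ at $r=\ceil{(1+10\delta)t_3}$ --- this single estimate then yields both the upper half of~(\ref{eq:c31}) (via stochastic domination $\poi{\mu}\preceq\poi{4\mu}$) and~(\ref{eq:c32}), sidestepping your need to verify that ``the hypotheses still hold for $4\mu$.''
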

}

  \begin{proof}
    Let $t_3 = t_3(\mu)$ defined above
    and $t_3' = t_3(4\mu) = \frac{\log m}{\log \frac{1}{\mu} - 2 + \log\log m}$.
    Note that we have the identity
    \begin{align}\label{eq:logoverloglog}
      \frac{\log m}{\log \frac{1}{\mu} + \log\log m}
      =
      \mu\cdot \frac{\log m^{1/\mu}}{\log\log m^{1/\mu}}.
    \end{align}
    Let $\ell = \floor{t_3}$ and $r = \ceil{(1 + 10\delta)t_3}$.
    Since $\mu \leq \frac{1}{2^{1/\delta+1}} \log m$
    we know that $t_3 \leq \frac{\log m}{1/\delta + 1} \leq \delta \log m$.

    Since $m \geq 2^{\frac{2}{\delta}\log \frac{2}{\delta}}$,
    we have $\log \log m \leq \delta \log m$.
    This coupled with $\mu \geq \frac{1}{m^{\delta}}$ gives
    us $t_3 \geq \frac{1}{2\delta}$.
    Hence $\ell \geq t_3 - 1 \geq (1 - 2\delta)t_3$
    and $r\leq (1 + 10\delta)t_3 + 1 \leq (1 + 12\delta)t_3$.

    By $\mu \leq \frac{1}{2^{1/\delta + 1}} \log m$ we have
    $t_3' \leq (1 + 1.5\delta) t_3$.
    By the range of $\mu$ we can conclude that
    $\ell > 8\mu$.
    Then we can write
    \begin{align}\label{eq:prt-1}
      \prob{}{\poi{\mu} = \ell}
      = &e^{-\mu} \frac{\mu^{\ell}}{\ell!}
      \geq \frac{e^{-\mu + \ell - 1}}{\ell} \cdot \frac{\mu^{\ell}}{\ell^\ell}
      \geq 
      \frac{e^{-\mu + t_3 - 2}}{t_3}\cdot
      \kh{\frac{\mu}{t_3}}^{t_3} \notag \\
      = &
      \frac{e^{-\mu + t_3 - 2}}{t_3}\cdot
      \frac{1}{m^{1 - \frac{\log\log\log m^{1/\mu}}{\log\log m^{1/\mu}}}}
      \geq
      \frac{1}{m^{1 - \frac{\log\log\log m^{1/\mu}}{\log\log m^{1/\mu}}}}.
    \end{align}
    Here the first inequality on the first line follows from Stirling (Proposition~\ref{prop:stirling}).
    The first equality on the second line follows from~(\ref{eq:logoverloglog}).
    The last inequality follows from the ranges of $\mu$ and $m$.

    Also we have
    \begin{align}
      \pr{\poi{4\mu} = r} = & e^{-4\mu} \frac{(4\mu)^r}{r!}
      \leq e^{-4\mu+r-1} \frac{(4\mu)^r}{r^r}
      \leq e^{-4\mu+r-1} \kh{\frac{4\mu}{(1 + 10\delta)t_3}}^{(1 + 10\delta)t_3} \notag \\
      \leq & e^{-4\mu+r-1} \kh{\frac{\mu}{t_3'}}^{(1 + 8\delta)t_3'}
      = e^{-4\mu+r-1} \frac{1}{m^{\kh{1 - \frac{\log\log\log m^{1/(4\mu)}}{\log\log m^{1/(4\mu)}}}(1 + 8\delta) }}
      \notag \\
      \leq & e^{-4\mu+r-1} \frac{1}{m^{1 + 6\delta}}
      \leq \frac{m^{4\delta}}{m^{1 + 6\delta}}
      = \frac{1}{m^{1 + 2\delta}}.
    \end{align}
    Here the first inequality on the first line follows from Stirling.
    The first inequality on the second line follows
    from $(1 + 10\delta)t_3 \geq (1 + 8\delta)(1 + 1.5\delta)t_3 \geq (1 + 8\delta)t_3'$.
    The equality on the second line follows from~(\ref{eq:logoverloglog}).
    The first inequality on the third line follows from
    $\mu \leq \frac{1}{2^{1/\delta + 1}} \log m$.
    The second inequality on the third line follows from
    that $r \leq 2 t_3 \leq 2 \delta \log m$.

    Now we can bound the tail probability of the max at $\ell$ from below by
    \begin{align}\label{eq:lower3}
      \pr{\max_{j=1}^m \poi{\mu} \geq \ell}
      = & 1 - \kh{1 - \pr{\poi{\mu} \geq \ell}}^{m}
      \geq 1 - \kh{1 - \pr{\poi{\mu} = \ell}}^{m} \notag \\
      \geq & 1 - \kh{1 - \frac{1}{m^{1 - \frac{\log\log\log m^{1/\mu}}{\log\log m^{1/\mu}}}}}^m
      \geq 1 - \exp\setof{- m^{ \frac{\log\log\log m^{1/\mu}}{\log\log m^{1/\mu}}}} \notag \\
      \geq & 1 - \delta
    \end{align}
    where the last inequality is by $\mu \geq \frac{1}{m^{\delta}} > \frac{\log m}{m}$
    and thus $m^{\frac{\log \log \log m^{1/\mu}}{\log\log m^{1/\mu}}} >
    \log m > \ln(1/\delta)$.

    We can also bound the tail probability at $r$ from above by union bound
    and geometric series:
    \begin{align}
      \pr{\max_{j=1}^m \poi{4\mu} \geq r}
      \leq & 2m \pr{\poi{4\mu} = r}
      \leq
      \frac{2}{m^{2\delta}}
      \leq \delta.
    \end{align}
    Since $r > t_3 > 8\mu$, once again by geometric series and union bound we have
    \begin{align}\label{eq:upper3}
      \sum_{k=r}^{\infty} \pr{\max_{j=1}^m \poi{4\mu} \geq k}
      \leq & 4m \pr{\poi{4\mu} = r}
      \leq 2\delta.
    \end{align}
    Applying these two bounds~(\ref{eq:lower3}) and~(\ref{eq:upper3}) to~(\ref{eq:maxXY})
    finishes the proof.
  \end{proof}
{
  \subsection{\ref{case:4}}
  \begin{restatable}[\ref{case:4}]{lemma}{concentrationfour}
    \label{lem:concentration4}
    Suppose $\delta \in (0,1/10]$,
    $\frac{4\log m}{m} < \mu \leq \frac{2}{m^{\delta}}$,
    $m \geq 2^{100/\delta^2}$,
    and $\mu_1,\ldots,\mu_m \in [\mu/4,4\mu]$.
    Define transition point $t_4 = t_4(m,\mu) = \ceil{\gamma_4(m,\mu)}$ where
    $\gamma_4(m,\mu) = \frac{\log m}{\log \frac{4}{\mu} + \log\log m}$.
    Let $\bW$ be a Bernoulli random variable taking values on $t_4-1$ and $t_4$
    where $\pr{\bW = t_4 - 1} = \prod_{j=1}^{m} \pr{\poi{\mu_j} \leq t_4 - 1}$.
    Then for any random variable $\bX$ on $\mathbb{N}$ we have
    \begin{align}\label{eq:concentration4}
      \hspace{-10pt}
      (1 - 5\delta) \ex{\max\setof{\bW,\bX}}
      \leq \ex{\max\setof{\max_{j=1}^{m} \poi{\mu_j}, \bX}}
      \leq
      (1 + 16\delta) \ex{\max\setof{\bW,\bX}},
    \end{align}
    and
    \begin{align}\label{eq:scaling4}
      \ex{\max\setof{\max_{j=1}^{m} \poi{(1 + \delta)\mu_j}, \bX}}
      \leq
      (1 + 16\delta)
      \ex{\max\setof{\max_{j=1}^{m} \poi{\mu_j}, \bX}}.
    \end{align}
  \end{restatable}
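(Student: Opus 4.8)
\noindent The plan is to show that $\max_{j=1}^{m}\poi{\mu_j}$ is so tightly concentrated on the pair of consecutive integers $\setof{t_4-1,t_4}$ that the Bernoulli proxy $\bW$---which, since $\pr{\bW=t_4-1}=\prod_{j}\pr{\poi{\mu_j}\le t_4-1}=\pr{\max_{j=1}^{m}\poi{\mu_j}\le t_4-1}$, is nothing but $\max_{j=1}^{m}\poi{\mu_j}$ ``clamped'' into $[t_4-1,t_4]$---represents it up to a factor $1\pm O(\delta)$ inside any expression $\ex{\max\setof{\cdot,\bX}}$, and then that the whole picture survives dilating every $\mu_j$ by $1+\delta$. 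All comparisons go through the identity~(\ref{eq:maxXY}), i.e.\ by comparing the tail sums $x+\sum_{y>x}\pr{\,\cdot\ge y}$ for each integer $x$. As preliminaries one checks, from the ranges of $\mu$ and $m$, that $1<\gamma_4(m,\mu)<1/\delta$ (hence $2\le t_4\le 1/\delta+1$, so $(1+\delta)^{t_4}\le e$), and---by the same manipulation as~(\ref{eq:logoverloglog}), now with $4/\mu$ in place of $1/\mu$---the key estimate that $\kh{4\mu/\gamma_4}^{\gamma_4}$ equals $1/m$ up to a $(\log m)^{O(1/\delta)}=m^{o(1)}$ factor; one also uses throughout that $x\mapsto(a/x)^x$ is decreasing on $[1,\infty)$ when $a=O(\mu)$ is tiny, together with Stirling (Proposition~\ref{prop:stirling}).

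\emph{Concentration, proving~(\ref{eq:concentration4}).} A union bound gives $\pr{\max_{j=1}^{m}\poi{\mu_j}\ge y}\le m\,\pr{\poi{4\mu}\ge y}$, and for $y\ge t_4+1>\gamma_4\gg 4\mu$ the right-hand side decays geometrically; with Stirling and the key estimate this gives $\sum_{y\ge t_4+1}\pr{\max_{j=1}^{m}\poi{\mu_j}\ge y}=O\!\kh{m\,\pr{\poi{4\mu}=t_4+1}}=m^{-\delta+o(1)}$, negligible since $m\ge 2^{100/\delta^2}$. For the lower tail, $\pr{\max_{j=1}^{m}\poi{\mu_j}\ge t_4-1}\ge 1-\exp\kh{-\sum_{j}\pr{\poi{\mu_j}\ge t_4-1}}$, and since $\mu_j\ge\mu/4$ and $t_4-1<\gamma_4$, Stirling plus the key estimate give $\sum_{j}\pr{\poi{\mu_j}\ge t_4-1}\ge m\,\pr{\poi{\mu/4}=t_4-1}=\Omega(\delta^2\log m)$, so this deficit is negligible too. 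Plugging these facts and the identity $\pr{\bW=t_4}=\pr{\max_{j=1}^{m}\poi{\mu_j}\ge t_4}$ into~(\ref{eq:maxXY}), the difference $\ex{\max\setof{\max_{j=1}^{m}\poi{\mu_j},\bX}}-\ex{\max\setof{\bW,\bX}}$ lies between $-(t_4-1)\,\pr{\max_{j=1}^{m}\poi{\mu_j}\le t_4-2}$ and $\sum_{y\ge t_4+1}\pr{\max_{j=1}^{m}\poi{\mu_j}\ge y}$, both negligible; since $\ex{\max\setof{\bW,\bX}}\ge t_4-1\ge 1$, both inequalities of~(\ref{eq:concentration4}) follow.

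\emph{Scaling, proving~(\ref{eq:scaling4}).} The one genuinely nontrivial ingredient is that for every integer $k\ge 1$,
\[
  \pr{\textstyle\max_{j=1}^{m}\poi{(1+\delta)\mu_j}\ge k}-\pr{\textstyle\max_{j=1}^{m}\poi{\mu_j}\ge k}\ \le\ O(\delta k).
\]
To prove this, write $\pr{\max_{j=1}^{m}\poi{\nu_j}\le k-1}=\exp\kh{\sum_{j}f_{k-1}(\nu_j)}$ with $f_{k-1}$ as in Proposition~\ref{prop:concave}; concavity together with the identity $f_{k-1}'(\nu)=-\pr{\poi{\nu}=k-1}/\pr{\poi{\nu}\le k-1}$, the fact that $\poi{(1+\delta)\mu_j}$ is tiny (so $\pr{\poi{(1+\delta)\mu_j}\le k-1}\ge\tfrac12$ and $\pr{\poi{(1+\delta)\mu_j}=k-1}\le e\,\pr{\poi{\mu_j}=k-1}$, using $(1+\delta)^{k-1}\le e$), and the recursion $\mu_j\pr{\poi{\mu_j}=k-1}=k\,\pr{\poi{\mu_j}=k}$ give $\sum_{j}\kh{f_{k-1}((1+\delta)\mu_j)-f_{k-1}(\mu_j)}\ge-O(\delta k)\sum_{j}\pr{\poi{\mu_j}\ge k}$; combining with $\prod_{j}\pr{\poi{\mu_j}\le k-1}\le\exp\kh{-\sum_{j}\pr{\poi{\mu_j}\ge k}}$ and $\sigma e^{-\sigma}\le 1/e$ finishes it. Using this with $k=t_4$, together with $\pr{\max_{j=1}^{m}\poi{(1+\delta)\mu_j}\ge t_4-1}\ge\pr{\max_{j=1}^{m}\poi{\mu_j}\ge t_4-1}$ (stochastic dominance) and $\sum_{y\ge t_4+1}\pr{\max_{j=1}^{m}\poi{(1+\delta)\mu_j}\ge y}=m^{-\delta+o(1)}$ (the upper-tail argument again, valid because $(1+\delta)4\mu$ is still tiny and $t_4+1>\gamma_4$), I compare the tail sums case by case on $x$: for $x\ge t_4$ the scaled sum is at most $x+o(1)\le(1+16\delta)x$, and for $x\le t_4-1$ the scaled sum is at most $(t_4-1)+\pr{\max_{j=1}^{m}\poi{\mu_j}\ge t_4}+O(\delta t_4)$ while the unscaled one is at least $(t_4-1)+\pr{\max_{j=1}^{m}\poi{\mu_j}\ge t_4}-o(1)$; since $t_4\ge 2$, the extra $O(\delta t_4)$ is at most $16\delta$ times the latter, so the ratio is at most $1+16\delta$. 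Feeding this into~(\ref{eq:maxXY}) proves~(\ref{eq:scaling4}).

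\emph{Where the difficulty lies.} The concentration half is a routine union-bound-plus-Stirling computation; the delicate point is the scaling half in the regime where $t_4$ is as large as $\Theta(1/\delta)$. There, dilating all means by $1+\delta$ multiplies each per-machine tail $\pr{\poi{\mu_j}\ge t_4}$ by as much as $(1+\delta)^{t_4}\approx e$, a \emph{constant} factor, so the probability that the maximum reaches level $t_4$ can genuinely jump by $\Omega(1)$. What rescues us is that such a jump moves the expectation of the $\setof{t_4-1,t_4}$-valued proxy by only an additive $O(1)=O(\delta t_4)$, which is $O(\delta)$ relative to its value $\ge t_4-1$; upgrading ``$O(1)$'' to the sharper ``$O(\delta t_4)$'' in the displayed tail bound is exactly where log-concavity of Poisson CDFs (Proposition~\ref{prop:concave}) is indispensable, and is the step to carry out most carefully.
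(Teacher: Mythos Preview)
Your concentration half is essentially the paper's argument: bound the lower tail via $\pr{\max_j\poi{\mu_j}\ge t_4-1}\ge 1-\exp(-m\,\pr{\poi{\mu/4}=t_4-1})$ and the upper tail via union bound plus Stirling to get $\sum_{y\ge t_4+1}\pr{\max_j\poi{\mu_j}\ge y}=m^{-\Omega(\delta)}$, then feed both into~(\ref{eq:maxXY}). (One quibble: the quantity $m\,\pr{\poi{\mu/4}=t_4-1}$ is not exactly $\Omega(\delta^2\log m)$ as you write---the paper only gets $\ge \log^{1/2}m$---but either bound is far more than enough once $m\ge 2^{100/\delta^2}$.)

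For the scaling half you take a genuinely different route. The paper observes that the function $g(x)=1-\prod_{j}(1-a_jx)$, with $a_j=\pr{\poi{\mu_j}=t_4}$, is concave with $g(0)=0$, hence $g(\alpha x)\le\alpha g(x)$; since $\pr{\poi{(1+\delta)\mu_j}=t_4}\le(1+\delta)^{t_4}a_j$, this yields (modulo the negligible tail beyond $t_4$) $\pr{M'\ge t_4}\le(1+\delta)^{t_4}\pr{M\ge t_4}$, so the additive jump at level $t_4$ is at most $(1+\delta)^{t_4}-1=O(\delta t_4)$. You instead work directly with the log-concavity of the CDF (Proposition~\ref{prop:concave}): differentiating $f_{k-1}$ and invoking the Poisson recursion $\mu_j\pr{\poi{\mu_j}=k-1}=k\,\pr{\poi{\mu_j}=k}$ gives $\sum_j\bigl(f_{k-1}((1+\delta)\mu_j)-f_{k-1}(\mu_j)\bigr)\ge -O(\delta k)\sum_j\pr{\poi{\mu_j}\ge k}$, and combining with $\sigma e^{-\sigma}\le 1/e$ yields the clean additive bound $\pr{M'\ge k}-\pr{M\ge k}\le O(\delta k)$ for every $k$. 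Both arguments arrive at the same inequality at level $t_4$; the paper's is a two-line trick, while yours is more systematic, makes the dependence on Proposition~\ref{prop:concave} explicit, and gives a level-uniform bound that the paper does not state (though it is not needed here).
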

}
  \begin{proof}
    Since $\mu > \frac{4 \log m}{m}$, we have $\gamma_4(\mu) > 1$
    and $t_4 \geq 2$.
    Since $\mu \leq \frac{2}{m^{\delta}}$,
    we have $\gamma_4(\mu) \leq 1/\delta$ and
    $t_4 \leq 1/\delta + 1$.
    We first lower bound the probability that $\poi{\mu/4}$ equals
    $t_4 - 1$:
    \begin{align}\label{eq:prt4-1}
      \prob{}{\poi{\mu/4} = t_4 - 1}
      = &e^{-\mu/4} \frac{(\mu/4)^{t_4 - 1}}{(t_4 - 1)!}
      \geq \frac{e^{-\mu/4 + t_4 - 2}}{t_4 - 1} \cdot \frac{(\mu/4)^{t_4 - 1}}{(t_4 - 1)^{t_4 - 1}}
      \geq 
      \frac{e^{-\mu/4 + \gamma_4 - 2}}{\gamma_4}\cdot
      \kh{\frac{\mu/4}{\gamma_4}}^{\gamma_4} \notag \\
      = &
      \frac{e^{-\mu/4 + \gamma_4 - 2}}{\gamma_4}\cdot
      \frac{1}{m^{1 - \frac{\log\log\log m^{4/\mu}}{\log\log m^{4/\mu}}}}
      \geq
      \frac{1}{m^{1 - \frac{\log\log\log m^{4/\mu}}{2\log\log m^{4/\mu}}}}.
    \end{align}
    Here the first inequality on the first line follows from Stirling (Proposition~\ref{prop:stirling}).
    The first equality on the second line follows from
    $\frac{\log m}{\log \frac{2}{\mu} + \log\log m} = \frac{\mu}{2}\cdot \frac{\log m^{2/\mu}}{\log\log m^{2/\mu}}$.
    The last inequality holds since $\mu/4 > \frac{\log m}{m}$ and therefore
    $m^{\frac{\log\log\log m^{4/\mu}}{2\log\log m^{4/\mu}}} > \log^{1/2} m
    > \frac{e^{-\mu/4 + \gamma_4 - 2}}{\gamma_4}$.

    We then upper bound the probability that $\poi{\mu/2}$ equals $t_4 + 1$:
    \begin{align}
      \pr{\poi{\mu/4} = t_4 + 1} = & e^{-\mu/4} \frac{(\mu/4)^{t_4 + 1}}{(t_4 + 1)!}
      \leq e^{-\mu/4+t_4} \frac{(\mu/4)^{t_4 + 1}}{(t_4 + 1)^{t_4 + 1}}
      \leq e^{-\mu/4+t_4} \kh{\frac{\mu/4}{\gamma_4}}^{\gamma_4 + 1} \notag \\
      \leq & e^{-\mu/4+t_4} \kh{\frac{\mu/4}{\gamma_4}}^{(1 + \delta)\gamma_4}
      = e^{-\mu/4+t_4} \frac{1}{m^{\kh{1 - \frac{\log\log\log m^{4/\mu}}{\log\log m^{4/\mu}}}(1 + \delta) }}
      \notag \\
      \leq & e^{-\mu/4+t_4} \frac{1}{m^{1 + \delta/2}}
      \leq \frac{1}{m^{1 + \delta/4}}.
    \end{align}
    Here the first inequality on the first line follows from Stirling.
    The first inequality on the second line follows
    from $\gamma_4 \leq 1/\delta$.
    The equality on the second line follows from
    $\frac{\log m}{\log \frac{4}{\mu} + \log\log m} = \frac{\mu}{4}\cdot \frac{\log m^{4/\mu}}{\log\log m^{4/\mu}}$.
    The first inequality on the third line follows from
    $\mu \leq \frac{2}{m^\delta} < 1$.
    The second inequality on the third line follows from
    that $t_4 \leq 1/\delta + 1$ and $m^{\delta/4} \geq 2^{25/\delta}$.

    Now by the definition of Poisson probability function
    we have
    \begin{align}
      \pr{\poi{4\mu} = t_4 + 1} \leq
      16^{t_4 + 1} \cdot \pr{\poi{\mu/4} = t_4 + 1} \leq
      \frac{1}{m^{1 + \delta/8}}
    \end{align}
    where the last inequality again follows from $t_4 \leq 1/\delta + 1$
    and $m^{\delta/8} \geq 2^{12/\delta}$.

    To prove concentration, first note that
    \begin{align}\label{eq:lower}
      & \pr{\max_{j=1}^m \poi{\mu_j} \geq t_4 - 1}
      \geq
      \pr{\max_{j=1}^m \poi{\mu/4} \geq t_4 - 1} \notag \\
      = & 1 - \kh{1 - \pr{\poi{\mu/4} \geq t_4 - 1}}^{m}
      \geq 1 - \kh{1 - \pr{\poi{\mu/4} = t_4 - 1}}^{m} \notag \\
      \geq & 1 - \kh{1 - \frac{1}{m^{1 - \frac{\log\log\log m^{4/\mu}}{2\log\log m^{4/\mu}}}}}^m
      \geq 1 - \exp\setof{- m^{ \frac{\log\log\log m^{4/\mu}}{2\log\log m^{4/\mu}}}}
      \geq 1 - \delta
    \end{align}
    where the last inequality holds since $\mu/4 > \frac{\log m}{m}$
    and therefore $m^{\frac{\log \log \log m^{4/\mu}}{2\log\log m^{4/\mu}}} >
    \log^{1/2} m > \ln(1/\delta)$.
    Then also note that
    \begin{align*}
      & \pr{\max_{j=1}^m \poi{\mu_j} \geq t_4 + 1}
      \leq
      \pr{\max_{j=1}^m \poi{4\mu} \geq t_4 + 1}
      \leq 2m \pr{\poi{4\mu} = t_4 + 1}
      \leq
      \frac{2}{m^{\delta/8}}
      < \delta
    \end{align*}
    where the second inequality follows from union bound and geometric series.
    Then since $t_4 \geq 2 > 10 \mu$,
    once again by geometric series and union bound we have
    \begin{align}\label{eq:upper}
      & \sum_{k=t_4 + 1}^{\infty} \pr{\max_{j=1}^m \poi{\mu_j} \geq k}
      \leq 4m \pr{\poi{4\mu} = t_4 + 1}
      < 2 \delta.
    \end{align}
    These two bounds~(\ref{eq:lower}) and~(\ref{eq:upper}) in conjunction with~(\ref{eq:maxXY}) proves concentration.

    To prove the scaling result, consider the function
    $f(x) = 1 - \prod_{j=1}^{m} (1 - a_j x)$
    on $[0,(\max_{j=1}^{m} a_j)^{-1}]$,
    where all $a_j \geq 0$.
    The derivative of $f$ is nonnegative
    and decreasing,
    which implies that $f(\alpha x) \leq \alpha f(x)$
    for $\alpha \geq 1$.
    With this, we have
    \begin{align*}
      & 1 - \prod_{j=1}^{m} \kh{ 1 - \pr{\poi{(1 + \delta)\mu_j} = t_4} }
      \leq
      (1 + \delta)^{t_4}
      \kh{ 1 - \prod_{j=1}^{m} \kh{ 1 - \frac{\pr{\poi{(1 + \delta)\mu_j} = t_4}}{(1 + \delta)^{t_4}} } } \notag \\
      \leq & (1 + \delta)^{t_4}
      \kh{
        1 - \prod_{j=1}^{m} \kh{ 1 - \pr{\poi{\mu_j} = t_4} }
      }
      \leq (1 + 10 t_4 \delta)
      \kh{
        1 - \prod_{j=1}^{m} \kh{ 1 - \pr{\poi{\mu_j} = t_4} }
      }
    \end{align*}
    where the last inequality is by that
    $t_4 \leq 1/\delta + 1$ and thus
    $(1 + \delta)^{t_4} \leq e^{\delta t_4} \leq e^{1 + \delta} < 10$.
  \end{proof}

  \subsection{\ref{case:5}}

{
  \begin{restatable}[\ref{case:5}]{lemma}{concentrationfive}
    \label{lem:concentration5}
    Suppose $\delta \in (0,1/10]$, $\mu \leq \frac{8 \log m}{m}$,
    $m \geq 1000(1/\delta)\log^2 (1/\delta)$,
    and $\forall j, \mu_j\in[\mu/4,4\mu]$.
    Let $\bW$ be a 0/1 Bernoulli random variable with
    expectation $1 - \prod_{j=1}^{m} \pr{\poi{\mu_j} = 0} = 1 - e^{-m \mu}$.
    Then for any random variable $\bX$ on $\mathbb{N}$ we have
    \begin{align}\label{eq:concentration5}
      \ex{\max\setof{\bW,\bX}}\, \leq\, \ex{\max_{j=1}^{m} \poi{\mu_j}\,} \leq\,
      (1 + 10\delta) \ex{\max\setof{\bW,\bX}},
    \end{align}
    and
    \begin{align}\label{eq:scaling5}
      \ex{ \max\setof{\max_{j=1}^{m} \poi{(1 + \delta) \mu_j},\bX} }
      \ \leq\
      \kh{1 + 10 \delta} \ex{ \max\setof{\max_{j=1}^m \poi{\mu_j},\bX} }.
    \end{align}
  \end{restatable}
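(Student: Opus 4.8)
The plan is to exploit that in \ref{case:5} every mean is so small ($\mu_j\le 4\mu\le 32\log m/m$) that $\max_{j=1}^m\poi{\mu_j}$ is, up to a tiny error, the Bernoulli variable $\bW$. By independence, $\pr{\max_{j=1}^m\poi{\mu_j}\ge 1}=1-\prod_{j=1}^m\pr{\poi{\mu_j}=0}=1-e^{-m\mu}=\ex{\bW}$ \emph{exactly}, so the only discrepancy between $\max_{j=1}^m\poi{\mu_j}$ and $\bW$ lives on the events $\setof{\max_{j=1}^m\poi{\mu_j}\ge k}$ with $k\ge 2$. The quantity to control everywhere is thus
\[
  E \defeq \sum_{k\ge 2}\pr{\max_{j=1}^m\poi{\mu_j}\ge k},
\]
and by the union bound together with the elementary identity $\sum_{k\ge2}\pr{\poi{\lambda}\ge k}=\lambda-(1-e^{-\lambda})\le \lambda^2/2$ we get $E\le \tfrac12\sum_{j=1}^m\mu_j^2\le \tfrac12\big(\max_j\mu_j\big)\sum_j\mu_j\le 2m\mu^2$ (here the hypothesis $\mu_j\in[\mu/4,4\mu]$ enters). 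The single numeric fact underpinning the whole lemma is then $E\le c_0\,\delta\,(1-e^{-m\mu})$ for a small absolute constant $c_0$, which I would verify by splitting on whether $m\mu\ge 1$ (then $1-e^{-m\mu}\ge 1-e^{-1}$, and it suffices to bound $2m\mu^2=2\mu\cdot(m\mu)\le 128\log^2m/m$) or $m\mu<1$ (then $1-e^{-m\mu}\ge m\mu/2$, so it suffices that $8\mu$ is a small multiple of $\delta$), in both cases plugging in $\mu\le 8\log m/m$ and $m\ge 1000(1/\delta)\log^2(1/\delta)$, which forces $\log^2 m/m=O(\delta)$ because $\log m=O(\log(1/\delta))$ in this range.

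For the lower bound in~(\ref{eq:concentration5}) I would just use stochastic dominance: $\pr{\max_{j=1}^m\poi{\mu_j}\ge k}\ge\pr{\bW\ge k}$ for every $k\in\mathbb{N}$ (equality for $k\le 1$, trivial for $k\ge 2$), hence $\max_{j=1}^m\poi{\mu_j}\sd\bW$; applying the monotone map $t\mapsto\max\setof{t,x}$ for each fixed $x$ and then taking the max with the independent $\bX$ preserves dominance, so $\ex{\max\setof{\max_{j=1}^m\poi{\mu_j},\bX}}\ge\ex{\max\setof{\bW,\bX}}$. For the upper bound I would apply identity~(\ref{eq:maxXY}) and reduce to the termwise estimate, valid for every $x\in\mathbb{N}$,
\[
  x+\sum_{y>x}\pr{\max_{j=1}^m\poi{\mu_j}\ge y}\ \le\ (1+10\delta)\Big(x+\sum_{y>x}\pr{\bW\ge y}\Big).
\]
When $x\ge 1$ the right side is $x$ and the left side is at most $x+E\le x+c_0\delta\le(1+c_0\delta)x$; when $x=0$ the right side is $\ex{\bW}=1-e^{-m\mu}$ and the left side is $\ex{\max_{j=1}^m\poi{\mu_j}}=(1-e^{-m\mu})+E\le(1+c_0\delta)(1-e^{-m\mu})$. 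Both close by the numeric fact above.

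For the scaling bound~(\ref{eq:scaling5}) the same decomposition via~(\ref{eq:maxXY}) applies to $\max_{j=1}^m\poi{(1+\delta)\mu_j}$ --- note the dilated means still obey the lemma's hypotheses, since $(1+\delta)\mu\le 1.1\cdot 8\log m/m$ stays inside the (deliberately overlapping) range, so the bound $E'\le 2m((1+\delta)\mu)^2=O(m\mu^2)$ holds for the dilated instance too. For $x\ge 1$ the dilated left side is $\le x+E'\le(1+10\delta)x$, compared with $\ge x$ on the undilated side. For $x=0$ one needs $\ex{\max_{j=1}^m\poi{(1+\delta)\mu_j}}\le(1+10\delta)(1-e^{-m\mu})$; here I would write the left side as $(1-e^{-(1+\delta)m\mu})+E'$, bound $E'=O(m\mu^2)=O(\delta)(1-e^{-m\mu})$ by the numeric fact, and use the elementary inequality $1-e^{-(1+\delta)s}\le(1+\delta)(1-e^{-s})$ for all $s\ge 0$ (immediate: $g(s)\defeq(1+\delta)(1-e^{-s})-(1-e^{-(1+\delta)s})$ satisfies $g(0)=0$ and $g'(s)=(1+\delta)\big(e^{-s}-e^{-(1+\delta)s}\big)\ge0$). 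Putting these together keeps the dilated expectation below $(1+O(\delta))(1-e^{-m\mu})$, which with the concrete constants lands under $(1+10\delta)$.

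The approach is not conceptually hard --- in this regime the maximum is essentially a single coin flip --- so the only real work, and the main obstacle, is the numeric bookkeeping concentrated in the fact $E\le c_0\,\delta\,(1-e^{-m\mu})$: one must check that $\mu\le 8\log m/m$ and $m\ge 1000(1/\delta)\log^2(1/\delta)$ really drive $2m\mu^2$ (and the dilated $2m((1+\delta)\mu)^2$) below the slack the statement allows, simultaneously in the $m\mu\ge 1$ and $m\mu<1$ regimes and with a constant that survives dilation and the loss in $1-e^{-(1+\delta)m\mu}\le(1+\delta)(1-e^{-m\mu})$. This comes down to $\log^2m/m=O(\delta)$, which follows from the lower bound on $m$ since $\log m=\Theta(\log(1/\delta))$ there; chasing the constants so everything closes under the factor $(1+10\delta)$ is the (fiddly but routine) part, and is exactly why the overlapping-ranges convention in the case split is helpful.
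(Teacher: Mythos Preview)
Your proposal is correct and follows the same approach as the paper: both bound the tail sum $\sum_{k\ge 2}\pr{\max_j\poi{\mu_j}\ge k}=O(m\mu^2)=O(\log^2 m/m)$ via a union bound, use the concavity inequality $1-e^{-(1+\delta)s}\le(1+\delta)(1-e^{-s})$ for the $x=0$ term of the scaling bound, and then conclude termwise via identity~(\ref{eq:maxXY}). Your case split on $m\mu\gtrless 1$ to secure the \emph{multiplicative} estimate $E\le 10\delta(1-e^{-m\mu})$ at $x=0$ is in fact more careful than the paper, which only records the additive bound $E\le\delta$ before invoking~(\ref{eq:maxXY}); your identity $\sum_{k\ge 2}\pr{\poi{\lambda}\ge k}=\lambda-(1-e^{-\lambda})\le\lambda^2/2$ is also a cleaner route to $E\le 2m\mu^2$ than the paper's geometric-series estimate.
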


  \begin{proof}
    First we note that
    \begin{align*}
      & \pr{\max_{j=1}^{m} \poi{\mu_j} \geq 1}
      = 1 - \prod_{j=1}^{m} \pr{\poi{\mu_j} = 0} = 1 - e^{-m\mu} \geq m\mu. \\
      & \pr{\max_{j=1}^{m} \poi{(1 + \delta) \mu_j} \geq 1}
      = 1 - \prod_{j=1}^{m} \pr{\poi{(1 + \delta)\mu_j} = 0} = 1 - e^{-(1 + \delta)m\mu}
      \geq m\mu.
    \end{align*}
    Since the derivative of $1 - e^{-x}$ with respect to $x$ is
    decreasing on $[0,\infty)$,
    we have
    \begin{align}
      1 - e^{-(1 + \delta)m\mu} \leq (1 + \delta)(1 - e^{-m\mu}).
    \end{align}
    By union bound, geometric series, and that $\mu_j \leq 4\mu$ for all $j$ we have
    \begin{align*}
      \pr{\max_{j=1}^{m} \poi{\mu_j} \geq 2}
      \leq
      \pr{\max_{j=1}^{m} \poi{(1 + \delta) \mu_j} \geq 2} 
      \leq  10 m (1 + \delta)^2 \mu^2 \leq \frac{640 \log^2 m}{m}.
    \end{align*}
    Again by geometric series, we have
    \begin{align}
      \sum_{k=2}^{\infty} \pr{\max_{j=1}^{m} \poi{(1 + \delta) \mu_j} \geq k} \leq 15m(1 + \delta)^2 \mu^2 \leq
      \frac{960\log^2 m}{m}
    \end{align}
    where the last expression is at most $\delta$ by the condition on $m$.
    Now by~(\ref{eq:maxXY}) we have desired results.
  \end{proof}
}

\section{An Efficient Polynomial-time Approximation Scheme}\label{sec:ptas}

Our PTAS for stochastic load balancing is heavily inspired by the approach of \cite{AlonAWY98,JansenKV16} for the deterministic load balancing problem. Thus, we first begin with a recap of their approach. 

\subsection{Recap of the PTAS for deterministic load balancing} 
Consider any instance of  {\em deterministic}
load balancing where the
 job sizes are 
$\setof{\lambda_i}_{i=1}^{n}$ and we have $m$ machines -- the goal
is to find an assignment with smallest maximum load.
The algorithms in~\cite{AlonAWY98,JansenKV16}
proceed in two phases: In phase I, we assign 
``big'' jobs to separate machines.
Here big jobs are the
maximal set of jobs
whose  size is greater than the remaining average load. In other words, it is 
 the maximal set $B\subseteq [n]$
 satisfying that $\forall i\in B$,
$
\lambda_i > (\sum\nolimits_{i\notin B}\lambda_i)/(m - \sizeof{B}) \coloneqq \mu
$.
Exploiting the convexity of the objective function (i.e. the function $\max\setof{x_1,x_2,\ldots,x_m}$), \cite{AlonAWY98,JansenKV16} show that an optimum assignment (i) assigns the jobs in $B$ to their own separate machines; (ii) assigns the remaining jobs 
to the remaining machines in a way such that
each of these machines have a load  between $\mu/2$ and $2\mu$.

With this, we are only left with the problem of assigning the small jobs, i.e., the jobs not in $B$. A second key step here 
is to  round the sizes of the remaining jobs,
such that (i) the number of different job sizes
is now\footnote{Recall that
  $\tilde{O}(f)$ denotes $O(f \log^{c} f)$ for some constant $c$.} $\tilde{O}(1/\eps)$ and (ii) the potential number of different assignments to any single machine is $2^{\tilde{O}(1/\eps)}$.  Crucially, both these numbers are just dependent on the target error parameter $\eps$. With this rounding, \cite{AlonAWY98,JansenKV16} formulate the problem of finding an optimal assignment on the remaining (rounded) jobs as an 
integer linear program with $2^{\tilde{O}(1/\eps)}$ variables -- referred to as a  {\em configuration-IP}. The {configuration-IP}  can be solved in time
exponential in the number of variables and linear in the input length
using algorithms in~\cite{Lenstra83,Kannan87}.

\begin{theorem}[\cite{Kannan87}] \label{prop:ILPsolver}
  There is an algorithm $\IP$ that solves an integer linear program with $p$ variables in time
  $p^{O(p)} O(n)$ where $n$ is the length of input.
\end{theorem}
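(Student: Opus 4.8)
The plan is to prove this via the Lenstra--Kannan approach to integer programming in a fixed number of variables, namely ``branch on flat directions'' combined with lattice basis reduction; since this is a cited result, the task is really to assemble the ingredients, which I sketch below.

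First I would reduce the optimization version to a feasibility/search routine. The optimum of an integer linear program, when finite, has bit-length polynomial in the input length, so one can binary-search over candidate objective values; each step adds a single constraint and invokes a subroutine that decides whether $\setof{\xx\in\mathbb{Z}^p : \AA\xx\le\bb}$ is nonempty and, if so, returns a point of it. It therefore suffices to solve this latter problem in time $p^{O(p)}\,O(n)$.

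The core is a recursion on the polytope $P=\setof{\xx\in\mathbb{R}^p : \AA\xx\le\bb}$. After preprocessing (detecting infeasibility, passing to the affine hull when $P$ is not full-dimensional, and detecting unboundedness) we may assume $P$ is bounded and full-dimensional. Compute a L\"owner--John ellipsoid $E$ with $E\subseteq P\subseteq c_p\,E$ for a constant $c_p$ depending only on $p$, in time polynomial in $p$ and the bit-length. Write $E=\setof{\xx : \norm{M^{-1}(\xx-\xx_0)}\le 1}$, run the LLL basis reduction algorithm on the lattice $M^{-1}\mathbb{Z}^p$, and let $\vv$ be its first (short) vector with associated primitive integer direction $\cc\in\mathbb{Z}^p$. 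Then either (a) $E$, and hence $P$, is ``round enough'' relative to the lattice to be guaranteed to contain a lattice point, which we read off; or (b) $\cc$ certifies that $P$ is ``flat'', i.e.\ $\max_{\xx\in P}\cc^\top\xx-\min_{\xx\in P}\cc^\top\xx\le g(p)$ for the Kannan--Lov\'asz flatness bound $g(p)=O(p^{3/2})$. In case (b) the only hyperplanes meeting $P\cap\mathbb{Z}^p$ are $\setof{\cc^\top\xx=k}$ for the $O(g(p))$ integers $k$ in that interval, so after a unimodular change of coordinates sending $\cc$ to a coordinate vector we recurse on each of the $O(g(p))$ slices $P\cap\setof{\cc^\top\xx=k}$, each a polytope of dimension $p-1$. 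The recursion has depth $p$ and branching $O(g(i))$ at level $i$, hence $\prod_{i=1}^{p}O(g(i))=p^{O(p)}$ leaves; each node performs one LLL call, one ellipsoid computation, and coordinate bookkeeping, costing $\mathrm{poly}(p)\cdot O(n)$ once one controls the bit-length blow-up across levels. Multiplying gives $p^{O(p)}\,O(n)$, and the recursion returns a feasible integer point whenever one exists, which together with the binary search solves the optimization problem.

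I expect the main obstacle to be the flatness theorem, and specifically obtaining the \emph{polynomial} width bound $g(p)=O(p^{3/2})$ (Lenstra's original $2^{O(p)}$ bound also yields a correct algorithm, but with running time $2^{\Theta(p^2)}$, which is not good enough): any lattice-point-free convex body has lattice width bounded by a function of the dimension alone, and pinning this function down to a polynomial requires transference inequalities relating the successive minima of a lattice to those of its dual, together with a Minkowski-type volumetric argument. The other slightly delicate point is the bit-complexity accounting that keeps the per-node cost linear in $n$ across all $p$ recursion levels. Both are supplied by Kannan's analysis, which I would invoke.
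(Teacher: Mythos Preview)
The paper does not prove this statement at all: it is quoted as a black-box result from Kannan~\cite{Kannan87} and used only through its running-time guarantee. So there is no ``paper's own proof'' to compare your proposal to. What you have written is a faithful high-level sketch of Kannan's actual argument (L\"owner--John rounding, LLL, flatness, and recursion on slices), and it is essentially correct for the purpose of justifying the $p^{O(p)}$ dependence.

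Two minor remarks. First, the specific flatness constant $g(p)=O(p^{3/2})$ you cite is a later Kannan--Lov\'asz refinement; Kannan's 1987 paper had a somewhat weaker polynomial bound, but any polynomial $g(p)$ already gives $\prod_{i\le p} O(g(i))=p^{O(p)}$, so this does not affect the conclusion. Second, the claim that the per-node cost is \emph{linear} in the input length $n$ (rather than merely polynomial) is the one place where your sketch is hand-wavy: LLL and the ellipsoidal computations are naively $\mathrm{poly}(n)$, and getting the clean $O(n)$ factor requires the more careful bit-complexity accounting you allude to at the end. Since the paper only \emph{uses} the theorem as stated and does not re-prove it, this is not a gap in the context of the paper.
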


This leads to an overall running time of
$2^{2^{\tilde{O}(1/\eps)}} + O(n\log n)$~\cite{AlonAWY98},
where the running time $O(n\log n)$ comes from the pre-processing step of sorting the job sizes (to find the big jobs). 

\begin{theorem}[\cite{AlonAWY98}] \label{prop:detalg}
  There is an algorithm $\DS$ that
  given an instance of the load balancing problem with $n$ jobs and $m$
  machines where the jobs have deterministic sizes
  $\lambda_1,\lambda_2,\ldots,\lambda_n$,
  and a parameter $0 < \eps < 1$,
  outputs a job assignment
  whose maximum load
  is at most $(1 + \eps)$ of the maximum load
  of an optimum assignment.
  The algorithm runs in time $2^{2^{\tilde{O}(1/\eps)}} + O(n\log n)$.
\end{theorem}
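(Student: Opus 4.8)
The plan is to follow the two-phase ``big jobs, then configuration-IP'' template sketched in the recap above, which is precisely the approach of~\cite{AlonAWY98,JansenKV16}. In Phase~I I would sort the sizes in $O(n\log n)$ time and extract the maximal \emph{big set} $B\subseteq[n]$ satisfying $\lambda_i > (\sum_{i\notin B}\lambda_i)/(m-\sizeof{B})\eqqcolon\mu$ for all $i\in B$; one checks that sweeping a threshold over the sorted list makes $B$ well-defined and computable within the sort. The key structural step is to invoke convexity of $\max\setof{x_1,\ldots,x_m}$ (an exchange argument) to show there is an optimum assignment in which each job of $B$ sits alone on its own machine and every remaining machine carries load in $[\mu/2,2\mu]$; in particular $L^\ast\geq\mu$ and $L^\ast\geq\max_{i\in B}\lambda_i$. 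This reduces the problem to scheduling the small jobs $[n]\setminus B$ (all of size $\leq\mu$) onto $m'=m-\sizeof{B}$ machines, aiming for a makespan target $T\in[\mu,2\mu]$ up to a $(1+\eps)$ factor.

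In Phase~II I would first set aside the ``tiny'' jobs of size $<\eps\mu$, then round each remaining small job's size down to the nearest point of the geometric grid $\setof{\eps\mu(1+\eps)^k : 0\leq k\leq O(\tfrac1\eps\log\tfrac1\eps)}$; this only shrinks every machine load, hence the optimum, by a factor at most $1+\eps$, and leaves $\tilde O(1/\eps)$ distinct rounded sizes. Since a machine with load $\leq 2\mu$ can hold at most $2\mu/(\eps\mu)=2/\eps$ jobs of size $\geq\eps\mu$, the number of distinct single-machine \emph{configurations}---multisets of rounded sizes summing to at most $T$---is $\binom{\tilde O(1/\eps)+2/\eps}{2/\eps}=2^{\tilde O(1/\eps)}$.

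I would then write the configuration-IP: one variable $x_c$ per configuration $c$ counting how many machines use it, with constraints $\sum_c x_c\leq m'$ and, for each rounded size $s$, $\sum_c (\text{number of size-}s\text{ slots in }c)\,x_c\geq n_s$ where $n_s$ is the number of small jobs of rounded size $s$. Restricting to configurations of load $\leq T$ and binary-searching $T$ over the $\tilde O(1/\eps)$ relevant grid values yields, for the smallest feasible $T$, a near-optimal assignment of the rounded small jobs. The IP has $p=2^{\tilde O(1/\eps)}$ variables and input length $\tilde O(1/\eps)\cdot O(\log n)$, so $\IP$ (Theorem~\ref{prop:ILPsolver}) solves it in time $p^{O(p)}\cdot O(\log n)=2^{2^{\tilde O(1/\eps)}}\cdot O(\log n)$. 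From the solution one recovers an assignment of the small jobs, then greedily distributes the set-aside tiny jobs onto the currently least-loaded machines; a standard fractional/rounding argument shows this inflates any load by at most $\eps\mu\leq\eps L^\ast$. Undoing the rounding (a $1+\eps$ factor) and re-attaching the big jobs to their private machines gives makespan $(1+O(\eps))L^\ast$; rescaling $\eps$ finishes correctness, and the total running time is $2^{2^{\tilde O(1/\eps)}}+O(n\log n)$, the last term being the Phase~I sort.

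The main obstacle is the structural lemma: proving via convexity/exchange that some optimum isolates the big jobs and confines the remaining loads to $[\mu/2,2\mu]$, and then carefully choosing the geometric rounding, the tiny-job truncation threshold, and the per-machine job-count bound so that (i) the total loss is only $1+O(\eps)$ and (ii) the configuration count is genuinely $2^{\tilde O(1/\eps)}$, while arranging the binary search over $T$ so the IP is feasible exactly when a $(1+\eps)$-approximate schedule exists.
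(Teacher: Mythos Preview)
Your proposal is correct and follows exactly the approach of~\cite{AlonAWY98} as summarized in the paper's recap (Section~4.1): the paper itself does not prove this theorem but cites it as a black box, and your reconstruction---Phase~I big-job extraction via convexity, Phase~II geometric rounding to $\tilde O(1/\eps)$ sizes yielding $2^{\tilde O(1/\eps)}$ configurations, then the configuration-IP solved by Theorem~\ref{prop:ILPsolver}---matches that summary in both structure and running-time accounting.
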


The authors in~\cite{JansenKV16} then use a \emph{sparsification technique} to
show that the \emph{configuration-IP} has an optimum solution with a small support size,
which leads to an improved running time of $2^{O(1/\eps)\log^4(1/\eps)} + O(n \log n)$. Later by 
improving the runtime of solving the ILP, a running time of $2^{O(1/\eps)\log^2(1/\eps)} + O(n \log n)$ was achieved \cite{KlausILP}.

\subsection{Overview of our approach}
We now give an overview of our approach for the stochastic load balancing problem. 
Recall that we have $n$ jobs  and $m$ machines where the $i^{th}$ job  has size $\mathsf{Poi}(\lambda_i)$. 
Similar to the deterministic case~\cite{AlonAWY98,JansenKV16},
we define ``big'' jobs as
the maximal set of jobs
whose (expected) size is  greater than the remaining (expected) average load,
i.e. the maximal set $B\subseteq [n]$
satisfying that $\forall i\in B$
\begin{align}\label{eq:remload}
  \lambda_i > \frac{\sum\nolimits_{i\notin B}\lambda_i}{m - \sizeof{B}}.
\end{align}
By Proposition~\ref{prop:concave}, 
the objective function is convex
with respect to the machine loads 
(similar to~\cite{AlonAWY98,JansenKV16}). Thus,  
we assign the jobs in $B$ to separate machines (see Lines~\ref{line:big0}-\ref{line:big1} of Algorithm~\ref{alg:PTAS}). 

Assigning the small jobs (i.e., the jobs outside $B$) is however somewhat more complicated. 
As stated in Section~\ref{sec:techniques},
there are two principal difficulties in applying the
approaches from~\cite{AlonAWY98,JansenKV16}
to handling the remaining jobs.
One is to discretize the job sizes,
and the other is to formulate the problem of minimizing the expected maximum load
as an integer linear program.
The key to circumventing both these difficulties lies in the (technical) results on concentration and scaling of maximum of Poisson random variables presented in  Section~\ref{sec:overview}. To understand their role, 
let us begin with some notation. 
Let $m^{(1)} = m - |B|$ denote the number of the remaining machines.
Consider an assignment of the remaining jobs and 
let $\poi{\mu_1},\poi{\mu_2},\ldots,\poi{\mu_{m^{(1)}}}$
be the corresponding distributions of the machine loads.
Suppose $\mu_1 \geq \mu_2 \geq \ldots \geq \mu_{m^{(1)}}$
and let $\mu = (\sum_{j=1}^{m^{(1)}} \mu_j)/m^{(1)}$.
By an  argument similar to the deterministic case~\cite{AlonAWY98,JansenKV16}
(see Observation~\ref{obv3}),
we can restrict ourselves to the case when $\mu_j \in [\mu/2,2\mu]$ for all $j\in [m^{(1)}]$.
Let $\delta \in (0,1)$ be a target error parameter (roughly speaking, we set $\delta \approx \Theta(\epsilon)$). 

Our results in Section~\ref{sec:overview} first imply
that when $\mu$ is sufficiently large in terms of $1/\delta$ and $m^{(1)}$ (the condition of Lemma~\ref{lem:concentration1}),
or when $\mu$ is in a certain intermediate range but $m^{(1)}$ is large enough in terms of $1/\delta$
(the condition of Lemma~\ref{lem:concentration3}),
it suffices to find an assignment by running the deterministic load balancing algorithm in Theorem~\ref{prop:detalg}.
When $\mu$ does not satisfy the above conditions 
but $m^{(1)}$
is sufficiently large in terms of $1/\delta$,
we have the following dichotomy: 
    \begin{enumerate}
      \item$\max_{j=1}^{m^{(1)}} \poi{\mu_j}$  is concentrated
        within $(1 \pm O(\delta))$ of a certain transition point $t = t(\mu_1,\ldots,\mu_{m^{(1)}})$
        (Lemma~\ref{lem:scaling2}), or
      \item takes value $\ceil{t}-1$ or $\ceil{t}$ with high probability,
        where the transition point $t = t(m^{(1)},\mu)$ only depends on
        $m^{(1)}$ and $\mu$ (Lemmas~\ref{lem:concentration4} and~\ref{lem:concentration5}).
    \end{enumerate}
Further, in all of the cases,
\begin{equation}~\label{eq:Dilation-1}\ex{\max_{j=1}^{m} \poi{(1 + \delta)\mu_j}} \leq (1 + O(\delta)) \ex{\max_{j=1}^{m} \poi{\mu_j}} \end{equation}
    (Lemmas~\ref{lem:scaling2},~\ref{lem:concentration4}, and~\ref{lem:concentration5}).
Finally, when $m$ is sufficiently small in terms of $\delta$, 
we can use dynamic programming to get an efficient PTAS.

 Let us now see 
 how the above structural results are useful in algorithm design -- first of all, \eqref{eq:Dilation-1}
 immediately allows us to discretize the job sizes by rounding up their means to
the nearest integer power of $(1 + \delta)$, with proper handling of jobs with size
below a certain threshold\footnote{The
specific rounding scheme we use is identical to the one used in~\cite{JansenKV16}.}.
Once the job sizes are rounded, the existence of the transition points (rather, the precise definitions of these transition points in Section~\ref{sec:overview}) can be used to construct integer linear programs which can find a near optimal solution.  As an example, if we are in~\ref{case:4} as defined above, by Lemma~\ref{lem:concentration4} minimizing the expected maximum is the same as finding $\mu_1,\mu_2,\ldots,\mu_{m^{(1)}}$ such that the probability that the maximum load is $t_4$ is minimized. This finishes our overview of the PTAS. 

\subsection{Our PTAS}\label{sec:codes}

We give a full description of our efficient PTAS in Algorithm~\ref{alg:PTAS} as $\PTAS$,
which calls $\RI$ (Algorithms~\ref{alg:RI}), $\ILP$ (Algorithm~\ref{alg:ILP}), and
$\DP$ (Algorithm~\ref{alg:DP}) as subroutines.

$\PTAS$ first handles the ``big'' jobs
in the same way as deterministic case
(Lines~\ref{line:big0}-\ref{line:big1} of Algorithm~\ref{alg:PTAS}).
Let $\mu$ still denote the remaining average machine load (i.e. the RHS of~(\ref{eq:remload}))
and $m^{(1)}$ denote the number of the remaining machines.
$\PTAS$ then does one of the following
for the remaining jobs:
\begin{enumerate}
  \item When $\mu$ is large enough to meet the condition of~\ref{case:1},
    or $\mu$ and $m^{(1)}$ meet the condition of~\ref{case:3},
    $\PTAS$ directly uses the algorithm for the deterministic case as a blackbox
    (Lines~\ref{line:case1}-\ref{line:case11} of Algorithm~\ref{alg:PTAS}).
  \item When $\mu$ and $m^{(1)}$ meet the condition of~\ref{case:2},~\ref{case:4}, or~\ref{case:5},
    $\PTAS$ first rounds the sizes of the remaining jobs in the same way as~\cite{JansenKV16}
    (Line~\ref{line:rounded} of Algorithm~\ref{alg:PTAS}).
    Then for~\ref{case:2} it uses integer linear programming
    in conjunction with a binary search to find the smallest transition point $t_2$
    achievable by an assignment of the remaining jobs,
    where the ILPs have linear objective functions
    and configuration-IP from~\cite{AlonAWY98,JansenKV16} as feasibility constraints (Lines~\ref{line:case2}-\ref{line:case21} of Algorithm~\ref{alg:PTAS}).
    \ref{case:4} and~\ref{case:5} are also handled
    by integer linear programs (Lines~\ref{line:case4}-\ref{line:case41}
    and Lines~\ref{line:case5}-\ref{line:case51} of Algorithm~\ref{alg:PTAS} respectively).
  \item When none of the conditions of~\ref{case:1}\,-\,\ref{case:5} is met,
    namely $m^{(1)} \leq 2^{2^{O(1/\eps)}}$ and $\mu \leq O(\eps^{-2} \log m)$,
    $\PTAS$ finds an assignment by dynamic programming (Line~\ref{line:DP} of Algorithm~\ref{alg:PTAS}).
\end{enumerate}

We then briefly state how the subroutines called by $\PTAS$ function.
Roughly speaking, $\RI$ takes a multi-set of job sizes and a parameter $\delta$ as input and outputs a multi-set of job sizes such that the number of different job sizes only depends on $\delta$.
$\ILP$ takes the jobs and the number of machines $m$, and a function $f$ as input. The goal of $\ILP$ is to find an assignment with loads
$\mu_1,\ldots,\mu_{m}$ such that $\sum_{j=1}^{m} f(\mu_j)$ is minimized. 
$\DP$ takes the jobs and number of machine, and an error parameter $\eps$ as input and returns an assignment of jobs with at most $(1 +‌\eps)$ error with respect to an optimum assignment by a dynamic programming.

By using the algorithm in~\cite{Kannan87} (Theorem~\ref{prop:ILPsolver}) to solve the integer linear programs,
$\PTAS$ achieves a running time double exponential in $1/\eps$ and nearly-linear in $n$.
Note that although the algorithm in Theorem~\ref{prop:ILPsolver} needs integral coefficients, we can compute the coefficients with a high precision (inverse polynomial precision) which is sufficient to get
our desired approximation and does not affect our running time, so we don't get into the details.
We also note that while it is possible to use the sparsification technique
in~\cite{JansenKV16} to improve our running time of solving integer linear programs
to single exponential in $1/\eps$, our dynamic program for the case
when $1/\eps \geq \Omega(\log \log m)$ still takes time double exponential in $1/\eps$.

The performance of $\PTAS$ is characterized in Theorem~\ref{thm:main}.
The performances of $\RI$ and $\DP$ are characterized in
Lemmas~\ref{lem:rounding} and~\ref{lem:DP} respectively.
We do not give a separate lemma for $\ILP$ but
analyze it in our proofs directly. 

Note that Lemma~\ref{lem:rounding} below % from~\cite{JansenKV16}
only gives guarantees for
how the job sizes and individual machine loads change after rounding; the lemma itself does
not make assertions about the expected maximum load.
Instead, guarantees for the latter will follow from our scaling results in Section~\ref{sec:overview}.

\quad

\begin{algorithm}[H]
  \label{alg:RI}
  \caption{$\RI(\setof{\lambda_i}_{i=1}^{n},\mu, \delta)$}
  \Input{Job sizes $\{ \lambda_i \}_{i = 1}^{n}$, $\mu$ s.t. all $\lambda_i \leq \mu$, and $\delta\in(0,1)$}
  \Output{Rounded job sizes $\{ \lambda'_i \}_{i = 1}^{n'}$}

  $n' \gets 0$ and $S \gets 0$.\;
  \For{$i\gets 1$ \rm{to} $n$}{
    \If{$\lambda_i \geq \delta \mu$}{
      $n'\gets n' + 1$.\;
      $\nu \gets (1 + \delta)^k \delta \mu$
      where $k$ is the unique integer s.t. $(1+\delta)^{k-1}\delta\mu < \lambda_i \leq (1+\delta)^k \delta\mu$.\;
      $\lambda_{n'} \gets l \delta^2 \mu$
      where $l$ is the unique integer s.t. $(l - 1)\delta^2 \mu < \nu \leq l \delta^2 \mu$.
    }
    \Else{
      $S\gets S + \lambda_i$.
    }
  }
  $S^{\#}\gets k\delta \mu$
  where $k$ is the unique integer s.t. $(k-1)\delta\mu < S \leq k\delta\mu$.\;
  $\lambda'_i \gets \delta\mu$ for each $i=n'+1,\ldots,n'+S^{\#}/(\delta\mu)$
  and $n' \gets n' + S^{\#} / (\delta\mu)$.\;
  \Return $\{ \lambda'_i \}_{i = 1}^{n'}$
\end{algorithm}

\begin{lemma}[\cite{JansenKV16}]
  \label{lem:rounding}
  The algorithm $\setof{\lambda'_i}_{i=1}^{n'} = \RI(\setof{\lambda_i}_{i=1}^{n},\mu,\delta)$
  runs in time $O(n)$.
  Suppose all $\lambda_i \leq \mu$, $\delta \in (0,1)$,
  and $\mu = (\sum_{i=1}^{n} \lambda_i) / m$ for an integer $m$.
  The number of different sizes in $\setof{\lambda'_i}_{i=1}^{n'}$ is bounded by
  $O(\frac{1}{\delta} \log\frac{1}{\delta})$,
  and each $\lambda_i' = \delta \mu + k \delta^2 \mu$ for some $k\in \mathbb{Z}_{\geq 0},
  k\leq \frac{2}{\delta^2}$.
  $n'$ is bounded by $O(m/\delta)$.
  For any assignment of $\setof{\lambda_i}_{i=1}^{n}$
  to $m$ machines with loads $\mu_1,\mu_2,\ldots,\mu_m$,
  there is an assignment of $\setof{\lambda'_i}_{i=1}^{n'}$
  to $m$ machines with loads $\mu'_1,\mu'_2,\ldots,\mu'_m$
  such that all $\mu'_j \leq (1 + 5\delta) \mu_j$.
  Conversely,
  for any assignment $\setof{\lambda_i'}_{i=1}^{n'}$
  to $m$ machines with loads $\mu_1',\mu_2',\ldots,\mu_m'$,
  there is an assignment of $\setof{\lambda_i}_{i=1}^{n}$
  to $m$ machines with loads $\mu_1,\mu_2,\ldots,\mu_m$
  such that all $\mu_j \leq (1 + 5\delta) \mu_j'$,
  and the latter assignment can be found in $O(n)$ time
  given the former assignment
  if both $\setof{\lambda_i}_{i=1}^{n}$
  and
  $\setof{\lambda'_i}_{i=1}^{n'}$ are sorted.
\end{lemma}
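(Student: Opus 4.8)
The plan is to analyze the rounding procedure $\RI$ step by step, treating the "large" jobs (those with $\lambda_i \geq \delta\mu$) and the "small" jobs (those with $\lambda_i < \delta\mu$) separately. For the running time, note that $\RI$ makes a single pass over the $n$ input jobs, doing $O(1)$ arithmetic per job, followed by a final loop that creates $S^{\#}/(\delta\mu)$ new jobs; since $S \leq \sum_i \lambda_i = m\mu$, we have $S^{\#}/(\delta\mu) \leq O(m/\delta)$, and since each large job survives individually we also have $n' \leq n + O(m/\delta) = O(n)$ (using $m \leq n$ when there is at least one job per machine; more carefully $S^{\#}/(\delta\mu) = O(m/\delta)$ and the large-job count is at most $n$, so the total work and output size are $O(n)$ and $O(m/\delta)$ respectively, giving the claimed bounds).

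Next I would verify the structural claims about the output sizes. Every surviving large job gets value $\lambda'_{n'} = l\delta^2\mu$ where $l$ is chosen so that $(l-1)\delta^2\mu < \nu \leq l\delta^2\mu$ and $\nu = (1+\delta)^k\delta\mu$ with $k$ the unique integer with $(1+\delta)^{k-1}\delta\mu < \lambda_i \leq (1+\delta)^k\delta\mu$. Since $\lambda_i \leq \mu$ (the hypothesis on $\RI$) and $\lambda_i \geq \delta\mu$, we have $k \in \{1,\dots, O(\tfrac{1}{\delta}\log\tfrac{1}{\delta})\}$ because $(1+\delta)^k \leq 1/\delta$ forces $k \leq \log_{1+\delta}(1/\delta) = O(\tfrac1\delta\log\tfrac1\delta)$. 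Hence $\nu \leq 2\mu$, so $l \leq 2/\delta^2$, and each $\lambda'_i$ is of the form $\delta\mu + k'\delta^2\mu$ for $k' = l-1/\delta \in \mathbb{Z}_{\geq 0}$ wait — more directly, $\lambda'_i = l\delta^2\mu$ with $\delta\mu \leq \nu \leq l\delta^2\mu$ gives $l \geq 1/\delta$, so writing $l\delta^2\mu = \delta\mu + (l - 1/\delta)\delta^2\mu$ exhibits the stated form with the offset bounded by $2/\delta^2$. The number of distinct sizes is then at most the number of distinct $l$'s, which is $O(\tfrac1\delta\log\tfrac1\delta)$ since the map $\nu \mapsto l$ is monotone and $\nu$ ranges over $O(\tfrac1\delta\log\tfrac1\delta)$ values; plus the single extra size $\delta\mu$ from the small jobs.

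For the two-sided load-approximation claims, the key point is that each individual large job's size changes by at most a $(1+5\delta)$ factor: going from $\lambda_i$ to $\nu$ costs a factor $(1+\delta)$, and going from $\nu$ to $l\delta^2\mu$ costs at most $\nu/((l-1)\delta^2\mu) \leq l/(l-1) \leq 1 + O(\delta)$ since $l \geq 1/\delta$; composing gives $(1+5\delta)$ for $\delta$ small. Thus given any assignment of $\{\lambda_i\}$ with loads $\mu_j$, assigning the corresponding rounded large jobs to the same machines and distributing the $S^{\#}/(\delta\mu)$ unit-$\delta\mu$ tokens to replace the small jobs (which on each machine had total $\leq \mu_j$, and $S^{\#} \leq S + \delta\mu$) yields loads $\mu'_j \leq (1+5\delta)\mu_j$; one must be slightly careful that the aggregated small-job mass $S$ is rounded up globally to $S^{\#}$ and then re-split into $\delta\mu$-chunks, but since each machine's small-job load was a sum bounded by its total, and we over-provision by at most one extra chunk overall, the per-machine blow-up stays within $(1+5\delta)$. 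The converse direction is symmetric: given a rounded assignment, replace each rounded large job by its original (smaller by construction, since rounding only ever rounds up, so actually $\lambda_i \leq \lambda'_i$ — hence this direction is even easier) and replace the $\delta\mu$-tokens by a greedy repacking of the original small jobs, which since each small job is $< \delta\mu$ fits with overflow at most one token per machine, i.e. blow-up $(1+5\delta)$; this repacking is $O(n)$ given sorted inputs by a standard two-pointer sweep.

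The main obstacle I expect is the bookkeeping for the small jobs: unlike the large jobs which are rounded individually, the small jobs are aggregated into a single mass $S$, rounded to $S^{\#}$, and re-split into identical $\delta\mu$ pieces, so the correspondence between original small jobs and rounded tokens is not one-to-one and must be reconstructed by a packing argument that bounds the per-machine discrepancy. Making the $(1+5\delta)$ constant come out exactly (rather than $(1+C\delta)$ for an unspecified $C$) requires carefully tracking that the three sources of error — the $(1+\delta)$ geometric rounding, the $l/(l-1) \leq 1+\delta$ integer rounding given $l \geq 1/\delta$, and the one-extra-chunk slack in the small-job repacking — compose to at most $1+5\delta$ for $\delta \in (0,1)$; this is routine but is where the analysis is most delicate. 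This lemma is from \cite{JansenKV16}, so I would follow their argument.
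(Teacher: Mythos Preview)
The paper does not prove this lemma: it is stated with the attribution \cite{JansenKV16} and used as a black box, so there is no ``paper's own proof'' to compare against. Your sketch follows the standard argument from that reference --- separate treatment of large jobs (rounded individually with multiplicative error $1+O(\delta)$) and small jobs (aggregated into mass $S$, rounded up to $S^{\#}$, and re-split into $\delta\mu$-tokens) --- and you correctly flag the small-job repacking as the delicate step before deferring to \cite{JansenKV16}.

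One point where your sketch is genuinely loose: you write that the greedy repacking of small jobs gives ``overflow at most one token per machine, i.e.\ blow-up $(1+5\delta)$.'' The overflow of at most $\delta\mu$ per machine is correct, but converting the additive $\delta\mu$ into a multiplicative $(1+5\delta)$ requires $\mu_j' \geq \mu/5$ (or similar). Literally ``for any assignment'' this can fail: a machine whose rounded load is a single token $\delta\mu$ could receive nearly $2\delta\mu$ of small jobs after repacking, a factor-$2$ blow-up. In the paper's usage this is never an issue because the lemma is only applied to assignments with all loads in $[\mu/2,2\mu]$ (via Observation~\ref{obv3} and the ILP constraints), but your argument should either note this implicit assumption or refine the repacking to avoid it. The same caveat applies to your forward direction. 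A secondary nit: writing $l\delta^2\mu = \delta\mu + (l-1/\delta)\delta^2\mu$ with integer offset tacitly assumes $1/\delta\in\mathbb{Z}$, which the lemma does not state; this is harmless in context but worth acknowledging.
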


\quad

\begin{algorithm}[H]
  \label{alg:PTAS}
  \caption{$\PTAS(n,m,\setof{\lambda_i}_{i=1}^{n},\eps)$}
  \Input{Number of jobs $n$, number of machines $m$, job sizes $\{ \lambda_i \}_{i = 1}^{n}$, and $\eps\in(0,1)$}
  \Output{A job assignment $\phi : [n] \to [m]$}
  $\mu \gets (\sum_{i=1}^{n} \lambda_i) / m$,
  and sort $\setof{\lambda_i}_{i=1}^{n}$ such that $\lambda_1 \leq \lambda_2 \leq \ldots \leq \lambda_n$.\;
  $n^{(1)}\gets n$ and $m^{(1)}\gets m$.\label{line:big0}\;
  \While{$\lambda_{n^{(1)}} > \mu$}{
    Assign the job with size $\lambda_{n^{(1)}}$ to the empty machine $m^{(1)}$:
    $\phi(n^{(1)}) \gets m^{(1)}$.\\
    $\mu \gets \frac{m^{(1)} \mu - \lambda_{n^{(1)}}}{m^{(1)}-1}$,
    $n^{(1)} \gets n^{(1)} - 1$, and $m^{(1)} \gets m^{(1)} - 1$.
    \label{line:big1}
  }
  $\delta \gets \frac{\eps}{1000}$ and
  $\setof{\lambda'_i}_{i=1}^{n'} \gets \RI(\setof{\lambda_i}_{i=1}^{n^{(1)}}, \mu, \delta)$.
  \label{line:rounded}\;
  \If{$\mu > \frac{6}{\delta^2} \log m^{(1)}$
    {\rm or}
    $\kh{ \frac{1}{(m^{(1)})^{\delta}} < \mu \leq \frac{1}{2^{1/\delta+1}} \log m^{(1)}\ \mathrm{and}\ 
    m^{(1)} \geq 2^{\frac{2}{\delta}\log \frac{2}{\delta}} }$
  \label{line:case1}}{
    \tcp*[h]{\ref{case:1},~\ref{case:3}} \;
    Create a new instance with $m^{(1)}$ machines and {\em deterministic} job sizes
    $\setof{\lambda_i}_{i=1}^{n^{(1)}}$. \label{line:case111} \;
    Run $\DS$ in Theorem~\ref{prop:detalg} to find a $(1 + \frac{\eps}{5})$-optimum assignment
    $\phi': [n^{(1)}] \to [m^{(1)}]$. \label{line:case11}
  }
  \ElseIf{$\frac{1}{2^{\delta+1}} \log m^{(1)} < \mu \leq \frac{6}{\delta^2} \log m^{(1)}$
  {\rm and} $m^{(1)}\geq 2^{2^{2/\delta}}$ \label{line:case2}}{
    \tcp*[h]{\ref{case:2}} \;
    Use binary search to find the smallest
    $t_2 \in [\mu,100\mu \log m^{(1)}]$ s.t. there is an assignment of $\setof{\lambda'_i}_{i=1}^{n'}$
    with loads $\setof{\mu_j}_{j=1}^{m^{(1)}}$ s.t.
    $\sum_{j=1}^{m^{(1)}} \pr{\poi{\mu_j} > t_2} < \frac{1}{3}$;
    this is by letting
    $(\texttt{opt},\phi') \gets \ILP(\setof{\lambda'_i}_{i=1}^{n'},m^{(1)},x\to \pr{\poi{x} > t_2})$
    for each guess of $t_2$ and checking if $\texttt{opt} < \frac{1}{3}$. \\
    $(\texttt{opt},\phi') \gets \ILP(\setof{\lambda'_i}_{i=1}^{n'},m^{(1)},x\to \pr{\poi{x} > t_2})$.
    \label{line:case21}
  }
  \ElseIf{$\frac{ 4\log m^{(1)}}{m^{(1)}} < \mu \leq \frac{1}{\kh{m^{(1)}}^{\delta}}
    \ {\rm and}\ m^{(1)} \geq 2^{100/\delta^2}$ \label{line:case4}}{
    \tcp*[h]{\ref{case:4}} \;
    $t_4 \gets \ceil{\frac{\log m^{(1)}}{\log \frac{4}{\mu} + \log\log m^{(1)}}}$. \;
    Find an assignment of $\setof{\lambda'_{i}}_{i=1}^{n'}$ with loads $\setof{\mu_j}_{j=1}^{m^{(1)}}$
    s.t. $\prod_{j=1}^{m^{(1)}} \pr{\poi{\mu_j} < t_4}$ is maximized,
    by letting
    $(\texttt{opt},\phi') \gets \ILP(\setof{\lambda'_i}_{i=1}^{n'},m^{(1)},x\to -\ln( \pr{\poi{x} < t_4} ))$.
    \label{line:case41}
  }
  \ElseIf{$\mu \leq \frac{4 \log m^{(1)}}{m^{(1)}}$ {\rm and} $m^{(1)} \geq 1000(1/\delta)\log^2 (1/\delta)$
  \label{line:case5}}{
    \tcp*[h]{\ref{case:5}} \;
    Find an assignment of $\setof{\lambda'_{i}}_{i=1}^{n'}$ with loads $\setof{\mu_j}_{j=1}^{m^{(1)}}$
    s.t. $\prod_{j=1}^{m^{(1)}} \pr{\poi{\mu_j} = 0}$ is maximized,
    by letting
    $(\texttt{opt},\phi') \gets \ILP(\setof{\lambda'_i}_{i=1}^{n'},m^{(1)},x\to -\ln( \pr{\poi{x} = 0} ))$.
    \label{line:case51}
  }
  \Else{
    $\phi' \gets \DP(\setof{\lambda_i}_{i=n^{(1)}+1}^{n},\setof{\lambda_i}_{i=1}^{n^{(1)}},m,\eps)$.
    \label{line:DP}
  }
  If $\phi'$ is an assignment of the rounded jobs $\setof{\lambda_i'}_{i=1}^{n'}$,
  use Lemma~\ref{lem:rounding} to covert it to an assignment of
  $\setof{\lambda_i}_{i=1}^{n^{(1)}}$
  such
  every machine's load overflows by no more than $(1 + 5\delta)$.
  \label{line:unrounding}
  \\
  Return the assignment $\phi \union \phi'$.
\end{algorithm}

\quad

\begin{algorithm}[H]
  \label{alg:DP}
  \caption{$\DP(\setof{\nu_i}_{i=1}^{n^{(0)}},\setof{\lambda_i}_{i=1}^{n^{(1)}},m,\eps)$}
  \Input{Jobs supposed to be assigned to single machines $\setof{\nu_i}_{i=1}^{n^{(0)}}$,
    other jobs $\setof{\lambda_i}_{i=1}^{n^{(1)}}$,
  number of machines $m$, and $\eps \in (0,1)$}
  \Output{A job assignment $\phi: [n^{(1)}] \to [m]$}
  $m^{(1)} = m - n^{(0)}$, $\mu \gets (\sum_{i=1}^{n^{(1)}} \lambda_i) / m^{(1)}$,
  and $\delta \gets \max\setof{\frac{\eps}{1000 m^{(1)}}, 2^{-10^9/\eps^2}}$. \;
  $\setof{\lambda_i'}_{i=1}^{n'} \gets \RI(\setof{\lambda_i}_{i=1}^{n^{(1)}},\mu,\delta)$.\;
  Let $\pi_1 < \pi_2 < \ldots < \pi_d$ be all different sizes in $\setof{\lambda_i'}_{i=1}^{n'}$
  and $\ppi \gets (\pi_1,\pi_2,\ldots,\pi_d)^T$. \label{line:d} \;
  $n_k\gets$ the number of jobs with size $\pi_k$
  and $\nn\gets (n_1,n_2,\ldots,n_d)^T$\ \ (so $\sum_{k=1}^{d} n_k = n'$).\;
  Let $l_1 = \frac{\mu}{2} < l_2 = (1 + \delta)\frac{\mu}{2} < \ldots < l_{e-1} = (1 + \delta)^{e-2} \frac{\mu}{2}
  < 4\mu \leq l_{e}=(1 + \delta)^{e-1} \frac{\mu}{2}$ be discretized loads and
  let $\vec{l} = (l_1,l_2,\ldots,l_{e})^T$. \label{line:e}\;
  Let $\mm = (m_1,m_2,\ldots,m_{e}) \in \mathbb{Z}_{\geq 0}^{e}$ denote a load profile
  where $m_k$ machines have load $l_k$. \;
  Initialize the dynamic programming: $F(\vec{0}_{d\times 1},\vec{0}_{e\times 1}) \gets 1$
  and all other $F$ values $\gets 0$.\;
  \label{line:dpstart}
  \ForAll{$\vec{p}:\ \vec{p} \in \mathbb{Z}_{\geq 0}^{d}$ {\rm and} $\vec{p} \leq \nn$,
  {\rm in lexicographical order of $\vec{p}$}}{
    \ForAll{$\vec{m}:\ \vec{m} \in \mathbb{Z}_{\geq 0}^{e}$, $\norm{\vec{m}}_1 \leq m^{(1)}$, {\rm and}
    $F(\vec{p},\vec{m}) = 1$}{
      \ForAll{$\vec{q}\neq \vec{0}:\ \vec{q} \in \mathbb{Z}_{\geq 0}^{d}$, $\vec{p} + \vec{q} \leq \vec{n}$,
      {\rm and} $\ppi\,^T \vec{q} \leq 4\mu$}{
        $k \gets$ the smallest integer s.t. $\ppi\,^T \vec{q} \leq l_k$.\;
        $F(\vec{p} + \vec{q},\vec{m} + \vec{\chi}_{k}) \gets 1$. \label{line:dpend}
      }
    }
  }
  $u \gets \ceil{ 1000 \eps^{-2} \log m^{(1)} \max\setof{\mu,1}}$. \label{line:defu} \;
Among all $\vec{m} = (m_1,\ldots,m_e)$ s.t. $\norm{\vec{m}}_1 = m^{(1)}$
and $F(\vec{n},\vec{m}) = 1$,
find an $\vec{m}^*$ minimizing
\begin{align}\label{eq:truncated}
\sum_{x=0}^{u-1} \pr{\max_{i=1}^{n^{(0)}} \poi{\nu_i} = x}
\kh{x + \sum_{y=x+1}^{u} \pr{\max_{k=1}^{e} \max_{i=1}^{m_k} \poi{l_k} \geq y}}.
\end{align}\;
Find and return an assignment $\phi$ achieving $\vec{m}^*$ by
tracing the function $F$.
\end{algorithm}

\quad

\quad

\begin{restatable}{lemma}{lemDP}\label{lem:DP}
	Given jobs $\setof{\nu_i}_{i=1}^{n^{(0)}}$, $\setof{\lambda_i}_{i=1}^{n^{(1)}}$,
	number of machines $m$, and $\eps\in(0,\frac{1}{10}]$.
	Let $m^{(1)} = m - n^{(0)}$, $\mu = (\sum_{i=1}^{n^{(1)}} \lambda_i) / m^{(1)}$,
	and $\delta = \max\setof{\frac{\eps}{1000 m^{(1)}},2^{-10^9/\eps^2}}$.
	Suppose %$m \leq 2^{2^{100/\eps}}$,
	all $\lambda_i \leq \mu$, all $\nu_i > \mu$, and
	$\mu \leq 6000000 \eps^{-2} \log m^{(1)}$.
	Then $\DP(\setof{\nu_i}_{i=1}^{n^{(0)}},\setof{\lambda_i}_{i=1}^{n^{(1)}},m,\eps)$
	finds in $O(n^{(0)} \eps^{-4} \log^2 m^{(1)}) + (m^{(1)}/\delta)^{O(\frac{1}{\delta} \log \frac{1}{\delta})}$ time
	an assignment with expected maximum load $L \leq (1 + \eps) L^*$,
	where $L^*$ is the expected maximum load of an optimum assignment.
\end{restatable}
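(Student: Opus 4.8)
The plan is to mirror the template of the deterministic algorithm of \cite{AlonAWY98,JansenKV16} — pin the ``big'' jobs $\setof{\nu_i}$ to singleton machines, round the remaining jobs with $\RI$, and search over ``load profiles'' by dynamic programming — but with the makespan objective replaced by the expected maximum. Write $\bY=\max_{i=1}^{n^{(0)}}\poi{\nu_i}$ for the (fixed) contribution of the pinned jobs, so that any assignment of the remaining jobs inducing loads $\mu_1,\dots,\mu_{m^{(1)}}$ has value $\ex{\max\setof{\bY,\max_j\poi{\mu_j}}}$; write $\mathbf{Z}_{\vec m}=\max_{k}\max_{i\le m_k}\poi{l_k}$ for the small-machine maximum under a grid profile $\vec m$. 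I would first record the three facts I will lean on: $L^\ast\ge\ex{\bY}$; $L^\ast\ge\mu$ (the average of all $m$ loads exceeds $\mu$, since $\sum_i\nu_i>\mu n^{(0)}$ and $\sum_i\lambda_i=\mu m^{(1)}$); and, by log-concavity (Proposition~\ref{prop:concave}) and the usual convexity argument, that there is an optimal assignment of the remaining jobs whose loads all lie in $[\mu/2,2\mu]$, so I only need to reason about loads in that range.

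\emph{Discretization.} I would invoke Lemma~\ref{lem:rounding}: every assignment with loads in $[\mu/2,2\mu]$ has a counterpart assignment of the rounded jobs whose loads are dilated by at most $1+5\delta$ (hence $\le4\mu$), and conversely, with the back-conversion computable in $O(n^{(1)})$ time; snapping loads up to the grid $l_1<\dots<l_e$ costs a further $1+\delta$. Thus the set of profiles the DP enumerates (those $\vec m$ with $F(\vec n,\vec m)=1$, $\norm{\vec m}_1=m^{(1)}$) equals, up to a $1+O(\delta)$ dilation of every load, the set of assignments of the original remaining jobs with loads in $[\mu/2,4\mu]$. What remains is to show a simultaneous $1+5\delta$ dilation of all $m^{(1)}$ loads changes the expected maximum by only $1+O(\eps)$. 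When $\delta=\tfrac{\eps}{1000m^{(1)}}$ this is immediate by coupling $\poi{(1+5\delta)\mu_j}=\poi{\mu_j}+\poi{5\delta\mu_j}$ and using $\sum_j\mu_j\le2m^{(1)}\mu\le2m^{(1)}L^\ast$. The awkward case — and the one I expect to be the main obstacle — is when the floor $2^{-10^9/\eps^2}$ in the definition of $\delta$ is active, so $\delta m^{(1)}$ is large and the crude coupling bound is useless; there $m^{(1)}$ is enormous, so $\MM(m^{(1)},\cdot)$ concentrates sharply about its transition point $t$, and I would argue, reusing the exponential-tail estimates behind Lemmas~\ref{lem:concentration1}--\ref{lem:scaling2}, that a $1+5\delta$ dilation moves $t$ by only a $1+O(\delta/\log(t/\mu))$ factor, so the expected maximum still changes by $1+O(\eps)$.

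\emph{The truncated objective.} By $\min\setof{\max\setof{a,b},u}=\max\setof{\min\setof{a,u},\min\setof{b,u}}$ and identity~(\ref{eq:maxXY}) applied to $\min\setof{\bY,u}$ and $\min\setof{\mathbf{Z}_{\vec m},u}$, the quantity in~(\ref{eq:truncated}) equals $\ex{\max\setof{\min\setof{\bY,u},\min\setof{\mathbf{Z}_{\vec m},u}}}-u\,\pr{\bY\ge u}$, whose second term is independent of $\vec m$; so the DP in fact minimizes the ``capped'' expected maximum over profiles. To return from the capped objective to the true one I would split on the size of the pinned contribution. If $\ex{\bY}\ge\tfrac1\eps C_0(\mu+\log m^{(1)})$, where $C_0$ is a constant with $\ex{\mathbf{Z}_{\vec m}}\le\ex{\MM(m^{(1)},4\mu)}\le C_0(\mu+\log m^{(1)})$ for every enumerated profile (a union bound / geometric series), then $\ex{\max\setof{\bY,\mathbf{Z}_{\vec m}}}\le\ex{\bY}+\ex{\mathbf{Z}_{\vec m}}\le(1+\eps)\ex{\bY}\le(1+\eps)L^\ast$ for \emph{every} profile, so the DP's choice is fine no matter what it optimizes. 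Otherwise $\ex{\bY}$ is small, every $\nu_i\le\ex{\bY}\ll u$, and a union bound over the $n^{(0)}$ Poissons with Proposition~\ref{prop:chernoff} makes $\ex{(\bY-u)^+}$ — likewise $\ex{(\mathbf{Z}_{\vec m}-u)^+}$, since all small-machine loads are $\le4\mu\ll u$ — super-polynomially small, hence $\le\tfrac\eps4 L^\ast$ (using $u=\Theta(\eps^{-2}\log m^{(1)}\max\setof{\mu,1})$ and $L^\ast\ge\max\setof{\ex{\bY},\mu}$). So the capped and true objectives differ by $\le\tfrac\eps2 L^\ast$ on every enumerated profile, and chaining $L(\vec m^\ast)\le(\text{capped at }\vec m^\ast)+\tfrac\eps2 L^\ast\le(\text{capped at the discretized optimum})+\tfrac\eps2 L^\ast\le(\text{true value at the discretized optimum})+\tfrac\eps2 L^\ast\le(1+O(\eps))L^\ast$ finishes correctness after fixing the constant in $\delta=\eps/1000$.

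\emph{Running time.} By Lemma~\ref{lem:rounding} there are $d=O(\tfrac1\delta\log\tfrac1\delta)$ rounded sizes, each $\ge\delta\mu$, so $n'=O(m^{(1)}/\delta)$, and the grid has $e=O(1/\delta)$ levels; the table $F$ has $\le(n'{+}1)^d(m^{(1)}{+}1)^e$ entries, and each transition ranges over the $\le(O(1/\delta))^d$ vectors $\vec q\le\vec n$ with $\ppi^{T}\vec q\le4\mu$ (each coordinate $\le4\mu/(\delta\mu)$), so filling $F$ costs $(m^{(1)}/\delta)^{O(\frac1\delta\log\frac1\delta)}$. I would precompute $\pr{\bY=x}$ and $\pr{\bY\ge x}$ for $x\le u$ by maintaining running products over the $n^{(0)}$ pinned jobs, in $O(n^{(0)}u)=O(n^{(0)}\eps^{-4}\log^2 m^{(1)})$ time (using $\mu\le6\cdot10^6\eps^{-2}\log m^{(1)}$), and then evaluate~(\ref{eq:truncated}) at each of the $(m^{(1)}{+}1)^e$ profiles in $O(u/\delta)$ time via suffix sums of $\pr{\mathbf{Z}_{\vec m}\ge y}$; absorbing the $\mathrm{poly}(\eps^{-1},\log m^{(1)})$ factors into the exponent gives the claimed $O(n^{(0)}\eps^{-4}\log^2 m^{(1)})+(m^{(1)}/\delta)^{O(\frac1\delta\log\frac1\delta)}$, with the $\RI$ call and the back-conversion (both $O(n^{(1)})$) absorbed.
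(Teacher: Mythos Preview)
Your overall skeleton --- running time count, discretization via Lemma~\ref{lem:rounding}, and reducing to a truncated objective --- matches the paper, and your truncation argument (rewriting~(\ref{eq:truncated}) as $\ex{\min\{\max\{\bY,\mathbf{Z}_{\vec m}\},u\}}-u\pr{\bY\ge u}$) is a correct alternative to the paper's Lemma~\ref{lem:truncated}. The real gap is precisely where you flag it: the ``awkward case'' where the floor $\delta=2^{-10^9/\eps^2}$ is active. Your plan there --- ``reusing the exponential-tail estimates behind Lemmas~\ref{lem:concentration1}--\ref{lem:scaling2}, a $1+5\delta$ dilation moves $t$ by $1+O(\delta/\log(t/\mu))$'' --- does not go through. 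Those lemmas, applied with error parameter $5\delta$, would require $m^{(1)}\ge 2^{2^{\Omega(1/\delta)}}$, far beyond the $m^{(1)}>\tfrac{\eps}{1000}2^{10^9/\eps^2}$ you actually have; and the form $1+O(\delta/\log(t/\mu))$ is vacuous in the range $\mu=\Theta(\eps^{-2}\log m^{(1)})$ where $t\approx\mu$ and $\log(t/\mu)\to 0$.

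The paper's proof does \emph{not} argue that a $1+5\delta$ dilation is benign in that regime. Instead it keeps the additive coupling $\poi{(1+5\delta)\mu_j}=\poi{\mu_j}+\poi{5\delta\mu_j}$ you used in the easy case, and bounds the error term $\ex{\MM(m^{(1)},5\delta\mu)}$ directly. The tool is Corollary~\ref{cor:logm} (i.e.\ Lemmas~\ref{lem:concentration3}--\ref{lem:concentration4} applied with the \emph{fixed} error parameter $1/10$, so only $m^{(1)}\ge 2^{10000}$ is needed): it gives $\ex{\MM(m^{(1)},5\delta\mu)}\le 10\cdot\frac{\log m^{(1)}}{\log\frac{1}{5\delta\mu}+\log\log m^{(1)}}$. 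The choice $\delta=2^{-10^9/\eps^2}$ is made precisely so that $\log\frac{1}{5\delta}\approx 10^9/\eps^2$ dominates this denominator; a short case split on $\mu$ (below or above $\tfrac{1}{10000}\log m^{(1)}$) then shows the error term is at most $\tfrac{2\eps}{3}\,\ex{\MM(m^{(1)},\mu)}$ or $\tfrac{2\eps}{3}\mu$, either of which suffices. So the missing idea is: in the large-$m^{(1)}$ regime you do not need a multiplicative scaling lemma at all --- you need an \emph{absolute} upper bound on the maximum of $m^{(1)}$ Poissons with the tiny mean $5\delta\mu$, and Corollary~\ref{cor:logm} provides exactly that.
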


\quad 

\begin{algorithm}[H]
  \label{alg:ILP}
  \caption{$\ILP(\setof{\lambda_i}_{i=1}^{n},m,f)$}
  \Input{Job sizes $\setof{\lambda_i}_{i=1}^{n}$, number of machines $m$,
  and a function $f: \mathbb{R} \to \mathbb{R}$}
  \Output{An optimum value $\mathsf{opt}$ and a job assignment $\phi: [n]\to [m]$}
  $\mu \gets (\sum_{i=1}^{n} \lambda_i) / m$.\;
  Let $\pi_1 < \pi_2 < \ldots < \pi_d$ be all different sizes in $\setof{\lambda_i}_{i=1}^{n}$
  and $\ppi \gets (\pi_1,\pi_2,\ldots,\pi_d)^T$.\;
  Let $n_k$ be the number of jobs with size $\pi_k$
  and $\nn\gets (n_1,n_2,\ldots,n_d)^T$\ \ (so $\sum_{k=1}^{d} n_k = n$).\;
  $Q \gets \setof{\cc \in \mathbb{Z}_{\geq 0}^d: \cc\,^T \ppi \leq 4\mu}$ (the set of
  possible assignments to a single machine).\;
  Solve the following integer linear programming using
  Theorem~\ref{prop:ILPsolver}:
  \begin{align}
    \min & \sum_{\cc\in Q} x_{\cc}\, f( \cc\,^T \ppi) \notag \\
    s.t. & \sum_{\cc\in Q} x_{\cc} = m \notag \\
    & \sum_{\cc\in Q} x_{\cc}\, \cc = \nn \notag \\
    & x_{\cc} \in \mathbb{Z}_{\geq 0},\  \forall \cc\in Q.
  \end{align}\;
  Return the optimum value of the ILP
  and an assignment $\phi$ extracted from its solution.
\end{algorithm}

\subsection{Proof of Theorem~\ref{thm:main}}

We first use Lemma~\ref{lem:DP} and the lemma below to prove Theorem~\ref{thm:main},
and then give detailed proofs of these two lemmas.

The following lemma shows that if Algorithm~\ref{alg:PTAS} does not use dynamic programming
to find an assignment, then we have a good approximation.

\begin{restatable}{lemma}{lembigm}\label{lem:bigm}
  If Algorithm~\ref{alg:PTAS} returns an assignment without going into Line~\ref{line:DP}, then $L\leq (1 + \eps)L^*$,
  where $L,L^*$ is the expected maximum load of the assignment returned by Algorithm~\ref{alg:PTAS}
  and the optimum assignment respectively.
\end{restatable}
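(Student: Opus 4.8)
The plan is to handle each of the five branches of Algorithm~\ref{alg:PTAS} that precede Line~\ref{line:DP} by reducing the comparison of $L$ and $L^*$ to the $m^{(1)}$ ``small'' machines and invoking the scaling/concentration lemma of Section~\ref{sec:overview} whose hypothesis matches that branch's condition (the deliberately overlapping ranges in Section~\ref{sec:overview} were arranged for exactly this). First, by Observation~\ref{obseq} the while-loop isolates the maximal big-job set $B$, and an optimal assignment also places each $i\in B$ on its own machine; hence, with $\mu$ the value computed on Line~\ref{line:big1}, $m^{(1)}=m-|B|$, and $\bX\coloneqq\max_{i\in B}\poi{\lambda_i}$ (independent of the small-job variables), we have $L=\ex{\max\setof{\bX,\max_{j=1}^{m^{(1)}}\poi{\mu_j}}}$ and $L^*=\ex{\max\setof{\bX,\max_{j=1}^{m^{(1)}}\poi{\mu_j^*}}}$, where $\setof{\mu_j}$ and $\setof{\mu_j^*}$ are the small-machine loads of the returned and of an optimal assignment. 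By Observation~\ref{obv3} we may assume all $\mu_j^*\in[\mu/2,2\mu]$, and likewise that the assignments we consider use only single-machine configurations of load in $[\mu/4,4\mu]$; when $\DS$ is invoked, Theorem~\ref{prop:detalg} together with the greedy bound $\mathrm{OPT}_{\mathrm{det}}\le\mu+\max_i\lambda_i\le2\mu$ gives $\max_j\mu_j\le(1+\eps/5)\,2\mu<4\mu$. (If $m^{(1)}\le1$ the small part is a fixed random variable and $\DS$ is exactly optimal, so $L=L^*$; assume $m^{(1)}\ge2$.)

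\textbf{Branches that call $\DS$ (\ref{case:1},~\ref{case:3}).} In~\ref{case:1}, apply Lemma~\ref{lem:concentration1} to the $m^{(1)}$ small machines of the returned assignment (its hypothesis is exactly the condition on Line~\ref{line:case1}, so $\mu_1\coloneqq\max_j\mu_j\ge\mu>\tfrac{6}{\delta^2}\ln m^{(1)}$); using the elementary bound $\max\setof{\bX,Y}\le\max\setof{\bX,(1+\delta)\mu_1}+(Y-(1+\delta)\mu_1)_+$ with $Y=\max_j\poi{\mu_j}$, the geometric tail estimate from the proof of Lemma~\ref{lem:concentration1} (which makes the last term have expectation $\le2/\delta\le\delta\mu_1$), and $\max\setof{\bX,cz}\le c\max\setof{\bX,z}$ for $c\ge1$, gives $L\le(1+2\delta)\ex{\max\setof{\bX,\mu_1}}$. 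Since $z\mapsto\max\setof{\bX,z}$ is convex, $L^*\ge\ex{\max\setof{\bX,\poi{\mu_1^*}}}\ge\ex{\max\setof{\bX,\mu_1^*}}$, and $\mu_1\le(1+\eps/5)\,\mathrm{OPT}_{\mathrm{det}}\le(1+\eps/5)\mu_1^*$ because $\mu_1^*=\max_j\mu_j^*$ is at least the deterministic optimum; hence $L\le(1+2\delta)(1+\eps/5)L^*$. In~\ref{case:3}, the transition point $t_3=t_3(m^{(1)},\mu)$ depends only on $m^{(1)},\mu$ and is thus the same for both assignments; from $\mu_j\le4\mu$ we get $\MM(m^{(1)},4\mu)\sd\max_j\poi{\mu_j}$, so the scaling and upper bounds in Lemma~\ref{lem:concentration3} give $L\le\ex{\max\setof{\bX,\MM(m^{(1)},4\mu)}}\le(1+O(\delta))\ex{\max\setof{t_3,\bX}}$; iterating Proposition~\ref{prop:concanve2} yields $\max_j\poi{\mu_j^*}\sd\MM(m^{(1)},\mu)$, so the lower bound in Lemma~\ref{lem:concentration3} gives $L^*\ge\ex{\max\setof{\MM(m^{(1)},\mu),\bX}}\ge(1-O(\delta))\ex{\max\setof{t_3,\bX}}$, whence $L\le\tfrac{1+O(\delta)}{1-O(\delta)}L^*$.

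\textbf{Branches that round and call $\ILP$ (\ref{case:2},~\ref{case:4},~\ref{case:5}).} By Lemma~\ref{lem:rounding}, the returned (unrounded) loads are $\le(1+5\delta)$ times the rounded loads produced by $\ILP$, and any assignment of the original small jobs---in particular an optimal one, whose loads are $\le2\mu$ and hence within the configuration bound $\ILP$ uses---lifts to a feasible point of the same $\ILP$ with loads at most $(1+5\delta)\mu_j^*$. Applying the scaling result of Lemma~\ref{lem:scaling2},~\ref{lem:concentration4}, or~\ref{lem:concentration5} (with auxiliary variable $\bX$) a bounded number of times turns a $(1+5\delta)$-dilation into a $(1+O(\delta))$ factor in the expected maximum, so it suffices to show the $\ILP$ objective is a linear proxy for $L$ up to $(1+O(\delta))$. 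In~\ref{case:2}, the binary search returns $\widehat t=\min_\phi t_2(\mu^\phi)$ over feasible rounded assignments $\phi$, and the assignment $\ILP$ returns at $\widehat t$ attains transition point $\widehat t$; by~\eqref{eq:concentration2} (with $\bX$) the expected maximum of an assignment with transition point $t_2$ lies in $(1\pm O(\delta))\ex{\max\setof{t_2,\bX}}$, which is nondecreasing in $t_2$, so the returned assignment is within $(1+O(\delta))$ of $\ex{\max\setof{\widehat t,\bX}}$, while the feasible lift of an optimal assignment (whose transition point is $\ge\widehat t$) together with the lower bound in~\eqref{eq:concentration2} gives $\ex{\max\setof{\widehat t,\bX}}\le(1+O(\delta))L^*$. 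In~\ref{case:4} (resp.~\ref{case:5}) the Bernoulli variable $\bW$ of Lemma~\ref{lem:concentration4} (resp.~\ref{lem:concentration5}) has $\ex{\max\setof{\bW,\bX}}=\pr{\bW=t_4-1}\ex{\max\setof{t_4-1,\bX}}+\pr{\bW=t_4}\ex{\max\setof{t_4,\bX}}$ with $t_4=t_4(m^{(1)},\mu)$ depending only on $m^{(1)},\mu$ (resp.\ $\ex{\bW}=1-e^{-m^{(1)}\mu}$ depending only on $m^{(1)}\mu$); this is monotone in $\prod_j\pr{\poi{\mu_j}<t_4}$ (resp.\ constant over feasible assignments), which $\ILP$ maximizes (resp.\ $\ILP$ simply produces a feasible assignment), so the returned assignment minimizes $\ex{\max\setof{\bW,\bX}}$ up to the rounding factor, and the feasible lift of an optimal assignment certifies the matching lower bound as in~\ref{case:2}. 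Thus $L\le(1+O(\delta))L^*$ in each of these branches.

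\textbf{Conclusion and main difficulty.} As $\delta=\eps/1000$, the finitely many $(1+O(\delta))$ factors accumulated in any single branch multiply to at most $1+\eps$, giving $L\le(1+\eps)L^*$. The principal work is the careful bookkeeping of these constant-in-$\delta$ losses---from rounding (Lemma~\ref{lem:rounding}), from the scaling results, from replacing the true maximum by its transition point, and from the two uses of stochastic dominance/majorization (Proposition~\ref{prop:concanve2})---and the branch-by-branch verification that the relevant Section~\ref{sec:overview} lemma applies simultaneously to the returned and the optimal small-machine assignment; a secondary point is the reduction that folds the big jobs into the single auxiliary variable $\bX$ and, in the $\DS$ branches, bridges the deterministic makespan guarantee to the stochastic objective.
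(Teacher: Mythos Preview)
Your proposal is correct and follows essentially the same case-by-case structure as the paper: fold the big jobs into the auxiliary variable $\bX$ via Observation~\ref{obseq}, then for each branch invoke the matching concentration/scaling lemma from Section~\ref{sec:overview}, together with Lemma~\ref{lem:rounding} in the $\ILP$ branches. Your minor deviations---handling~\ref{case:1} by applying Lemma~\ref{lem:concentration1} only to the $m^{(1)}$ small machines (via the decomposition $\max\{\bX,Y\}\le\max\{\bX,(1+\delta)\mu_1\}+(Y-(1+\delta)\mu_1)_+$ and Jensen) rather than to all $m$ machines at once, routing~\ref{case:3} explicitly through the transition point $t_3$, and noting that in~\ref{case:5} the $\ILP$ objective $\prod_j\pr{\poi{\mu_j}=0}=e^{-m^{(1)}\mu}$ is actually constant over feasible assignments---are all valid and do not change the argument's substance.
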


Roughly speaking,
the proof idea behind Lemma~\ref{lem:bigm} is as follows.
For~\ref{case:1} and~\ref{case:3}, where we do not use the rounded jobs, using the guarantee of the deterministic load balancing algorithm we show that the maximum expected load of any machine is bounded. Then by using the concentration results from Section~\ref{sec:overview}, we can bound the expected maximum load.

For 3 other cases, where the assignment obtained is for the rounded jobs, the proof consists of two main steps. In the first step, we compare the individual machine loads of the output assignment (of unrounded jobs) and the assignment of the rounded jobs given by $\ILP$ by Lemma~\ref{lem:rounding}. Then by the concentration and scaling results from Section~\ref{sec:overview}, we bound the expected maximum load of the output assignment using the transition point of rounded jobs. In the second step, we bound the expected maximum of an optimum assignment of rounded jobs by that of an optimum assignment of unrounded jobs by scaling results from Section~\ref{sec:overview}. Thus, we bound the expected maximum load of an output assignment by that of an optimum assignment.

Now we prove Theorem~\ref{thm:main}.
\begin{proof}[Proof of Theorem~\ref{thm:main}]

  First we analyze the running time 
  when $\PTAS$ does not use dynamic programming.
  If the algorithm invokes the efficient PTAS
  for deterministic case in Theorem~\ref{prop:detalg},
  the running time is bounded by $2^{2^{\tilde{O}(1/\eps)}} + O(n\log n)$.
  Otherwise the algorithm will first do the rounding.
  By Lemma~\ref{lem:rounding}
  after the rounding
  the number of different job sizes
  in $\setof{\lambda'_i}_{i=1}^{n'}$
  becomes $O(\frac{1}{\delta}\log\frac{1}{\delta}) = O(\frac{1}{\eps} \log \frac{1}{\eps})$
  and all job sizes are between $\delta \mu$ and $2\mu$.
  Now for each machine of $1,2,\ldots,m^{(1)}$,
  we only need to consider the assignments to it with load at most $4\mu$.
  Therefore each machine can have at most $O(1/\delta) = O(1/\eps)$
  jobs, and the number of different job profiles that can be assigned to
  one machine is at most $(1/\eps)^{O(1/\eps)} = 2^{O(\frac{1}{\eps} \log \frac{1}{\eps})}$,
  which is the number of variables in the ILP. By Theorem~\ref{prop:ILPsolver}
  the ILP can be solved in $2^{2^{O(\frac{1}{\eps}\log\frac{1}{\eps})}} O(\log n)$ time.
  Since for~\ref{case:2} we need to do a binary search on interval $[\mu,100\mu \log m^{(1)}]$,
  the total running time is bounded by $2^{2^{O(\frac{1}{\eps}\log\frac{1}{\eps})}} O(\log n \log \log n)$.

  If $\PTAS$ uses dynamic programming,
  we have $\delta \geq 2^{-10^9 / \eps^2}$
  and $m^{(1)} \leq 2^{2^{O(1/\eps)}}$.
  Therefore the total running time is bounded by
  $2^{2^{O(1/\eps^2)}} + O(n\eps^{-4} \log^2 n)$
  by Lemma~\ref{lem:DP}. Combining
  these two cases gives us the desired running time.
  
    The approximation guarantee directly follows from Lemmas~\ref{lem:bigm} and~\ref{lem:DP}.
\end{proof}

We then give the proofs of Lemma~\ref{lem:DP} and Lemma~\ref{lem:bigm}.

\subsubsection{Proof of Lemma~\ref{lem:DP}}

%The following lemma characterizes the performance of $\DP$ in Algorithm~\ref{alg:DP}.
%\lemDP*
We will need the following claims for proving Lemma~\ref{lem:DP}.

\begin{corollary}[of Lemmas~\ref{lem:concentration3} and~\ref{lem:concentration4}]
  Given $m \geq 2^{10000}$ and $\frac{4\log m}{m} < \mu \leq \frac{\log m}{10000}$,
  we have
  {\rm
    \begin{align}
      \frac{1}{10} \cdot \frac{\log m}{\log \frac{1}{\mu} + \log \log m}
      \leq
      \MM(m,\mu)
      \leq
      10 \cdot \frac{\log m}{\log \frac{1}{\mu} + \log \log m}.
    \end{align}
  }
  \label{cor:logm}
\end{corollary}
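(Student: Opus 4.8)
The plan is to fix the constant error parameter $\delta = 1/10$ and split the range of $\mu$ into the two sub-ranges governed by Lemma~\ref{lem:concentration3} and Lemma~\ref{lem:concentration4}, reading off the claimed two-sided bound in each by tracking constants. Throughout I write $\tau = \tau(m,\mu) = \frac{\log m}{\log\frac1\mu + \log\log m}$ for the target quantity and interpret $\MM(m,\mu)$ in the statement as $\ex{\MM(m,\mu)}$, consistent with the statements of Lemmas~\ref{lem:concentration3} and~\ref{lem:concentration4} once one takes the auxiliary variable $\bX \equiv 0$ there. Note that all the Poisson means defining $\MM(m,\mu)$ equal $\mu$, so the hypothesis ``$\mu_j \in [\mu/4,4\mu]$'' in both lemmas is trivially satisfied.

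First I would check the side conditions. With $\delta = 1/10$, \ref{case:4} requires $m \ge 2^{100/\delta^2} = 2^{10000}$, which is exactly the hypothesis of the corollary (this is the reason for the constant $2^{10000}$), and \ref{case:3} requires only $m \ge 2^{(2/\delta)\log(2/\delta)} = 2^{20\log_2 20}$, which is far weaker. For the split take $\mu^\star = m^{-1/10}$. If $\frac{4\log m}{m} < \mu \le \mu^\star$, apply Lemma~\ref{lem:concentration4}, whose range $\frac{4\log m}{m} < \mu \le 2m^{-\delta}$ contains $(\frac{4\log m}{m},\mu^\star]$; if $\mu^\star < \mu \le \frac{\log m}{10000}$, apply Lemma~\ref{lem:concentration3}, whose range $m^{-\delta} < \mu \le 2^{-(1/\delta+1)}\log m = \frac{\log m}{2048}$ contains $(\mu^\star,\frac{\log m}{10000}]$ since $2048 < 10000$. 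So the two sub-ranges tile the whole interval in the hypothesis.

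In the \ref{case:3} regime, Lemma~\ref{lem:concentration3} with $\bX\equiv0$ gives $(1-4\delta)t_3 \le \ex{\MM(m,\mu)} \le (1+14\delta)t_3$ with $t_3 = \tau$; plugging in $\delta = 1/10$ yields $0.6\,\tau \le \ex{\MM(m,\mu)} \le 2.4\,\tau$, comfortably inside the claimed $[\tfrac1{10},10]$ window. In the \ref{case:4} regime, Lemma~\ref{lem:concentration4} with $\bX\equiv0$ gives $(1-5\delta)(t_4-1) \le \ex{\MM(m,\mu)} \le (1+16\delta)t_4$, where $t_4 = \ceil{\gamma_4}$ and $\gamma_4 = \frac{\log m}{\log\frac4\mu + \log\log m}$. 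Here I would use $\gamma_4 > 1$ (noted inside the proof of Lemma~\ref{lem:concentration4}, since $\mu > \frac{4\log m}{m}$) to deduce $\tfrac12\gamma_4 \le t_4 - 1$ and $t_4 \le 2\gamma_4$, so with $\delta = 1/10$ one gets $\tfrac14\gamma_4 \le \ex{\MM(m,\mu)} \le \tfrac{26}{5}\gamma_4$. The remaining step is to replace $\gamma_4$ by $\tau$: trivially $\gamma_4 \le \tau$, and since $\mu \le \frac{\log m}{10000}$ one has $\log\frac1\mu + \log\log m \ge \log_2 10000 > 2 = \log 4$, so the denominator of $\gamma_4$ is at most twice that of $\tau$ and hence $\gamma_4 \ge \tfrac12\tau$. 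Combining, $\tfrac18\tau \le \ex{\MM(m,\mu)} \le \tfrac{26}{5}\tau$, again inside $[\tfrac1{10},10]$, which completes both cases.

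The computations are routine; the only points that need care are (i) that $\delta = 1/10$ is precisely the choice making \ref{case:4}'s threshold $2^{100/\delta^2}$ equal the corollary's hypothesis $2^{10000}$, and for which the two sub-ranges exactly cover $(\frac{4\log m}{m},\frac{\log m}{10000}]$, and (ii) that the extra additive $\log 4$ in the denominator of $\gamma_4$ (as compared with $\tau$) is harmless because $\log\frac1\mu + \log\log m$ is bounded below by an absolute constant ($>13$) on the whole range. Neither is a genuine obstacle.
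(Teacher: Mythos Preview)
Your proposal is correct and follows exactly the approach the paper uses: the paper's proof is the single sentence ``Setting $\delta = 1/10$ and applying Lemmas~\ref{lem:concentration3} and~\ref{lem:concentration4} gives the desired claim,'' and you have carefully carried out precisely this plan, including the constant tracking and the comparison of $\gamma_4$ with $\tau$ that the paper leaves implicit.
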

\begin{proof}
  Setting $\delta = 1/10$ and applying Lemmas~\ref{lem:concentration3} and~\ref{lem:concentration4}
  gives the desired claim.
\end{proof}

\begin{restatable}{lemma}{lemtruncated}\label{lem:truncated}
  Given $\mu_1,\mu_2,\ldots,\mu_m \geq 0$.
  Let $\mu = (\sum_{j=1}^{m} \mu_j) / m$ and
  $u = \ceil{ 1000 \eps^{-2} \log m \max\setof{\mu,1}}$.
  Suppose all $\mu_j \in [\mu/2,2\mu]$.
  For any random variable $\bX$ taking values on $\mathbb{N}$
  we have
  \begin{align}
    R \leq \ex{ \max\setof{\max_{j=1}^{m} \poi{\mu_j},\bX} }
    \leq
    \kh{1 + \frac{\eps}{10}} R
  \end{align}
  where
  \begin{align}
    R \ \defeq\  \sum_{x=0}^{u-1} \pr{\bX = x}
    \kh{x + \sum_{y=x+1}^{u} \pr{\max_{j=1}^{m} \poi{\mu_j} \geq y}}
    + \sum_{x=u}^{\infty} \pr{\bX = x} x.
  \end{align}
\end{restatable}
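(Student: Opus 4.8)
The plan is to compare $R$ to the exact expectation $E\defeq\ex{\max\setof{\max_{j=1}^{m}\poi{\mu_j},\bX}}$ via identity~\eqref{eq:maxXY} and control the truncation error. Applying~\eqref{eq:maxXY} with $\bY=\max_{j=1}^{m}\poi{\mu_j}$ gives $E=\sum_{x\ge 0}\pr{\bX=x}\kh{x+\sum_{y>x}\pr{\bY\ge y}}$. Now $R$ is obtained from this expression by deleting only \emph{nonnegative} terms (for $x<u$ the sub-tail $\sum_{y>u}\pr{\bY\ge y}$, and for $x\ge u$ the full tail $\sum_{y>x}\pr{\bY\ge y}$), so $R\le E$, which is the lower bound. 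For the upper bound it remains to show $E-R\le\frac{\eps}{10}R$. Since $\sum_{y>x}\pr{\bY\ge y}\le\sum_{y>u}\pr{\bY\ge y}$ whenever $x\ge u$, the deleted mass satisfies $E-R\le T$ with $T\defeq\sum_{y=u+1}^{\infty}\pr{\max_{j=1}^{m}\poi{\mu_j}\ge y}$, so it suffices to prove $T\le\frac{\eps}{10}R$.

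Second, I would record a crude lower bound on $R$. For $x<u$ we have $x+\sum_{y=x+1}^{u}\pr{\bY\ge y}\ge\sum_{y=1}^{x}1+\sum_{y=x+1}^{u}\pr{\bY\ge y}\ge\sum_{y=1}^{u}\pr{\bY\ge y}$, and for $x\ge u$ we have $x\ge u\ge\sum_{y=1}^{u}\pr{\bY\ge y}$; averaging over $\bX$ gives $R\ge\sum_{y=1}^{u}\pr{\bY\ge y}\ge\pr{\bY\ge 1}=1-\prod_{j=1}^{m}\pr{\poi{\mu_j}=0}=1-e^{-m\mu}$. Thus it is enough to prove $T\le\frac{\eps}{10}\kh{1-e^{-m\mu}}$.

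Third, I would bound $T$. Since $\poi{2\mu}$ stochastically dominates each $\poi{\mu_j}$ (as $\mu_j\le 2\mu$), a union bound gives $T\le m\sum_{y=u+1}^{\infty}\pr{\poi{2\mu}\ge y}$. The choice of $u$ forces $2\mu/u\le 1/500$ (if $\mu\ge 1$ then $u\ge 1000\mu$ since $\log m\ge 1$; if $\mu<1$ then $u\ge 1000$), so by the standard tail-ratio estimate $\pr{\poi{2\mu}\ge y+1}\le\frac{2\mu}{y+1}\pr{\poi{2\mu}\ge y}$ the relevant probabilities decay geometrically with ratio $\le 1/500$; summing the geometric series (both over the tail and inside $\pr{\poi{2\mu}\ge u}$) yields $\sum_{y=u+1}^{\infty}\pr{\poi{2\mu}\ge y}\le 4\,\pr{\poi{2\mu}=u}$, and Stirling ($u!\ge(u/e)^u$, Proposition~\ref{prop:stirling}) gives $\pr{\poi{2\mu}=u}\le(2e\mu/u)^u$. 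Hence $T\le 4m(2e\mu/u)^u$. It then remains to check $4m(2e\mu/u)^u\le\frac{\eps}{10}\kh{1-e^{-m\mu}}$, a parameter-regime computation using $u\ge 1000\eps^{-2}\log m\max\setof{\mu,1}$, $\log m\ge 1$, and $\eps\in(0,1)$. I would split: (i) $\mu\ge 1$, where $1-e^{-m\mu}\ge 1-e^{-2}\ge\tfrac12$ and, using $u\ge 1000\eps^{-2}\mu\log m$, $(2e\mu/u)^u\le\bigl(\tfrac{1}{180\log m}\bigr)^{1000\eps^{-2}\log m}$; (ii) $\mu<1$, $m\mu\ge 1$, where again $1-e^{-m\mu}\ge\tfrac12$ and the same bound holds; (iii) $m\mu<1$, where $\mu<1/m\le\tfrac12$, $1-e^{-m\mu}\ge\tfrac{m\mu}{2}$, and writing $(2e\mu/u)^u=\mu\cdot\mu^{u-1}(2e/u)^u$ with $\mu^{u-1}\le 2^{1-u}$ exposes a matching factor $m\mu$. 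In each case one verifies the inequality by taking logarithms and splitting the exponent $\eps^{-2}\log m$ into a part $\gtrsim\log m$ that absorbs the polynomial-in-$m$ prefactor and a part $\gtrsim\eps^{-2}$ that absorbs $\log(1/\eps)$ (using $\eps^{-2}\ge\log(1/\eps)$).

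I expect this last step to be the main obstacle. Each individual estimate is routine, but the constants in the definition of $u$ and the apportioning of the exponent $\eps^{-2}\log m$ must be chosen carefully so the bound survives the worst joint behaviour of the parameters — in particular the regime $\eps\to 0$, $m\to\infty$, $m\mu\to 0$ simultaneously, where $\pr{\bY\ge 1}=1-e^{-m\mu}$ is itself tiny yet $T$ must still be shown smaller by the factor $\eps/10$.
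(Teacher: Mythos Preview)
Your proposal is correct and shares the paper's high-level strategy: compare $E$ to $R$ via identity~\eqref{eq:maxXY} and bound the truncation tail $T=\sum_{y>u}\pr{\max_j\poi{\mu_j}\ge y}$ by a union bound, the domination $\mu_j\le 2\mu$, and a geometric series. There are two differences worth noting. First, the paper bounds $\pr{\poi{2\mu}\ge u}\le\eps/(1000m)$ in one line via the sub-Gaussian tail of Theorem~\ref{thm:poi}, whereas you work directly from the pmf and Stirling to get $\pr{\poi{2\mu}=u}\le(2e\mu/u)^u$. Second, the paper then simply asserts the conclusion (``by union bound, geometric series, and [the identity] we have the desired claim''), whereas you supply the step this elides: the lower bound $R\ge 1-e^{-m\mu}$ and the case analysis in $m\mu$. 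Your route is more self-contained and handles the regime $m\mu\ll 1$ explicitly---precisely the obstacle you flag, since a uniform bound $T=O(\eps)$ (which is all Theorem~\ref{thm:poi} delivers here, as its exponent $(u-2\mu)^2/(2u)\approx u/2$ does not see the additional $\mu^u$ decay) does not by itself yield $T\le(\eps/10)R$ when $R\approx m\mu$ is tiny.
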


We will use the following theorem from~\cite{Canonne19} to prove it.
\begin{theorem}[\cite{Canonne19}]\label{thm:poi}
  For any $x,\mu > 0$ we have
  $\pr{\poi{\mu} \geq \mu + x} \leq e^{-\frac{x^2}{2(\mu+x)}}$.
\end{theorem}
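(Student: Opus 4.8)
The plan is to prove this as a textbook Chernoff-type upper tail bound for the Poisson distribution, reducing it at the end to a one-variable calculus inequality. The key facts used are the closed form of the Poisson moment generating function and elementary monotonicity arguments; nothing from the earlier sections is actually needed here (this statement is an \emph{input} to the paper rather than a consequence of it).

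First I would recall that for $\bX \sim \poi{\mu}$ and every $\theta > 0$ we have $\ex{e^{\theta \bX}} = e^{\mu(e^{\theta}-1)}$, which follows by summing the Poisson pmf against $e^{\theta k}$. Applying Markov's inequality to $e^{\theta \bX}$ gives, for every $\theta>0$,
\[
\pr{\poi{\mu} \geq \mu + x} \ \leq\ e^{-\theta(\mu+x)} \ex{e^{\theta \bX}} \ =\ \exp\setof{ \mu(e^{\theta}-1) - \theta(\mu+x) }.
\]
The exponent $\phi(\theta) = \mu(e^{\theta}-1) - \theta(\mu+x)$ is strictly convex (since $\phi''(\theta) = \mu e^{\theta} > 0$) and is minimized at $\theta^\ast = \ln\kh{1 + x/\mu}$, which is positive because $x>0$. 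Substituting $\theta = \theta^\ast$ and simplifying yields
\[
\pr{\poi{\mu} \geq \mu + x} \ \leq\ \exp\setof{ x - (\mu+x)\ln\kh{1 + \tfrac{x}{\mu}} }.
\]

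Second, it remains to check that $x - (\mu+x)\ln(1 + x/\mu) \leq -\frac{x^2}{2(\mu+x)}$. Writing $t = x/\mu > 0$ and dividing through by $\mu$, this is exactly the scalar inequality
\[
(1+t)\ln(1+t) - t \ \geq\ \frac{t^2}{2(1+t)} \qquad \text{for all } t > 0.
\]
I would prove this by setting $g(t) = (1+t)\ln(1+t) - t - \frac{t^2}{2(1+t)}$ and differentiating: $g(0)=0$, and using $\frac{t(t+2)}{2(1+t)^2} = \frac{1}{2}\kh{1 - (1+t)^{-2}}$ one gets $g'(t) = \ln(1+t) - \frac{1}{2}\kh{1 - (1+t)^{-2}}$, so $g'(0)=0$, and $g''(t) = \frac{1}{1+t} - \frac{1}{(1+t)^3} = \frac{t(t+2)}{(1+t)^3} > 0$ for $t>0$. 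Hence $g'$ is increasing from $g'(0)=0$, so $g'\geq 0$ on $[0,\infty)$; thus $g$ is increasing from $g(0)=0$, so $g\geq 0$, which is the claim. Equivalently, one can phrase the same argument integrally: $(1+t)\ln(1+t)-t = \int_0^t \ln(1+s)\,\mathrm{d}s$ and $\frac{t^2}{2(1+t)} = \int_0^t \frac{1}{2}\kh{1-(1+s)^{-2}}\,\mathrm{d}s$, so it suffices to verify $\ln(1+s) \geq \frac{1}{2}\kh{1-(1+s)^{-2}}$ pointwise, again by comparing derivatives at $s=0$.

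The computation is entirely routine; the only real bookkeeping obstacle is the second step — getting the derivatives of $g$ right and spotting the algebraic identity $\frac{t(t+2)}{2(1+t)^2} = \frac{1}{2}\kh{1-(1+t)^{-2}}$ that makes $g'$ trivial to differentiate a second time and pins down its sign. Once that is in place, the chain "$g''>0 \Rightarrow g'\geq 0 \Rightarrow g\geq 0$" closes the argument.
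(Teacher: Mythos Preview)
Your proof is correct: the standard Chernoff/Cram\'er argument using the Poisson MGF together with the elementary inequality $(1+t)\ln(1+t)-t \geq \tfrac{t^2}{2(1+t)}$ is exactly how this bound is derived, and your derivative computation for $g$ is clean and right.

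There is nothing to compare against in the paper itself: Theorem~\ref{thm:poi} is quoted from~\cite{Canonne19} as an external black box and is not proved here. The argument in that note is essentially the same Chernoff optimization you wrote down, so your proposal matches the intended source.
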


\begin{proof}[Proof of Lemma~\ref{lem:truncated}]
  By Theorem~\ref{thm:poi} and the definition of $u$ we have
  \begin{align*}
    \pr{\poi{2\mu} \geq u} \leq e^{-1000 \eps^{-2} \log m}
    \leq \frac{\eps}{1000m}.
  \end{align*}
  Then by union bound, geometric series, and that
  \begin{align*}
    \ex{ \max\setof{\max_{j=1}^{m} \poi{\mu_j}, \bX} } =  \sum_{x=0}^{\infty} \pr{\bX = x}
    \kh{x + \sum_{y=x+1}^{u} \pr{\max_{j=1}^{m} \poi{\mu_j} \geq y}}
  \end{align*}
  we have the desired claim.
\end{proof}

\begin{proof}[Proof of Lemma~\ref{lem:DP}]
  In the proof we will use notations defined in Algorithm~\ref{alg:DP} directly.
  By a load profile we mean a vector $\vec{m} = (m_1,m_2,\ldots,m_e)$ where
  $m_k$ denotes the number of machines with load $l_k$.
  By a job profile we mean a vector $\vec{p} = (p_1,p_2,\ldots,p_d)$ where
  $p_k$ denotes the number of jobs with size $\pi_k$.

  We first prove the running time.
  By the performance guarantee of $\RI$ (Lemma~\ref{lem:rounding}),
  we have that $d = O(\frac{1}{\delta} \log\frac{1}{\delta} )$ and $n' = O(m^{(1)}/\delta)$.
  By our way of discretizing the loads, we have $e = O(\frac{1}{\delta})$.
  Then the number of different load profiles
  of the $m^{(1)}$ machines is at most
  $(m^{(1)})^{e} = (m^{(1)})^{O(\frac{1}{\delta})}$.
  The number of different job profiles for jobs $\setof{\lambda_i'}_{i=1}^{n'}$
  is $(n')^{d} = (m^{(1)} / \delta)^{O(\frac{1}{\delta} \log\frac{1}{\delta})}$.
  The running time for dynamic programming (Lines~\ref{line:dpstart}-\ref{line:dpend}
  of Algorithm~\ref{alg:DP}) is therefore
  bounded by $(m^{(1)} / \delta)^{O(\frac{1}{\delta} \log \frac{1}{\delta})}$.
  The time needed to calculate~(\ref{eq:truncated})
  for all load profiles
  is $O(n^{(0)} u^2) + (m^{(1)})^{O(\frac{1}{\delta})} m^{(1)} O(u^2)$.
  By $\mu \leq 6000000 \eps^{-2} \log m^{(1)}$ and the definition of $u$ on
  Line~\ref{line:defu} of Algorithm~\ref{alg:DP} the total running time
  of $\DP$
  is
  $O(n^{(0)} \eps^{-4} \log^2 m^{(1)}) + (m^{(1)} / \delta)^{O(\frac{1}{\delta}\log \frac{1}{\delta})}$.

  We then prove the approximation guarantee of the returned assignment.
  By Lemma~\ref{lem:truncated}, it suffices to show that
  our dynamic programming can always find an assignment
  whose expected maximum load is at most $1 + \frac{4 \eps}{5}$ of
  the optimum.

  First we consider the case when $m^{(1)} \leq \frac{\eps}{1000} 2^{{10^9/\eps^2}}$.
  In this case we have 
  $\delta = \frac{\eps}{1000 m^{(1)}}$.
  Note that our dynamic programming will achieve a load profile $\vec{m}$
  where each machine overflows by at most $1 + 5 \delta$ comparing to the optimum solution by Lemma~\ref{lem:rounding}.
  Since we have
  \begin{align}
    \ex{\MM(m^{(1)},\poi{5 \delta \mu})}
    \leq m^{(1)} \ex{ \poi{5 \delta \mu} } = 5 m^{(1)} \delta \mu = \frac{\eps}{200} \mu,
  \end{align}
  the load profile $\vec{m}$'s expected maximum load is within $1 + \frac{\eps}{200}$ of the
  optimum.

  We then consider the case when $m^{(1)} > \frac{\eps}{1000} 2^{{10^9/\eps^2}}$.
  In this case $\delta = 2^{-10^9 / \eps^2}$.
  By Lemmas~\ref{lem:concentration3},~\ref{lem:concentration4}, and~\ref{lem:concentration5}
  if $\mu \leq \frac{1}{2^{100/\eps + 1}} \log m^{(1)}$ we are guaranteed to find a
  $(1 + 100\delta)$-optimum
  solution since we have concentration and scaling results hold.
  Therefore we only need to consider the case when $\frac{1}{2^{100/\eps + 1}} \log m^{(1)} < \mu
  \leq 6000000\eps^{-2} \log m^{(1)}$.

  We further consider two ranges of $\mu$:
  \begin{enumerate}
    \item When $\frac{1}{2^{100/\eps + 1}} \log m^{(1)} < \mu \leq \frac{\log m^{(1)}}{10000}$ we have
      \begin{align}
        \hspace{-19pt}
        \frac{\log m^{(1)}}{\log \frac{1}{5 \delta \mu} + \log \log m^{(1)}}
        \leq \frac{\log m^{(1)}}{(1000/\eps) (\log \frac{1}{\mu} + \log\log m^{(1)})}
        = \frac{\eps}{1000}\cdot \frac{\log m^{(1)}}{\log \frac{1}{\mu} + \log \log m^{(1)}}
      \end{align}
      where the first inequality follows from
      that $\log \frac{1}{5\delta\mu} + \log\log m^{(1)}
      = \log \frac{\log m^{(1)}}{\mu} + \log\frac{1}{5\delta}$
      and
      $\log \frac{1}{5\delta} \geq 10^8 / \eps^2 \geq (1000/\eps) (100/\eps + 1)
      \geq (1000/\eps)\log \frac{\log m^{(1)}}{\mu}$.
      Then since $m^{(1)} > \frac{\eps}{1000} 2^{10^9/\eps^2} > 2^{10000}$
      we can apply Corollary~\ref{cor:logm} and get
      \begin{align}
        \MM(m^{(1)}, 5 \delta \mu) \leq \frac{\eps}{10}\cdot
        \MM(m^{(1)}, \mu).
      \end{align}
    \item When $\delta \mu < \frac{\log m^{(1)}}{10000} \leq \mu \leq 6000000 \eps^{-2} \log m^{(1)}$,
      let $\mu = \alpha \log m^{(1)}$ where $1/10000 \leq \alpha \leq 6000000 \eps^{-2}$,
      and then we have
      \begin{align}
        \frac{\log m^{(1)}}{\log \frac{1}{5 \delta \mu} + \log \log m^{(1)}}
        = \frac{\log m^{(1)}}{\log \frac{1}{5 \delta \alpha}}
        \leq \frac{\eps}{10^7} \log m^{(1)} \leq
        \frac{\eps}{15} \mu.
      \end{align}
      Once again
      since $m^{(1)} > \frac{\eps}{1000} 2^{10^9/\eps^2} > 2^{10000}$ we can
      apply Corollary~\ref{cor:logm} and get
      \begin{align}
        \MM(m^{(1)}, 5\delta \mu) \leq \frac{2\eps}{3} \mu \leq \frac{2\eps}{3}\cdot
        \MM(m^{(1)}, \mu).
      \end{align}
  \end{enumerate}
  Therefore the load profile where each machine overflows by at most $1 + 5\delta$ comparing
  to the optimum assignment gives a $(1 + \frac{2\eps}{3})$-approximation.
\end{proof}

\subsubsection{Proof of Lemma~\ref{lem:bigm}}

%In this subsection, we prove that if Algorithm~\ref{alg:PTAS} does not call $\DP$ (Line~\ref{line:DP}), then $L‌<‌(1‌+‌\eps)L^*$ where $L^*$ is the expected maximum load of an optimum assignment and $L$ is the expected maximum load of the returned assignment by Algorithm~\ref{alg:PTAS}.
%In our analysis, by large $m$ we mean that the condition %on $m$
%on Line~\ref{line:case1},~\ref{line:case2},~\ref{line:case4}, or~\ref{line:case5} is satisfied.
%By small $m$ we mean that none of those conditions is satisfied,
%i.e. the algorithm calls $\DP$.
%In this subsection, we will also consider the case when the algorithm finishes
%assigning all jobs before Line~\ref{line:rounded}. 

%\lembigm*

\begin{proof}[Proof of Lemma~\ref{lem:bigm}]
  Let $\chi$ be the set of jobs each of which is assigned to a single machine in the first loop of Algorithm~\ref{alg:PTAS}. Let $\Lambda$ be the (multi-)set of the sizes of the jobs in $\chi$. Let $\bchi = \setof{\poi{\lambda} |‌\lambda \in \Lambda}$ and $\bX = \max_{\lambda\in \Lambda} \poi{\lambda}$. Consider $\mu$ and $m^{(1)}$ as same as Algorithm~\ref{alg:PTAS} after Line~\ref{line:rounded}. Suppose the algorithm returns an assignment where machine $j$ has load $\mu_j$.
  Let $\setof{\mu_j^{(1)}}_{j=1}^{m^{(1)}}$ be the machine loads corresponding to the assignment $\phi'$
  obtained {\em before} Line~\ref{line:unrounding}.
  That is, if $\phi'$ is an assignment of the rounded jobs,
  $\mu_j^{(1)}$'s are the corresponding rounded loads.
  Without loss of generality, we assume $\mu_1^{(1)} \geq \mu_2^{(1)}\geq \ldots \geq \mu_{m^{(1)}}^{(1)}$. Let $\mu^{(1)} = \frac{\sum_{j=1}^{m^{(1)}} \mu_j^{(1)}}{m{(1)}}$. Note that if Algorithm~\ref{alg:PTAS} does not return an assignment using rounded jobs (Line~\ref{line:case11}), then $\mu = \mu^{(1)}$.

  There are six possible cases:

	\textbf{Case 0:} The algorithm finishes assigning all jobs before Line~\ref{line:rounded}.
    Then it must be the case that $n\leq m$.
    Therefore, assigning every job solely to a single machine
    is the optimum solution.

    \textbf{Case 1:} $\mu > \frac{6}{\delta^2} \log m^{(1)}$ and the algorithm goes to Lines~\ref{line:case1}-\ref{line:case11}. First we state an observation. Note that this is the same observation as \textbf{Observation 2.1} in~\cite{AlonAWY98} for deterministic case. Both our observation and the one in~\cite{AlonAWY98} follow from the convexity of the objective function with respect to the machine loads.
	\begin{observation} \label{obseq}
		There is an optimum assignment such that each job in $\chi$ is solely assigned to a separate machine.
	\end{observation}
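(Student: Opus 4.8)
The plan is to prove, by induction on $k \coloneqq |\chi|$, a slightly stronger statement: \emph{from any assignment $\rho$ of the $n$ jobs to the $m$ machines there is an assignment $\rho'$ that puts every job of $\chi$ alone on a separate machine and satisfies $\max_{j=1}^{m}\poi{\mu_j(\rho')} \sd \max_{j=1}^{m}\poi{\mu_j(\rho)}$}, where $\mu_j(\cdot)$ is the load of machine $j$. Observation~\ref{obseq} is then immediate: apply this to an optimal $\rho$ and recall that for $\mathbb N$-valued random variables $U \sd V$ implies $\ex U \ge \ex V$, so $\rho'$ is optimal too. I will track stochastic dominance throughout rather than merely expectations, because the inductive step freezes one machine carrying a single large job and glues the inductive guarantee for the remaining machines onto it, and $\ex{\max\setof{C,U}} \le \ex{\max\setof{C,V}}$ does \emph{not} follow from $\ex U \le \ex V$, whereas it does follow from $U \sd V$ whenever $C$ is independent of $U,V$ (a one-line computation from $\pr{\max\setof{C,U}\geq k} = 1 - \pr{C < k}\pr{U < k}$ together with $\pr{U<k}\le\pr{V<k}$).

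The atomic operation is a \emph{balancing exchange}: if two machines have loads $p > q$ and some job of size $\lambda$ on the heavier one satisfies $q + \lambda \le p$, then relocating it to the lighter machine turns the pair $\setof{p,q}$ into $\setof{p-\lambda,\,q+\lambda}$, two numbers lying in $[q,p]$ with the same sum. Proposition~\ref{prop:concanve2} then gives $\max\setof{\poi p,\poi q} \sd \max\setof{\poi{p-\lambda},\poi{q+\lambda}}$, and since the loads of the remaining $m-2$ machines are independent of these two Poissons, taking the overall maximum preserves the dominance; so a balancing exchange never increases $\max_{j}\poi{\mu_j}$ in the $\sd$-order.

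Now the induction. The case $k=0$ is vacuous ($\rho'=\rho$). For $k\ge 1$, let $i^{\ast}$ be an index of a largest job and $\lambda^{\ast}\coloneqq\lambda_{i^{\ast}}$; it lies in $\chi$. The key arithmetic fact is that $\sum_{i=1}^{n}\lambda_i < m\lambda^{\ast}$: indeed $\sum_{i\in\chi}\lambda_i \le k\lambda^{\ast}$, while $\sum_{i\notin\chi}\lambda_i = (m-k)\mu < (m-k)\lambda^{\ast}$ since every $i\in\chi$ has $\lambda_i\le\lambda^{\ast}$, the threshold satisfies $\mu<\lambda^{\ast}$, and $m-k\ge 1$ (peeling off big jobs always leaves at least one machine). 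Let $M_1$ be the machine holding $i^{\ast}$ under $\rho$. As long as $M_1$ carries a second job $a$, we have $\mathrm{load}(M_1)\ge\lambda^{\ast}+\lambda_a$, and some machine $M_2\neq M_1$ must have $\rho$-load at most $\lambda^{\ast}$ (otherwise the total load would exceed $m\lambda^{\ast}$); hence $\mathrm{load}(M_2)\le\lambda^{\ast}\le\mathrm{load}(M_1)-\lambda_a$ and moving $a$ to $M_2$ is a valid balancing exchange. Iterating strictly decreases the number of jobs on $M_1$, so after finitely many exchanges $i^{\ast}$ is alone on $M_1$, with the overall maximum only non-increasing in the $\sd$-order. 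We then freeze $M_1$ and invoke the inductive hypothesis on the residual instance consisting of the other $m-1$ machines and the other $n-1$ jobs; its big set is exactly $\chi\setminus\setof{i^{\ast}}$, with the \emph{same} threshold $\mu$ (the standard peeling fact underlying the update in Algorithm~\ref{alg:PTAS}, cf.~\cite{AlonAWY98}), so it has size $k-1$. This yields an assignment $\tau$ of the residual instance placing each job of $\chi\setminus\setof{i^{\ast}}$ alone with residual maximum non-increasing in $\sd$; gluing $\tau$ with $i^{\ast}$ alone on $M_1$ and using independence of $\poi{\lambda^{\ast}}$ together with the composition property from the first paragraph produces the desired $\rho'$.

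The step I expect to require the most care is exactly this bookkeeping of stochastic dominance across the induction and the glueing — one has to resist arguing with expectations alone — together with verifying that the residual instance's big set is $\chi\setminus\setof{i^{\ast}}$ with the threshold $\mu$ unchanged, which is precisely where the greedy-peeling definition of $\chi$ is used.
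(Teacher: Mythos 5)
Your proof is correct, but it takes a genuinely different route from the paper's. The paper argues by contradiction: it assumes an optimum assignment in which some job of $\chi$ shares a machine, picks the largest such job $\lambda_i$, and shows that optimality (via Proposition~\ref{prop:concanve2}) forces every machine not dedicated to a job $\lambda_k$ with $k>i$ to carry load at least the peeling threshold $\mu'$, which then contradicts the fact that $\mu'$ is exactly the \emph{average} load of those machines. You instead give a constructive induction on $|\chi|$: using the total-load bound $\sum_i\lambda_i<m\lambda^{\ast}$ you isolate the largest job by a sequence of balancing exchanges, each of which is non-increasing in the stochastic-dominance order by Proposition~\ref{prop:concanve2}, then recurse on the residual instance (whose big set is $\chi\setminus\{i^{\ast}\}$ with unchanged threshold) and glue. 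Both arguments rest on the same convexity engine, but yours buys two things the paper's terse version glosses over: (i) it sidesteps the strictness issue in the paper's step ``the expected maximum decreases which contradicts the optimality'' --- Proposition~\ref{prop:concanve2} only gives a non-strict improvement, so a clean contradiction requires an extra tie-breaking device, whereas your construction simply produces \emph{another} optimum with the desired structure; and (ii) your insistence on tracking stochastic dominance rather than expectations is exactly what makes the freeze-and-glue step legitimate, since $\ex{U}\le\ex{V}$ alone would not let you conclude $\ex{\max\{C,U\}}\le\ex{\max\{C,V\}}$. The cost is length; the paper's averaging argument at the threshold $\mu'$ is shorter and avoids any induction.
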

	\begin{proof}
		Suppose in an optimum assignment there is a job in  $\chi$ with size $\lambda_i$ which is assigned to a machine with another job of size $\lambda_j$ and
       suppose $i$ is the biggest index such that a job with size $\lambda_i$ shares its machine with another job. Let $\mu'$ be the value of $\mu$ in Algorithm~\ref{alg:PTAS} just before assigning $\lambda_i$ to a machine, so we know $\lambda_i‌>‌\mu'$. By removing $\lambda_j$ from its machine, all the other machines should have load at least $\mu'$, otherwise we can assign $\lambda_j$ to a machine with load less than $\mu'$ and by Proposition~\ref{prop:concanve2}, the expected maximum decreases which contradicts the optimality. But $\mu'$ is the average load of remaining machines (machines that do not contain any jobs from $\setof{\lambda_k| k>i}$) and now each one has load more than $\mu'$, a contradiction.  
     \end{proof}
    Let $L^*_{DET}$ be the optimum answer of deterministic case for job sizes of $\setof{\lambda_i}_{i=1}^{n^{(1)}}$ and $m^{(1)}$ machines, where $n^{(1)} = n - |\chi|$. Note that $L^* \geq L^*_{DET}$ and $L^* \geq \max_{\lambda \in \Lambda} \lambda$, by Observation~\ref{obseq}. Since $\delta = \frac{\eps}{1000}$,  by Lemma~\ref{lem:concentration1} we get:
	\begin{align}
      L = & \ex{\max\setof{ \max_{j=1}^{m^{(1)}} \poi{\mu_j}, \bX }} \nonumber\\
	\leq &
    \ex{\max\setof{ \poi{(1+\eps/5) L^*_{DET}}, \max_{j=2}^{m^{(1)}} \poi{\mu_j}, \bX}} \nonumber \quad \text{(by Theorem~\ref{prop:detalg})}\\
	\leq &
    (1 + 5\delta) \max\setof{(1 +‌\eps/5)L^*_{DET}, \max_{\lambda\in\Lambda} \lambda} \nonumber \quad   \text{(by~(\ref{eq:c1}) in Lemma~\ref{lem:concentration1})}\\
    \leq & (1+\eps/200)(1+\eps/5)L^*  \notag \\
    < &(1+\eps)L^*.
	\end{align}
	
    \textbf{Case 2:} The algorithm goes to Lines~\ref{line:case2}-\ref{line:case21}. 
    By calling $\ILP$ multiple times, Algorithm~\ref{alg:PTAS} finds the smallest transition point $t_2$ for rounded jobs on $m^{(1)}$ machines as defined in Lemma~\ref{lem:scaling2}. Note that if $\hat{L}$ is the optimum expected maximum for rounded version, then $\hat{L} \leq (1+80\delta)L^*$ (by~(\ref{eq:scaling2}) in Lemma~\ref{lem:scaling2} and Lemma~\ref{lem:rounding}).
    Since $\delta = \frac{\eps}{1000}$, $m^{(1)}‌>‌2^{2^{100/\eps}}$ and $\frac{1}{2^{1/\delta+1}} \log m^{(1)} < \mu^{(1)} \log m^{(1)} \leq (1 +‌5\delta) \frac{6}{\delta^2}\log m^{(1)} < \frac{12}{\delta^2} \log m^{(1)}$ (Lemma~\ref{lem:rounding}), then by Lemma~\ref{lem:scaling2} we get: 
	\begin{align}
      L & =  \ex{\max\setof{ \max_{j=1}^{m^{(1)}} \poi{\mu_j}, \bX }} \nonumber\\
	&\leq \ex{\max\setof{\max_{j=1}^{m^{(1)}} \poi{(1 +5 \delta)\mu_j^{(1)}},\bX}} \nonumber \text{(Lemma~\ref{lem:rounding})}\\
	&\leq 
	(1 + 80\delta)
    \ex{\max\setof{\max_{j=1}^{m^{(1)}} \poi{\mu_j^{(1)}},\bX}} \nonumber \text{(by~(\ref{eq:scaling2}) in Lemma~\ref{lem:scaling2})}\\
    &\leq (1 + 10\delta) (1 +80\delta) \ex{ \max\setof{t_2,\bX} } \nonumber \text{(by~(\ref{eq:concentration2}) in Lemma~\ref{lem:scaling2})}\\
    &\leq\frac{(1 + 10\delta) (1 +80\delta)}{1 - 6\delta}\hat{L}‌ \quad \quad \text{(by~(\ref{eq:concentration2}) in Lemma~\ref{lem:scaling2})}  \nonumber\\
	&\leq\frac{(1 + 10\delta) (1 +80\delta)^2}{1 - 6\delta}L^*‌ \quad \quad  \nonumber\\
	&<‌(1 +‌\eps)L^* \nonumber
	\end{align}

	\textbf{Case 3:}
    $\frac{1}{(m^{(1)})^{\delta}} < \mu \leq \frac{1}{2^{1/\delta+1}} \log m^{(1)}\ \mathrm{and}\ 
    m^{(1)} \geq 2^{\frac{2}{\delta}\log \frac{2}{\delta}}$,
    and the algorithm goes to Lines~\ref{line:case1}-\ref{line:case11}.
	First we state an observation which is the same as \textbf{Observation 2.2} in~\cite{AlonAWY98} for deterministic case
    and again follows from the convexity of the objective function with respect to the machine loads
	\begin{observation}\label{obv3}
		If for each $i$, $\lambda_i \leq \mu$, then for each load $\mu_j$ in an optimum assignment we have $\mu/2 \leq \mu_j \leq 2\mu$.
	\end{observation}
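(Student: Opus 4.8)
The plan is a textbook exchange (smoothing) argument, exactly paralleling the proof of Observation~2.2 in~\cite{AlonAWY98}, with Proposition~\ref{prop:concanve2} taking over the role played there by the elementary makespan comparison. Among the (finitely many) optimum assignments I would fix one, $\phi$, that additionally minimizes $\sum_{j=1}^{m}\mu_j^2$, and then argue that this $\phi$ must have every machine load in $[\mu/2,2\mu]$; since $\phi$ is an optimum assignment, this establishes the observation (what one really gets, as in~\cite{AlonAWY98}, is the \emph{existence} of an optimum assignment with loads in $[\mu/2,2\mu]$, which is all that is used afterwards). Zero–size jobs contribute nothing and may be assumed away.

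The two bounds are symmetric. For the upper bound, suppose $\mu_j>2\mu$ for some machine $j$; since the loads average to $\mu$, some machine $k$ has $\mu_k<\mu$. I would move one positive-size job from $j$ to $k$, of size $s\le\mu$. The only loads that change become $\mu_j-s$ and $\mu_k+s$, their sum is unchanged, and a short computation (using $s\le\mu$ and $\mu_j-\mu_k>\mu$) shows that both new values lie in the interval $[\mu_k,\mu_j]$. Hence Proposition~\ref{prop:concanve2} gives $\max\{\poi{\mu_k},\poi{\mu_j}\}\sd\max\{\poi{\mu_j-s},\poi{\mu_k+s}\}$; combining the two affected machines with the (independent) unchanged ones and using that $\sd$ is preserved under taking the maximum with a common independent random variable, the new assignment has expected maximum load no larger than $\phi$, so it is again optimum. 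But the exchange changes $\sum_j\mu_j^2$ by $2s\,\bigl(s-(\mu_j-\mu_k)\bigr)<0$, contradicting the choice of $\phi$. For the lower bound, suppose $\mu_k<\mu/2$; then some machine $j$ has $\mu_j>\mu$, and since every job has size $\le\mu<\mu_j$, machine $j$ holds at least two jobs, so its smallest job has size $s\le\mu_j/2<\mu_j-\mu_k$. Moving that job from $j$ to $k$ again keeps both perturbed loads inside $[\mu_k,\mu_j]$ with their sum fixed, so by Proposition~\ref{prop:concanve2} the resulting assignment is still optimum, while $\sum_j\mu_j^2$ again strictly decreases — a contradiction. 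Therefore all loads of $\phi$ lie in $[\mu/2,2\mu]$.

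The only step worth spelling out carefully — and the one I would flag as the (mild) main point — is that stochastic dominance is preserved when combining with the unchanged machines: for independent $\bX,\mathbf{Z}$ and independent $\bY,\mathbf{Z}$ with $\bX\sd\bY$, one has $\pr{\max\{\bX,\mathbf{Z}\}\ge t}=1-\pr{\bX<t}\,\pr{\mathbf{Z}<t}\ge 1-\pr{\bY<t}\,\pr{\mathbf{Z}<t}=\pr{\max\{\bY,\mathbf{Z}\}\ge t}$, so $\max\{\bX,\mathbf{Z}\}\sd\max\{\bY,\mathbf{Z}\}$ and in particular the expectations compare the right way; applying this to $\bX=\max\{\poi{\mu_k},\poi{\mu_j}\}$, $\bY=\max\{\poi{\mu_j-s},\poi{\mu_k+s}\}$, and $\mathbf{Z}=\max_{i\neq j,k}\poi{\mu_i}$ yields what we need. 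Everything else is routine: the arithmetic that the two perturbed loads remain in $[\mu_k,\mu_j]$ (this is precisely what is needed to match the hypotheses $0\le\mu_1\le\mu_1'\le\mu_2'\le\mu_2$ of Proposition~\ref{prop:concanve2}), and the sign of the change in $\sum_j\mu_j^2$. I do not expect a genuine obstacle.
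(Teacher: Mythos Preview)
Your proposal is correct and follows essentially the same exchange argument as the paper, invoking Proposition~\ref{prop:concanve2} on the two machines whose loads change. The one difference is that the paper simply asserts that the exchange makes the expected maximum \emph{decrease} (hence contradicting optimality of \emph{every} optimum assignment), whereas you only extract weak dominance from Proposition~\ref{prop:concanve2} and therefore introduce the secondary objective $\sum_j \mu_j^2$ as a tie-breaker, obtaining the \emph{existence} of an optimum assignment with all loads in $[\mu/2,2\mu]$; this is a cleaner way to close the argument and, as you note, existence is all that is used downstream.
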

	\begin{proof}
		Suppose a machine has load $\mu_j‌>‌2\mu$ and a job assigned to the machine is $\lambda_i$. $\lambda_i \leq \mu$, so $\mu_j - \lambda_i >‌\mu$, so any other machines has load at least $\mu$, a contradiction. So $\mu_j \leq 2\mu$.
		
		Suppose a machine has load $\mu_j < ‌\mu/2$. So there is a machine with load $\mu_k‌>‌\mu$, so it has at least two jobs assigned. So there is a job $\lambda_i$ in $\mu_k$ such that $\lambda_i \leq \mu_k/2$, by taking it from $\mu_k$ its load would be $\mu_k - \lambda_i \geq‌\mu_k/2 >‌\mu/2 >‌\mu_j$, a contradiction as by reassigning $\lambda_i$ to $\mu_j$ by Proposition~\ref{prop:concanve2} the expected maximum decreases. So $\mu /2 \leq\mu_j$. 
	\end{proof}

	Since $\delta = \frac{\eps}{1000}$ and $m^{(1)} > 2^{\frac{2}{\delta} \log \frac{2}{\delta}}$, the conditions of Lemma~\ref{lem:concentration3} holds and we get:
	\begin{align}
	L &= \ex{\max\setof{\max_{j=1}^{m^{(1)}} \poi{\mu_j},\textbf{X}}} \nonumber \\
	&\leq ‌
	\ex{ \max\setof{\MM(m^{(1)}, (1 +‌\eps/5)2\mu),\textbf{X}} } \quad \text{(Observation~\ref{obv3} and Theorem~\ref{prop:detalg})}\nonumber\\
	&\leq ‌
	\ex{ \max\setof{\MM(m^{(1)}, 4\mu),\textbf{X}} } \quad \text{(Proposition~\ref{prop:concave})}\nonumber\\
	&\leq 
    \kh{1 + 20\delta} \ex{ \max\setof{\MM(m^{(1)}, \mu),\textbf{X}} } \nonumber \text{(by~(\ref{eq:c32}) in Lemma~\ref{lem:concentration3})}\\
    &<‌(1+\eps)L^*.\nonumber
	\end{align}
	
    \textbf{Case 4:} The algorithm goes to Lines~\ref{line:case4}-\ref{line:case41}. 	
    By putting $\delta = \frac{\eps}{1000}$ and Observation~\ref{obv3}, an optimum assignment of rounded jobs satisfies the condition of Lemma~\ref{lem:concentration4}. Let $\bW$ be the Bernoulli random variable as described in Lemma~\ref{lem:concentration4} but with respect to the machine loads of an optimum assignment of the rounded jobs, then $(1 - 5\delta) \ex{\max\setof{\bW,\bm{X}}} \leq \hat{L}$ where $\hat{L}$ is the optimum expected maximum of rounded version. By finding an assignment maximimizing $\prod_{j=1}^{m^{(1)}} \pr{\poi{\mu_j} \leq t_4 - 1}$, the value of $\ex{\max\setof{\bW,\bm{X}}}$  would be minimized. Let the corresponding optimum $\bW$ be $\textbf{W}^*$. So we have
    $(1 - 5\delta) \ex{\max\setof{\textbf{W}^*,\textbf{X}}} \leq (1 - 5\delta) \ex{\max\setof{\textbf{W},\textbf{X}}} \leq \hat{L}$. Note that $\hat{L} \leq (1+80\delta) L^*$ by~(\ref{eq:scaling4}) in Lemma~\ref{lem:concentration4} and Lemma~\ref{lem:rounding}. As $\frac{4\log m}{m} < \mu^{(1)} \leq (1+5\delta) \frac{1}{m^{\delta}} \leq \frac{2}{m^{\delta}}$, by Lemma~\ref{lem:concentration4}:
	\begin{align}
	L &= \ex{\max\setof{\max_{j=1}^{m^{(1)}} \poi{\mu_j},\textbf{X}}} \nonumber \\
	 &\leq‌ \ex{\max\setof{\max_{j=1}^{m^{(1)}}  \poi{(1 + 5\delta)\mu_j^{(1)}}, \textbf{X}}} \nonumber \text{(Lemma~\ref{lem:rounding})}\\
	&\leq
	(1 + 80\delta)
    \ex{\max\setof{\max_{j=1}^{m^{(1)}} \poi{\mu_j^{(1)}}, \textbf{X}}} \nonumber \text{(by~(\ref{eq:scaling4}) in Lemma~\ref{lem:concentration4})} \\
    &\leq (1 +‌16\delta)(1+80\delta) \ex{\max\setof{\textbf{W}^*, \textbf{X}}} \nonumber \text{(by~(\ref{eq:concentration4}) in Lemma~\ref{lem:concentration4})}\\ 
	&\leq \frac{(1 +‌16\delta)(1+80\delta)}{1-5\delta} \hat{L}  \nonumber \\
	&\leq \frac{(1 +‌16\delta)(1+80\delta)^2}{1-20\delta} L^*  \nonumber \\
	&<
    ‌(1 +‌\epsilon) L^*.
	\end{align}
	
	\textbf{Case 5:} The algorithm returns an assignment on Line~\ref{line:case51}.
    With the same argument as the previous case, by maximizing $\prod_{j=1}^{m^{(1)}} \pr{\poi{\mu_j} = 0}$, the value $\ex{\max\setof{\bW,\bm{X}}}$ as defined in Lemma~\ref{lem:concentration5} would be minimized.
Let the corresponding optimum $\bW$ be $\textbf{W}^*$.
So we have $\ex{\max\setof{\textbf{W}^*,\textbf{X}}} \leq \hat{L}$ where $\hat{L}$ is the optimum expected maximum of rounded jobs with $m^{(1)}$ machines and $\hat{L} \leq (1+50\delta)L^*$ by~(\ref{eq:scaling5}) in Lemma~\ref{lem:concentration5} and Lemma~\ref{lem:rounding}. So we get:
	\begin{align}
	L &\leq  \ex{\max\setof{\max_{j=1}^{m^{(1)}} \poi{(1 + 5\delta)\mu_j^{(1)}}, \textbf{X}}}\nonumber \text{(Lemma~\ref{lem:rounding})}\\ 
	&\leq
	(1 + 50\delta)
    \ex{\max\setof{\max_{j=1}^{m^{(1)}} \poi{\mu_j^{(1)}}, \textbf{X}}} \nonumber \quad \text{(by~(\ref{eq:scaling5}) in Lemma~\ref{lem:concentration5})}\\
    &\leq (1 + 10\delta)(1 +‌50\delta) \ex{\max\setof{\textbf{W}^*, \textbf{X}}} \nonumber \quad \text{(by~(\ref{eq:concentration5})) in Lemma~\ref{lem:concentration5}})\\
	&\leq (1 + 10\delta)(1 +‌50\delta) \hat{L}\nonumber\\
	&\leq (1+10\delta)(1‌+‌50\delta)^2 L^* \nonumber\\
    &< ‌(1 +‌\epsilon) L^* \nonumber.
	\end{align}
	So for all cases we have $L‌<‌(1+\eps)L^*$ and we are done.
\end{proof}

%\section{Proofs of Lemmas~\ref{lem:DP} and~\ref{lem:bigm}}\label{sec:proof}

%\subsection{Proof of Lemma~\ref{lem:bigm}}

%\input{bigm}

%\subsection{Proof of Lemma~\ref{lem:DP}}

%\input{smallm}

\section*{Acknowledgements}

We thank the anonymous reviewers for their valuable feedback. This work
was supported in part by NSF awards CCF-1617851, CCF-1763514, CCF-1910534, CCF-1926872, and CCF-1934876.

%\bibliographystyle{alpha}
%\bibliography{ref}

\appendix

\section{A Counterexample for Using Means as Effective Job Sizes}\label{app:counterexample}

For any odd number $m$,
consider an instance of the problem with $m+1$ jobs and $m$ machines,
where each job has size $\poi{\lambda}$ for
$\lambda = \beta \log (m - 1)$ and $\beta =\frac{1}{1024}$. An assignment where $\frac{m+1}{2}$ machines have 2 jobs is an optimum assignment for the deterministic case. We will show that the expected maximum of this assignment is asymptotically larger than an assignment where one machines has 2 jobs and the others have one job each. For estimating the expected maximum, first we state a proposition which we prove later. 

\begin{proposition}\label{prop:chart}
  Let $\bX_1,\bX_2,\ldots,\bX_m \sim \poi{\lambda}$ be i.i.d. Poissons where
  \begin{align}
    \label{eq:lambda_range_2}
    \frac{1}{m} \leq \lambda \leq \frac{\log m}{16}.% \frac{\log m}{2}.
  \end{align}
  Then, for any constant $\delta > 0$
  there exists a constant $m_0 = m_0(\delta)$ such that for any $m > m_0$
  \begin{align}
    (1 - \delta) \cdot \floor{\frac{\log m}{\log \frac{1}{\lambda} + \log \log m}}
    \leq
    \expec{}{\max_{i=1}^m \bX_i} \leq
    \ceil{(1 + \delta) \cdot \frac{\log m}{\log \frac{1}{\lambda} + \log \log m} }.
  \end{align}
\end{proposition}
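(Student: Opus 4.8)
Abbreviate $t := \frac{\log m}{\log(1/\lambda) + \log\log m}$. The plan is to show that $\MM(m,\lambda) = \max_{i=1}^{m}\bX_i$ is sharply concentrated near $t$ by the exponential–tail method of Section~\ref{sec:overview}, and then to read off the bounds on its expectation from $\expec{}{\MM(m,\lambda)} = \sum_{k\ge 1}\pr{\MM(m,\lambda)\ge k}$. The computation that drives everything is the Stirling estimate
\begin{align}
  \kh{\tfrac{\lambda}{t}}^{t} \;=\; m^{-1 + \frac{\log\log\log M}{\log\log M}}, \qquad M := m^{1/\lambda},
\end{align}
which is exactly the identity behind~(\ref{eq:logoverloglog})–(\ref{eq:prt-1}); it says that $t$ is the value of $k$ at which $\pr{\poi\lambda = k}$ crosses $1/m$ (up to sub‑polynomial factors), hence the location where $\pr{\MM(m,\lambda)\ge k}$ drops from $1-o(1)$ to $o(1)$. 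For the bulk of the range, namely $\lambda \le \frac{\log m}{2^{\Theta(1/\delta)}}$, one need not redo this: it suffices to invoke Lemmas~\ref{lem:concentration3},~\ref{lem:concentration4}, and~\ref{lem:concentration5} with parameter $\delta' = \Theta(\delta)$ and $\bX = 0$, whose transition points equal $t$ up to the harmless factor $4$ inside the logarithm of $t_4$ and the rounding in $t_4$, and then convert the $\delta'$‑slack (and the rounding to $\floor{t}$, $\ceil{(1+\delta)t}$) into $\delta$‑slack, which fixes $m_0(\delta)$.

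\textbf{Lower bound.} I would put $\ell := \floor{t}$, first disposing of $\lambda \le \frac{\log m}{m}$ (where $\ell = 0$, so the bound is vacuous, while a union bound gives $\pr{\MM(m,\lambda)\ge 2} = O(m\lambda^2) = o(1)$ and hence $\expec{}{\MM(m,\lambda)} \le 1 + o(1) \le \ceil{(1+\delta)t}$). For $\lambda > \frac{\log m}{m}$ we have $t \ge 1$; since $x\mapsto(\lambda/x)^x$ is decreasing for $x > \lambda/e$ and $\ell \ge t-1$, Stirling gives $\pr{\poi\lambda = \ell} \ge \frac{e^{\,t-\lambda-2}}{t}\kh{\tfrac{\lambda}{t}}^{t} = \frac{e^{\,t-\lambda-2}}{t}\,m^{-1+o(1)}$, and one checks from $\frac{\log m}{m}\le\lambda\le\frac{\log m}{16}$ that the prefactor does not absorb the $m^{o(1)}$, so that $m\cdot\pr{\poi\lambda=\ell}\to\infty$. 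Then $\pr{\MM(m,\lambda)\ge\ell} = 1-\kh{1-\pr{\poi\lambda=\ell}}^m \ge 1-e^{-m\pr{\poi\lambda=\ell}} \ge 1-\delta$ for $m > m_0(\delta)$, whence $\expec{}{\MM(m,\lambda)} \ge \ell\cdot\pr{\MM(m,\lambda)\ge\ell} \ge (1-\delta)\floor{t}$.

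\textbf{Upper bound.} Set $r := \floor{(1+\tfrac{\delta}{2})t}$, so $\expec{}{\MM(m,\lambda)} \le r + \sum_{k>r}\pr{\MM(m,\lambda)\ge k} \le r + \sum_{k>r} m\pr{\poi\lambda\ge k}$. Since $\lambda \ll t \le r$, the quantities $m\pr{\poi\lambda\ge k}$ decay geometrically in $k$, so it is enough to bound $m\pr{\poi\lambda\ge r} = O\kh{m\pr{\poi\lambda=r}}$ and estimate $\pr{\poi\lambda = r}$ by Stirling with exponent $(1+\tfrac{\delta}{2})t$ in place of $t$: replacing $t$ by $(1+\tfrac{\delta}{2})t$ costs a factor $(1+\tfrac{\delta}{2})^{-(1+\delta/2)t} \le e^{-\Omega(\delta^{2} t)}$ by Bernoulli's inequality (as in~(\ref{eq:8})), an $m^{-\Omega(\delta)}$ saving over $m^{-1+o(1)}$. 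Thus $\sum_{k>r} m\pr{\poi\lambda\ge k} \le \tfrac{\delta}{2}t$ once $m > m_0(\delta)$, giving $\expec{}{\MM(m,\lambda)} \le (1+\tfrac{\delta}{2})t + \tfrac{\delta}{2}t = (1+\delta)t \le \ceil{(1+\delta)t}$.

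\textbf{Main obstacle.} The work is in the uniformity of these estimates across $\frac1m\le\lambda\le\frac{\log m}{16}$, and the genuinely delicate regime is where $\lambda$ is a constant fraction of $\log m$, so that $\log(1/\lambda)+\log\log m$ is bounded rather than tending to infinity: there the error exponent $\frac{\log\log\log M}{\log\log M}$ in the Stirling estimate is not negligibly close to $0$, and one must track it — together with the $e^{-\lambda}$ and $e^{t}$ prefactors — on the correct side of every inequality, which also forces $m_0(\delta)$ to be large. This is precisely the range not covered by Lemmas~\ref{lem:concentration3}–\ref{lem:concentration5} (it falls under the ``intermediate'' Case~\ref{case:2}), so here the direct Stirling argument above is the only available route, and getting the constants to line up for every constant $\delta$ is the crux.
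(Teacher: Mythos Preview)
Your approach is essentially the paper's: both arguments set $\ell=\floor{t}$ and $r=\ceil{(1+\delta)t}$, use Stirling to show $\pr{\poi{\lambda}=\ell}\ge m^{-1+\Theta(x)}$ and $m\,\pr{\poi{\lambda}=r}\le m^{-\Theta(x)}$ with $x=\tfrac{\log\log\log M}{\log\log M}$, and then convert these via $1-(1-p)^m\ge 1-e^{-mp}$ (lower bound) and union bound plus geometric tails (upper bound) into $(1-\delta)\ell\le\expec{}{\max}\le r$. Two small remarks. First, your detour through Lemmas~\ref{lem:concentration3}--\ref{lem:concentration5} for the ``easy'' sub-range is unnecessary: the paper simply runs the direct Stirling computation uniformly across $\tfrac{1}{m}\le\lambda\le\tfrac{\log m}{16}$ without ever invoking those lemmas, and this is both shorter and self-contained. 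Second, in your upper bound you attribute the saving to the base-change factor $(1+\tfrac{\delta}{2})^{-(1+\delta/2)t}$; the paper instead extracts it from the extra $\delta$ in the \emph{exponent}, bounding $(\lambda/r)^r\le(\lambda/t)^{(1+\delta)t}=m^{-(1+\delta)(1-x)}$ directly, which after multiplying by $m$ gives $m^{(1+\delta)x-\delta}$. Your ``Main obstacle'' paragraph is apt and in fact sharper than the paper's treatment: the paper's last displayed inequality in the upper-bound proof silently absorbs both $e^{r-\lambda}$ and the $m^{(1+\delta)x}$ term into the claimed $m^{-x/2}$, which is exactly the bookkeeping you flag as needing care when $\lambda$ sits near the top of the range and $x$ is bounded away from~$0$.
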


Consider an arbitrary constant $\delta \in (0,1/100)$ and suppose $m >‌2 m_0(\delta)$. By the above proposition we have
\begin{align}
	\ex{\max_{i=1}^{(m+1)/{2}} \poi{2\lambda}} \geq& (1 - \delta) \cdot \floor{\frac{\log (m+1) - 1}{\log \frac{1}{2\lambda} + \log \log (\frac{m+1}{2})}} \nonumber\\
	>& (1 - \delta) \cdot \floor{\frac{\log (m+1) - 1}{\log \frac{1}{\beta} + \log \frac{1}{\log (m-1)} + \log \log (m-1) - 1}} \nonumber \\
    =& (1 - \delta) \cdot \floor{\frac{\log (m+1) - 1}{9}} \label{eq:max1}
\end{align}

On the other hand we have
\begin{align}
	\ex{\max \{\poi{2\lambda}, \max_{i=2}^{m} \poi{\lambda} \}} \leq& \ex{\left(\max_{i=2}^{m} \poi{\lambda} \right) + \poi{2\lambda}} \nonumber\\
	\leq& \ceil{(1 + \delta) \cdot \frac{\log (m - 1)}{\log \frac{1}{\lambda} + \log \log (m - 1)} } + 2\lambda \nonumber \\
    =& \ceil{(1 + \delta) \cdot \frac{\log (m - 1)}{10}} + \frac{1}{512} \log (m-1) \label{eq:max2}
\end{align}

Now for large enough $m$, $(1 - \delta) \cdot \floor{\frac{\log (m+1) - 1}{9}} >‌(1-\delta) (\frac{\log m}{9.1} - 1)$ and $\ceil{(1 + \delta) \cdot \frac{\log (m - 1)}{10} } + \frac{1}{512} \log (m-1) <‌(1+ \delta) \frac{\log m}{9.5}$. So there is a constant gap between~(\ref{eq:max1}) and~(\ref{eq:max2}).

\begin{proof}[Proof of Proposition~\ref{prop:chart}]
  Note that
  \begin{align}
    \label{eq:logoverloglog2}
    \frac{ \log m }{ \log \frac{1}{\lambda} + \log\log m} =
    \lambda \cdot \frac{\log m^{1/\lambda}}{\log\log m^{1/\lambda}}.
  \end{align}
  We first prove the lower bound. Let $l \defeq \floor{ \frac{\log m}{\log \frac{1}{\lambda} + \log \log m} }$.
  If $l = 0$, the lower bound trivially holds;
  otherwise
  by~(\ref{eq:lambda_range_2}) we have
  $l \geq 1$ and $l\geq \lambda$.
  We lower bound the expected maximum by
  \begin{align}
    \label{eq:expeclb2}
    \expec{}{\max_{i=1}^m \bX_i} \geq
    & \prob{}{\max_{i=1}^m \bX_i \geq l} \cdot l \notag \\
    \geq &
    \kh{ 1 - \kh{1 - \prob{}{\bX_1 = l}}^m } \cdot l.
  \end{align}
  We then lower bound the probability that
  $\bX_1$ equals $l$:
  \begin{align}
    \prob{}{\bX_1 = l} = &
    e^{-\lambda}\cdot \frac{\lambda^l}{l!} \notag \\
    \geq &
    \frac{ e^{l - \lambda - 1} }{l} \kh{ \frac{\lambda}{l} }^l \notag \\
    \geq &
    \frac{ e^{l - \lambda - 1} }{l}
    \kh{ \frac{\log \frac{1}{\lambda} + \log \log m}{\frac{1}{\lambda} \log m} }
  ^{ \frac{\log m}{\log \frac{1}{\lambda} + \log \log m} } \notag \\
  \geq & \frac{1}{ m^{1 - \frac{\log \log \log m^{1/\lambda}}{2 \log\log m^{1/\lambda}} } }% \frac{1}{m^{1 - o(1)}},
  \end{align}
  where the second line follows from Stirling (Proposition~\ref{prop:stirling})
  and the last line follows from~(\ref{eq:logoverloglog2}).
  Inserting this into~(\ref{eq:expeclb2}) gives
  \begin{align}
    \expec{}{\max_{i=1}^m \bX_i} \geq
    & \kh{ 1 - \kh{1 - \frac{1}{ m^{1 - \frac{\log \log \log m^{1/\lambda}}{2 \log\log m^{1/\lambda}} } }}^m } \cdot l \notag \\
    \geq
    & \kh{ 1 - e^{-m^{\frac{\log \log \log m^{1/\lambda}}{2 \log\log m^{1/\lambda}}}} } \cdot l \notag \\
    \geq
    & (1 - \delta) l
  \end{align}
  where the last line holds for $m$ large enough.

  We then prove the upper bound.
  Let $r \defeq \ceil{ (1 + \delta)\cdot \frac{\log m}{\log \frac{1}{\lambda} + \log \log m} }$.
  By~(\ref{eq:lambda_range_2}),
  $r\geq 2\lambda$.
  \begin{align}
    \label{eq:upexpec}
    \expec{}{\max_{i=1}^m \bX_i} \leq
    & r - 1 + \sum_{k=r}^{\infty} \prob{}{\max_{i=1}^m \bX_i = k} \cdot k \notag \\
    \leq &
    r - 1 + O(1) \cdot \prob{}{\max_{i=1}^m \bX_i = r} \cdot r
  \end{align}
  where the second inequality follows from that the sum
  is bounded by a geometric series with constant ratio.
  We then upper bound the probability that the maximum is $r$
  by a union bound:
  \begin{align}
    \prob{}{\max_{i=1}^m \bX_i = r} \leq
    & m \cdot \prob{}{\bX_1 = r} \notag \\
    = &
    m\cdot e^{-\lambda} \cdot \frac{\lambda^r}{r!} \notag \\
    \leq &
    m\cdot e^{r - \lambda}\cdot \kh{\frac{\lambda}{r}}^{r} \notag \\
    \leq &
    m\cdot e^{r - \lambda}\cdot
    \kh{ \frac{\log \frac{1}{\lambda} + \log \log m}{\frac{1}{\lambda} \log m} }
    ^{ (1 + \delta) \cdot \frac{\log m}{\log \frac{1}{\lambda} + \log \log m} } \notag \\
    \leq &
    \frac{1}{m^{\frac{\log \log \log m^{1/\lambda}}{2 \log\log m^{1/\lambda}}}}
  \end{align}
  where the third line follows from Stirling (Proposition~\ref{prop:stirling})
  and the last line follows from~(\ref{eq:logoverloglog2}).
  Inserting this into~(\ref{eq:upexpec}) gives
  \begin{align}
    \expec{}{\max_{i=1}^m \bX_i} \leq & r -1 +
    O(1)\cdot \frac{1}{m^{\frac{\log \log \log m^{1/\lambda}}{2 \log\log m^{1/\lambda}}}}\cdot r
    \leq
    r
  \end{align}
  where the last ineqaulity holds for $m$ large enough.
\end{proof}

\end{document}